\newcommand{\EF}[1]{\ifstrempty{#1}{\textrm{\textup{EF}}}{\textrm{\textup{EF{$#1$}}}}}
\newcommand{\efx}{\textrm{EFX}}
\renewcommand{\algorithmiccomment}[1]{\bgroup\hfill\small{$\blacktriangleright$ #1}\egroup}
\newcommand{\unalloc}{\textsf{U}}
\newcommand{\fav}{\textsf{Fav}}
\newcommand{\flex}{\textsf{Flex}}
\newcommand{\res}{\textsf{Res}}
\newcommand{\zero}{\textsf{Zero}}
\newcommand{\bad}{\textsf{Bad}}
\newtheorem{clm}[theorem]{Claim}
\title{EF1 Allocations for Identical Trilean and Separable Single-Peaked Valuations}
\titlerunning{EF1 Allocations for Identical Trilean and Separable Single-Peaked Valuations}
\author{Umang Bhaskar}{Tata Institute of Fundamental Research, India}{umang@tifr.res.in}{}{}
\author{Gunjan Kumar}{Indian Institute of Technology Kanpur, India}{gunjan@cse.iitk.ac.in}{}{}
\author{Yeshwant Pandit}{Tata Institute of Fundamental Research, India}{yeshwant.pandit@tifr.res.in}{}{}
\author{Rakshitha}{Indian Institute of Technology Delhi, India}{rakshitha.mt121@maths.iitd.ac.in}{}{}
\authorrunning{Bhaskar, Kumar, Pandit, and Rakshitha}
\keywords{Fair Division, EF1, Chores, Approximate Envy-Freeness, Trilean, Separable Single-Peaked}
\begin{document}

\maketitle

\begin{abstract}
In the fair division of items among interested agents, envy-freeness is possibly the most favoured and widely studied formalisation of fairness. For indivisible items, envy-free allocations may not exist in trivial cases, and hence research and practice focus on relaxations, particularly envy-freeness up to one item (EF1), which allows the removal of an item to resolve envy. A significant reason for the popularity of EF1 allocations is its simple fact of existence, which supports research into combining EF1 with other properties such as efficiency, or research into stronger properties such as ex-ante envy-freeness. It is known that EF1 allocations exist for two agents with arbitrary valuations; agents with doubly-monotone valuations; agents with Boolean valuations; and identical agents with negative Boolean valuations. In fact the last two results were shown for the more stringent EFX fairness requirement. 

We consider two new but natural classes of valuations, and partly extend results on the existence of EF1 allocations to these valuations. Firstly, we consider trilean valuations --- an extension of Boolean valuations --- when the value of any subset is 0, $a$, or $b$ for any integers $a$ and $b$. Secondly, we define separable single-peaked valuations, when the set of items is partitioned into types. For each type, an agent’s value is a single-peaked function of the number of items of the type. The value for a set of items is the sum of values for the different types. We prove EF1 existence for identical trilean valuations for any number of agents, and for separable single-peaked valuations for three agents. For both classes of valuations, we also show that EFX allocations do not exist.

\end{abstract}

\section{Introduction}
\label{sec:Introduction}

Fair division refers to the fundamental problem of allocating a set of items --- called manna --- \emph{fairly} among a set of agents. The items to be allocated could be as diverse as seats in courses in a university, items in a contested inheritance or divorce proceedings, carbon credits among nations, chores among members of a household, cab fare for a shared ride, household rent among roommates, etc. These problems are frequent and universal and have naturally attracted a lot of attention from researchers in various fields.

Given the widespread applications, a number of different formalisations of what it means to be fair have naturally been proposed. A popular and possibly predominant formalisation is \emph{envy-freeness}, which informally requires that each agent prefers her own allocated manna over the allocation to any other agent. When items are indivisible and have to be wholly allocated to a single agent, envy-free allocations may not exist, and hence relaxations are studied. The most prevalent relaxation is \emph{envy-freeness upto one item (EF1)}, which allows envy among agents as long as this envy is eliminated by removing a single item~\cite{B11combinatorial,LMM+04approximately}. Envy-free allocations are a focus of theoretical research and also implemented in practical tools, e.g., spliddit.org~\cite{GP15spliddit,Shah17}, and fairoutcomes.com~\cite{FairOutcomes}. 

In fair division, given a set $M$ of items to be allocated among $n$ agents, we assume that each agent $i$ has a valuation function $v_i: 2^M \rightarrow \mathbb{Z}$ that specifies a value for each subset of items. An allocation $A = (A_1, \ldots, A_n)$ is a partition of $M$ where agent $i$ gets the set $A_i$. An allocation is EF1 if, whenever $v_i(A_i) < v_i(A_j)$, there is an item $x \in A_i \cup A_j$ so that $v_i(A_i \setminus \{x\}) \ge v_i(A_j \setminus \{x\})$, i.e., on removing item $x$, agent $i$ weakly prefers her own allocation to $j$'s allocation. 

For a formalisation such as EF1 to be practically relevant, a basic criterion is existence. If a notion of fairness is not easily satisfied, or does not exist in many instances, it's practical use is limited. EF1 satisfies this criteria in very broad classes of valuations. Further, the robustness of EF1, as well as algorithms for obtaining such allocations, have encouraged research into broader objectives. For example, researchers have studied allocations that are EF1 satisfying other properties such as Pareto efficiency~\cite{BKV18Finding,ACI+19fair,GargMQ22}, strategy-proofness~\cite{amanatidis2017,amanatidis2023allocating}), 
and ex-ante envy-freeness~\cite{aziz2020simultaneously,FSV20best}). 

Initial results established the existence of EF1 allocations for monotone valuations of the agents~\cite{LMM+04approximately}. A different algorithm is known for nonmonotone but additive valuations when each agent has a positive or negative value for each item and values a subset at the sum of the individual item values~\cite{ACI+21fair,ACI+19fair}. These were extended to the large class of doubly monotone valuations, where each agent partitions the items into goods and chores; the goods always have nonnegative marginal value for the agent, while chores always have nonpositive marginal values~\cite{BSV21approximate}. While this is a broad class, one particular property not captured by doubly monotone valuations is when ``too much of a good thing is bad.'' Doubly monotone valuations require an agent's good to always be a good, no matter what other items the agent is allocated. But this is often not the case. An hour or two of housework may be enjoyable, but more may turn this into a chore. Conversely, the initial hours spent learning a new skill are tedious, but sustained hours of practice may lead to expertise and enjoyment. A balanced diet requires precise ranges of nutrients, and both more and less may be harmful. Moulin also provides an example of such valuations~\cite{moulin2019fair}.



One would ideally like to establish EF1 existence for arbitrary valuations. A possible approach to this goal is to consider discretized valuations, i.e., to restrict the range of possible values for any bundle to a smaller set, and obtain results on the existence of EF1. In this direction, prior work shows that EF1 allocations exist if (a) all agents have values that are in $\{0,1\}$ for all sets, or (b) agents are identical, and have values in $\{0,-1\}$ for all sets~\cite{BBB+24}.\footnote{More specifically, existence is shown for the more restricted class of EFX allocations.} 

In this paper, we extend these results on the existence of EF1 allocations in two directions. 

\emph{Firstly}, we show that EF1 always exists for identical \emph{trilean} valuations --- when agents are identical, and the value for each set of items is $0$, $a$, or $b$ for any integers $a$, $b$. Our work builds on prior work on Boolean $\{0,1\}$ and $\{0,-1\}$ valuations. This suggests that considering discretized valuations may be a fruitful approach to prove the existence of EF1 for arbitrary (though identical) valuations. Along the way, we extend prior work to show that EF1 allocations also exist for non-identical $\{0,-1\}$-valuations.

While theoretically, trilean valuations are a stepping stone to arbitrary nonmonotone valuations, practically these are also an intuitive way to convey approval or disapproval of a set of items. Approval can be interpreted as value $+1$ for the set, disapproval as value $-1$, and no feedback as value $0$. 

The restriction to identical valuations is limiting. However identical valuations are often a basic tool in developing algorithms for nonidentical valuations. E.g., the existence of EFX allocations for 2 agents follows immediately from the existence for identical valuations~\cite{PR20almost}. The simplified algorithm for EFX for 3 agents starts off with an EFX allocation for identical agents~\cite{Akrami2023EFX}. There are numerous such examples.

\emph{Secondly}, we introduce a new class of valuations that we call \emph{separable single-peaked} (SSP) valuations. For SSP valuations, the set of items are partitioned into $t$ types. For each type of item $j$, each agent has a threshold $\theta_{ij}$. Agent $i$'s valuation for type $j$ is single-peaked with peak $\theta_{ij}$: it monotonically increases with the number of items up to $\theta_{ij}$, and monotonically decreases after that.  The valuation is additive across items of different types. Thus, these are a relaxation of separable piece-wise linear concave valuations, widely studied in fair division and market equilibria (e.g.,~\cite{ChaudhuryCGGHM22,GargMSV15}). For SSP valuations, we show two results: we show that EF1 allocations exist either when agents have the same threshold for a type, and when agents have different thresholds, but there are three agents. Finally, we give a tight example to show that EFX allocations do not exist for the valuations studied, even for two identical agents and three items.

One of our goals in studying these valuation classes is to go beyond doubly monotone valuations. This is uncharted territory --- we know of no other properties (other than EF1) that hold in such general valuation classes, and hence there have been few attempts to define structured classes beyond doubly monotone.\footnote{A notable exception is the class of nonmonotone submodular valuations, of which SSP is a subclass.} We think the study of structured nonmonotone valuations should be of interest in its own right.

\section{Related Work}

Fair division has traditionally focused on allocating divisible resources, also known as cake-cutting. 
 A survey on computational results for cake-cutting is presented by Procaccia~\cite{P15cake}. For non-monotone valuations (sometimes called ``burnt cake''), results are known only when the number of agents is either 4 or a prime number~\cite{S18fairly,MZ19envy}.


For indivisible manna, for monotone non-decreasing valuations --- when all items are goods --- the existence of EF1 allocations was given by Lipton et al.~\cite{LMM+04approximately}. For additive non-monotone utilities a double round-robin algorithm for EF1 allocations is known~\cite{ACI+19fair,ACI+21fair}. These results were extended to doubly monotone valuations~\cite{BSV21approximate} by suitably modifying the envy-cycle elimination algorithm of Lipton et al. 

EF1 allocations always exist for two agents, with completely general valuations~\cite{BBB+24}. Further, EFX allocations --- a stricter fairness condition, that requires that if $i$ envies $j$, this envy be resolved by removing any good from agent $j$, and any chore from agent $i$ --- exist (a) for Boolean valuations, i.e., $v_i(S)  \in \{0,1\}$ for any agent $i$ and set $S$ of items, and (b) for identical and negative Boolean valuations, i.e., $v_i(S) = v(S) \in \{0,-1\}$ for all $i$, $S$. An EFX allocation is also an EF1 allocation. We note that these existence results are through a variety of different techniques --- envy-cycle elimination, round-robin, sequential allocation of minimal subsets, local search, etc. It remains open if an EF1 allocation exists for arbitrary valuations, even, e.g., for the case of 3 identical agents.

EF1 is also considered alongside Pareto-optimality (PO), where an allocation is PO if no allocation gives at least as much utility to every agent and strictly higher utility to at least one agent. It is known that an EF1 and PO allocation always exists for additive goods and can be computed in pseudopolynomial time~\cite{CKM+19unreasonable,BKV18Finding,MurhekarG21}. For additive chores and mixed items, partial results are known~\cite{ACI+19fair,GargMQ22,hosseini2024almost}. Researchers have also studied the existence of the stricter EFX allocations with monotone additive valuations. These are known to exist if all agents are identical~\cite{PR18almost,PR20almost}, have matroid rank valuations~\cite{BabaioffEF21}, or if there are three agents~\cite{Akrami2023EFX,ChaudhuryGM20}. 
The case of four or more agents has partial results~\cite{BergerCFF22,GhosalPNV23}.



Besides envy-freeness, numerous other fairness notions are also studied in the literature, including proportionality, equitability, and maximin share. Amanatidis et al.~\cite{amanatidis2023survey} and Mishra, Padala, and Gujar~\cite{MishraPG23} present surveys on recent developments on these.

\section{Preliminaries}
\label{sec:Preliminaries}

\paragraph*{Problem instance}
An \emph{instance} $\langle N, M, V \rangle$ of the fair division problem is defined by a set $N$ of $n \in \mathbb{N}$ \emph{agents}, a set $M$ of $m \in \mathbb{N}$ \emph{indivisible items}, and a \emph{valuation profile} $V = \{v_1,v_2,\dots,v_n\}$ that specifies the preferences of every agent $i \in N$ over each subset of the items in $M$ via a \emph{valuation function} $v_i: 2^{M} \rightarrow \mathbb{R}$. Agents are identical if $v_i(S) = v(S)$ for all agents $i \in N$ and all subsets $S \subseteq M$. 

\paragraph*{Allocation}
An \emph{allocation} $A \coloneqq (A_1,\dots,A_n)$ is an $n$-partition of a subset of the set of items $M$, where $A_i \subseteq M$ is the \emph{bundle} allocated to the agent $i$ (note that $A_i$ can be empty). An allocation is said to be \emph{complete} if it assigns all items in $M$, and is called \emph{partial} otherwise.

\paragraph*{Envy-Freeness and its Relaxations}
An allocation is \emph{envy-free} if for all agents $i$, $j$, $v_i(A_i) \ge v_i(A_j)$, i.e., each agent prefers their own bundle to that of any other agent's. An allocation is \emph{envy-free upto one item} (EF1) if, whenever $v_i(A_i) < v_i(A_j)$, there is some item $x \in A_i \cup A_j$ so that $v_i(A_i \setminus \{x\}) \ge v_i(A_j \setminus \{x\})$. Given an allocation $A$, we say a subset $N'$ of agents is \emph{mutually EF1} if the EF1 condition holds for every pair of agents in $N'$. We introduce additional notation specific to the valuations we study in the relevant sections.

\paragraph*{Trilean and Boolean Valuations}
For a fair division instance, valuations are trilean if for some $a$, $b \in \mathbb{Z}$, for every agent $i \in N$ and $S \subseteq M$, $v_i(S) \in \{0,a,b\}$. Valuations are Boolean $\{0,1\}$-valued if for every agent $i$ and subset $S$ of items, $v_i(S) \in \{0,1\}$. Similarly, valuations are Boolean $\{0,-1\}$-valued if for every agent $i$ and subset $S$ of items, $v_i(S) \in \{0,-1\}$. 

\paragraph*{Separable Single Peaked Valuations}
We first define $\theta_{ij}$ for $i \in N$ and $j \in [t]$ as the threshold of agent $i$ for items of type $j$. Then agent $i$'s valuation $v_i(A_i) = \sum_{j=1}^t v_{ij}(a_{ij})$, where the valuations $v_{ij}$ are single-peaked: for all $x \le y \le \theta_{ij}$, $v_{ij}(x) \le v_{ij}(y)$, while for $\theta_{ij} \le x \le y$, $v_{ij}(x) \ge v_{ij}(y)$.

\paragraph*{Envy graph}
The \emph{envy graph} $G_A$ of an allocation $A$ is a directed graph on the vertex set $N$ with a directed edge from agent $i$ to agent $k$ if $v_i(A_k) > v_i(A_i)$, i.e., if agent $i$ prefers the bundle $A_k$ over the bundle $A_i$.

\paragraph*{Top-trading envy graph}
The \emph{top-trading envy graph} $T_A$ of an allocation $A$ is a subgraph of its envy graph $G_A$ with a directed edge from agent $i$ to agent $k$ if $v_i(A_k) = \max_{j \in N} v_i(A_j)$ and $v_i(A_k) > v_i(A_i)$, i.e., if agent $i$ envies agent $k$ and $A_k$ is the most preferred bundle for agent $i$.

\paragraph*{Cycle-swapped allocation}
Given an allocation $A$ and a directed cycle $C$ in an envy graph or a top-trading envy graph, the \emph{cycle-swapped allocation} $A^C$ is obtained by reallocating bundles backwards along the cycle. For each agent $i$ in the cycle, define $i^+$ as the agent that she is pointing to in $C$. Then, $A_i^C = A_{i^+}$ if $i \in C$, otherwise $A_i^C = A_i$.

\section{EF1 Allocations for Trilean Valuations}
\label{sec:EF_Trilean_Valued}

To show EF1 exists for identical trilean valuations, we claim that it is sufficient to prove existence for two cases: $a = 1$, $b = 2$, and $a=-1$, $b=1$. Since for envy-freeness we only compare relative values of bundles, this is immediate if either $a$ or $b$ is nonnegative (and in this case holds for non-identical trilean valuations). In the appendix, we prove this also holds if $a$ and $b$ are both negative.

\begin{restatable}{proposition}{rstab}
Suppose an EF1 allocation exists in all instances $(N,M,\mathcal{V})$ with identical agents, where either $v(S) \in \{0,1,-1\}$ for all $S \subseteq M$, or $v(S) \in \{0,1,2\}$. Then an EF1 allocation exists in all instances with identical agents where $v(S) \in \{0,a,b\}$ for any integers $a$, $b$.
\label{prop:ab}
\end{restatable}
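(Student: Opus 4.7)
The plan is to exhibit, for any identical trilean valuation $v$ with range $\{0,a,b\}$, an auxiliary valuation $v'$ whose range lies in one of the two base sets $\{0,1,2\}$ or $\{0,1,-1\}$, such that EF1 for $v'$ coincides with EF1 for $v$. The case split is by the signs of $a$ and $b$, with the all-negative case handled by an additional duality $v \leftrightarrow -v$.

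First I would observe that the EF1 condition depends on $v$ only through binary comparisons of the form $v(A_i\setminus\{x\}) \ge v(A_j\setminus\{x\})$. Hence, for any order-preserving injection $\phi$ on the range of $v$, an allocation is EF1 for $v$ if and only if it is EF1 for $\phi\circ v$. Assuming without loss of generality that $a \le b$, I would split into subcases. If $0 \le a < b$, take $\phi(0)=0$, $\phi(a)=1$, $\phi(b)=2$: then $v' := \phi\circ v$ has range in $\{0,1,2\}$, so the hypothesis directly supplies an EF1 allocation. If $a < 0 \le b$, take $\phi(a)=-1$, $\phi(0)=0$, $\phi(b)=1$: the range lies in $\{-1,0,1\}$, again covered. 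Degenerate subcases where $a=b$ or where one endpoint equals $0$ collapse to Boolean $\{0,1\}$ or $\{0,-1\}$ valuations, which are instances of the base cases.

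The remaining case, which the excerpt relegates to the appendix, is $a,b < 0$. Here I would define $v' := -v$, so the range becomes $\{0,-a,-b\} \subseteq \mathbb{Z}_{\ge 0}$, and apply the previous paragraph to produce an allocation $A$ that is EF1 for $v'$. The lemma I must then establish is the duality: for identical valuations, $A$ is EF1 for $v'$ if and only if $A$ is EF1 for $-v'=v$. Direct verification is short: EF1 for $-v'$ says that whenever $v'(A_i) > v'(A_j)$ there exists $x \in A_i \cup A_j$ with $v'(A_i\setminus\{x\}) \le v'(A_j\setminus\{x\})$, and relabelling the ordered pair $(i,j) \mapsto (j,i)$ converts this verbatim into the EF1 condition for $v'$. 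Identical valuations are essential here, since they let the roles of the two agents be swapped freely.

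The only nontrivial point is the duality in the third paragraph; the rest is routine relabelling. The duality works precisely because EF1 is quantified over all ordered pairs of agents and the witness item $x$ may be removed from either bundle, so negating values together with swapping the pair reproduces the same statement. Without the identicality assumption this step would fail, and indeed the proposition only claims the reduction for identical agents; the non-identical case with both $a,b<0$ would need a genuinely different argument.
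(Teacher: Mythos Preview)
Your proposal is correct and follows essentially the same approach as the paper: both reduce the nonnegative cases by an order-preserving relabelling, and for the all-negative case both exploit that EF1 for identical agents is preserved under an order-reversing transformation via the swap $(i,j)\mapsto(j,i)$. The only cosmetic difference is that the paper maps the all-negative case directly to $\{0,1,2\}$ by an explicit order-reversing map, whereas you factor this as negation followed by an order-preserving map; the duality argument is identical.
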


Our proof for trilean valuations will thus focus on these two cases, called negative trilean if values are in $\{0,-1,1\}$, and positive trilean if values are in $\{0,1,2\}$.


Since EF1 studies values of sets upon removal of items, we introduce some notation for this. For a set of items $S$, any immediate subset ($S' \subset S$ s.t. $|S'| = |S| - 1$) is called a \emph{child} of $S$.

\begin{definition}
For an agent $i$ and a subset $S$ of items, and $a,b \in \mathbb{Z}$,
\begin{itemize}
\item We use $v_i(S) = a \rightarrow b$ to denote that $v_i(S) = a$, and for \emph{some} item $x \in S$, $v_i(S \setminus \{x\}) = b$. 
\item  We use $v_i(S) = a \rightrightarrows b$ to denote that $v_i(S) = a$, and for \emph{every} $x \in S$, $v_i(S\setminus\{x\}) = b$. 
\item For $B \subset \mathbb{Z}$, we use $v_i(S) = a \rightrightarrows B$ to denote that $v_i(S) = a$, and for every $x \in S$, $v_i(S \setminus \{x\}) \in B$. 
\end{itemize}
\end{definition}

\noindent For example, $v_i(S) = 0 \rightarrow 1$ denotes that $v_i(S) = 0$, and $\exists x \in S$ for which $v_i(S \setminus \{x\}) = 1$. The notation $v_i(S) = -1 \rightrightarrows \{-1,0\}$ denotes that $v_i(S) = -1$, and on removal of any $x \in S$, $v_i(S \setminus \{x\})$ is either $-1$ or $0$.

\paragraph*{A note on trilean valuations} 

In our work, we use \emph{trilean} for instances where any set of items has one of three values.\footnote{The term trilean was suggested by attendees of Dagstuhl seminar 24401.} The term is a natural extension to the term Boolean (see, e.g.,~\cite{wiki:trilean}), which is used in prior work for instances where any set of items has one of two values~\cite{BBB+24}. It is important to distinguish the valuations we study from the case where any \emph{item} has $0$ or $1$ \emph{marginal} value (and thus a set of items can take any value between $0$ and $m$). Various other terms have also been used for this case, including dichotomous~\cite{BabaioffEF21}, binary~\cite{HalpernPPS20,BKV18greedy} and bivalued instances~\cite{GargMQ22}. Given that there is a lack of consistent notation, and that we are the first to propose and study such valuations, we believe the use of trilean as an extension of Boolean to denote the valuations we study is a reasonable choice.

\subsection{Boolean Valuations}
\label{sec:binaryVals}

As noted, B\'erczi et al. show that for Boolean $\{0,1\}$ valuations, there exists an EFX allocation, and give an algorithm for this~\cite{BBB+24}. They also give an algorithm for obtaining an EFX allocation for \emph{identical} Boolean $\{0,-1\}$ valuations. Since EFX is a stronger requirement than EF1, these algorithms give EF1 allocations for the respective cases. This however leaves open the existence of EF1 allocations for nonidentical Boolean $\{0,-1\}$ valuations.\footnote{For identical valuations, there is a reduction from finding an EF1 allocation in Boolean $\{0,-1\}$ valuations to finding one in Boolean $\{0,1\}$ valuations --- replace each $-1$ value with $+1$, and find an EF1 allocation $A$ in the resulting $\{0,1\}$-valued instance. Then $A$ is an EF1 allocation in the original $\{0,-1\}$-valued instance as well. This reduction, however, does not work for \emph{nonidentical} valuations.} We now give such an algorithm, called NegBooleanEF1. Our algorithm is similar to the algorithm for Boolean $\{0,1\}$ valuations, with suitable modifications for negative valuations. We use certain properties of the allocation produced by this algorithm later, in our result for trilean valuations (Proposition~\ref{prop:binZeroMOne2}).



For a brief intuition for our algorithm, it can be checked that the only scenario when an allocation is not EF1 is if we have two agents $i$ and $j$ such that $v_i(A_i)=-1 \rightrightarrows -1$, and $A_j=\emptyset$ or $v_i(A_j)=0\rightrightarrows 0$. Then, agent $i$ envies agent $j$, but the envy is not eliminated by the removal of any item.

To avoid this, from the set $M'$ of unallocated items, Algorithm NegBooleanEF1 repeatedly chooses an inclusionwise minimal set $S$ for which all unallocated agents have value $-1$. This set $S$ is assigned to an agent $i$ so that $v_i(S) = -1 \rightarrow 0$. Since $S$ is minimal, such an agent must exist. The main property is that this agent cannot be part of an EF1 violation: no later agent envies agent $i$, since all unallocated agents have value $-1$ for set $S$. Also any envy agent $i$ has can be eliminated by removing an item $x$ such that $v_i(S \setminus \{x\}) = 0$. If there are no more such minimal sets, then for some agent $i$, $v_i(M') = 0$. We assign $M'$ to agent $i$ and complete the allocation.

\begin{algorithm}[ht]
\caption{NegBooleanEF1}
\label{alg:binZeroMOne}
\begin{algorithmic}[1]
\REQUIRE Fair division instance $(N,M,\mathcal{V})$  with Boolean $\{0,-1\}$ valuations.
\ENSURE An \EF{1} allocation $A$.
\STATE Initialise $A=(\emptyset,\hdots,\emptyset)$, $M' = M$, $N' = N$.
\WHILE{($M' \neq \emptyset$) AND ($|N'| \ge 2$) AND ($\forall~ i \in N',~v_i(M') = -1$) }
    \STATE Let $S \subseteq M'$ be an inclusion-wise minimal set so that  $v_i(S) = -1 ~ \forall i \in N'$ \label{cond2}
\COMMENT{For all $x \in S$, $v_i(S\setminus \{x\}) = 0$ for some $i \in N'$.}
    \STATE Pick $i \in N'$ so that $v_i(S) = -1 \rightarrow 0$. 
    \COMMENT{Since $S$ is minimal, agent $i$ exists.}
    \STATE Assign $A_i = S$, $N'=N'\setminus \{i\}$, $M' = M' \setminus S$.
\ENDWHILE
\IF[There exists $i \in N'$ s.t. $v_i(M')=0$.]{($M' \neq \emptyset$) AND ($|N'| \geq 2$)}
    \STATE Choose $i \in N'$ so that $v_i(M') = 0$. Let $A_i = M'$. \label{line:negativebooleanif}
\ELSE[Either $M' = \emptyset$ or $|N'| = 1$.]
    \STATE Choose $i \in N'$, let $A_i = M'$. \label{line:negativebooleanelse}
\ENDIF
\RETURN allocation $A$.
\end{algorithmic}
\end{algorithm}

\begin{restatable}{theorem}{rstnegativebinary}
Given a fair division instance with negative Boolean valuations, Algorithm NegBooleanEF1 returns an EF1 allocation in polynomial time.
\label{thm:negativebinary}
\end{restatable}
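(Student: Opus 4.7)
The plan is to verify polynomial runtime and then prove EF1 through a case analysis driven by two structural invariants of the constructed allocation. The runtime is routine: each iteration of the while loop strictly decreases both $|M'|$ (since the chosen $S$ is nonempty) and $|N'|$, so there are at most $\min(m, n)$ iterations, and an inclusionwise minimal $S$ can be produced greedily from $M'$ by repeatedly removing any $x$ that still leaves $v_i(S \setminus \{x\}) = -1$ for every $i \in N'$.

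For correctness I would classify each agent into one of three roles: the \emph{loop agents} $L$ assigned inside the while loop, the \emph{designated agent} $d$ chosen on line~\ref{line:negativebooleanif} or~\ref{line:negativebooleanelse}, and possibly some \emph{leftover agents} who receive $\emptyset$. Two invariants then suffice. (I1) Every loop agent $i$ satisfies $v_i(A_i) = -1 \rightarrow 0$, i.e., some $x \in A_i$ gives $v_i(A_i \setminus \{x\}) = 0$; this is exactly the property used in line~4, and it holds because minimality of $S$ provides, for every $x \in S$, an agent $i(x) \in N'$ with $v_{i(x)}(S \setminus \{x\}) = 0$. (I2) For every loop bundle $A_i$ and every agent $j$ still in $N'$ at the moment $A_i$ is assigned, $v_j(A_i) = -1$; this is immediate from the selection criterion for $S$, and applies to every later-processed loop agent, to $d$, and to every leftover agent.

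The EF1 check then decomposes on $i$'s role for any potentially envious pair $(i, j)$ with $v_i(A_i) < v_i(A_j)$. If $i \in L$, then $v_i(A_i) = -1$ forces $v_i(A_j) = 0$; choosing $x \in A_i$ as in (I1) and using $x \notin A_j$ yields $v_i(A_i \setminus \{x\}) = 0 = v_i(A_j \setminus \{x\})$. If $i = d$ via line~\ref{line:negativebooleanif}, then $v_d(A_d) = 0$ rules out envy. Leftover agents have $v_i(\emptyset) = 0$ and likewise cannot envy. The delicate case---and the main obstacle---is $i = d$ via line~\ref{line:negativebooleanelse}, because $v_d(A_d)$ could be $-1$ there; the resolution is that this branch fires only when $M' = \emptyset$ or $|N'| = 1$, so every non-$d$ bundle belongs to a loop agent $k$ assigned while $d$ was still in $N'$, and (I2) then gives $v_d(A_k) = -1 \le v_d(A_d)$, eliminating any envy.
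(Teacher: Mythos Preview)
Your proof is correct and follows essentially the same approach as the paper: the invariants (I1) and (I2) are exactly the two properties the paper exploits (each loop agent has $v_i(A_i) = -1 \rightarrow 0$, and every agent still unallocated at the time values the assigned bundle at $-1$), and your case analysis on the role of the envious agent mirrors the paper's argument almost verbatim, with your runtime discussion slightly more explicit.

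One tiny wrinkle in the final case: when the else branch fires with $M' = \emptyset$ and $|N'| > 1$, leftover agents exist, so the inference ``every non-$d$ bundle belongs to a loop agent'' is false as stated. This does not break anything, since in that subcase $A_d = \emptyset$ and $v_d(A_d) = 0$, so $d$ cannot envy anyone; your (I2) argument is only needed, and only valid, in the $|N'| = 1$ subcase. Splitting the two subcases (as the paper does) closes the gap.
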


\begin{proof}
We reindex the agents so they are allocated in increasing order. For any agent $i$ allocated in the while loop, $v_j(A_i) = -1$ for all $j \ge i$, hence no agent after $i$ envies agent $i$. Further, $i$ is chosen so that for some $x \in A_i$, $v_i(A_i \setminus \{x\}) = 0$. Hence if agent $i$ envies another agent, this can be resolved by removing $x$. It follows that there is no EF1 violation involving an agent allocated in the while loop.

For the remainder of the algorithm, note that if Line~\ref{line:negativebooleanif} executes, then all the remaining agents have value $0$ for their bundles. Hence these agents are mutually EF1, and the allocation must then be EF1. If Line~\ref{line:negativebooleanelse} executes, then either $M' = \emptyset$, in which case the allocation is unchanged, or all of the agents $1, \hdots, n-1$ have been allocated in the while loop, and agent $n$ is allocated an arbitrary set. But the allocation remains EF1, since as stated, no EF1 violation can involve agents $1, \hdots, n-1$. It is not hard to verify that the algorithm makes a polynomial number of queries.
\end{proof}

\subsection{Negative Trilean Valuations}
\label{sec:trileanIdent}
In this section, we establish the existence of \EF{1} allocations when agents are identical and their valuations are negative trilean. We prove the following theorem in the remainder of this section.

\begin{theorem}
Every instance with identical negative trilean valuations has an EF1 allocation.
\label{theorem:trilean}
\end{theorem}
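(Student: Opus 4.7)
The plan is to adapt the two-phase structure of Algorithm~\ref{alg:binZeroMOne} to the enlarged value set $\{0,-1,1\}$. In the first phase I would iteratively locate an inclusion-wise minimal $S \subseteq M'$ with $v(S)=1$ and assign it to some unallocated agent $i$, updating $M' \leftarrow M' \setminus S$ and $N' \leftarrow N' \setminus \{i\}$. Once no subset of $M'$ evaluates to $1$, the valuation on $M'$ takes only values in $\{0,-1\}$, so the residual instance $(N',M',v|_{M'})$ is a negative Boolean one and I can complete the allocation by invoking NegBooleanEF1 (Theorem~\ref{thm:negativebinary}).

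The EF1 analysis splits cleanly by phase. Positive-phase agents all hold bundles of value $1$, so they are pairwise EF and cannot envy any negative-phase agent (whose bundle has value in $\{0,-1\}$). Negative-phase agents are mutually EF1 by correctness of NegBooleanEF1. What remains is the cross-case envy from a negative-phase agent $i$ to a positive-phase agent $j$. When $v(A_i)=0$, inclusion-wise minimality of $A_j$ forces some child $A_j \setminus \{x\}$ to have value at most $0$, and removing $x$ yields $v(A_i) \geq v(A_j \setminus \{x\})$. The subtle case is $(v(A_i),v(A_j))=(-1,1)$: EF1 then demands either a child of $A_j$ of value $-1$ or a child of $A_i$ of value $1$, and the latter is impossible because $A_i \subseteq M'$ after phase~1.

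The main obstacle is therefore what I would call a \emph{bad} minimal positive set: a set $S$ with $v(S)=1$ and $v(S \setminus \{x\})=0$ for every $x \in S$. If a bad $S$ is assigned in phase~1 and some negative-phase agent ends up with a $-1$ bundle, the pair violates EF1. To address this I would strengthen the algorithm in two ways. First, within phase~1 I would prioritize minimal positive sets that admit at least one child of value $-1$, postponing any bad sets to the end. Second, when a bad set must still be assigned while some negative-phase agent is heading toward a $-1$ bundle, I would attempt a local adjustment: transfer an item from the bad set into the offending bundle, shrinking the positive-phase bundle from value $1$ to value $0$ and converting the dangerous $(-1,1)$ envy pair into a more manageable $(-1,0)$ or $(0,0)$ pair whose EF1 can be verified using the transferred item itself. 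I expect the hardest step to be a structural lemma certifying that such adjustments are always available and never reintroduce an EF1 violation among earlier pairs, most likely proved by interleaving the two phases and tightening the invariants that NegBooleanEF1 maintains on its intermediate allocations.
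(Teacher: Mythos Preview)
Your proposal has a genuine gap in the ``local adjustment'' step, and the difficulty is more serious than you suggest.

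First, the scale of the problem: after your phase~1, \emph{every} $\res^-$ agent produced by NegBooleanEF1 (not just a possible $\bad^-$ agent) has value $-1$ and is in EF1 violation with \emph{every} bad phase-1 agent. Indeed, if $v(A_i)=-1\rightarrow 0$ and $v(A_j)=1\rightrightarrows 0$, removing an item from $A_i$ gives at best $0<1$, and removing an item from $A_j$ gives $0>-1$; neither works. So you may face many-to-many violations, not a single offending pair.

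Second, your transfer step is uncontrolled. Moving $x$ from a bad $A_j$ to a $-1$ bundle $A_i$ does drop $v(A_j)$ to $0$, but $v(A_i\cup\{x\})$ can be any of $-1,0,1$: the item $x$ was extracted before the residual became Boolean, so none of your phase-2 invariants constrain it. You would also need to argue that the transfer creates no new violation against any third agent, and nothing in your setup supplies that argument.

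The paper's route avoids this by a different decomposition. It first exhausts all \emph{favourable} sets (both $1\rightarrow -1$ and $-1\rightarrow 1$), so afterwards no single-item removal in $M'$ crosses between $+1$ and $-1$. It then allocates \emph{flexible} sets---maximal zero-valued sets with a child of sign opposite to $v(M')$---rather than minimal positive sets. Maximality is precisely the invariant that makes item transfers predictable: if $A_i$ was chosen maximal with $v(A_i)=0\rightarrow -1$ inside $M'$, then for any $S\subseteq M'\setminus A_i$ one has $v(A_i\cup S)\neq -1$, so adding items can only keep $A_i$ at value $0$ or move it to $\res^+$. Combined with the absence of favourable sets, this forces $A_n$ to pass through $\res^+$ before it could ever reach a negative value. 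The net effect is that at most one agent (agent~$n$) can be in $\bad^\pm$, and the fix-up only has to drain items from that single bundle into flexible agents of the opposite sign.

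Your ``prioritize non-bad minimal positive sets'' is close in spirit---a non-bad minimal positive set is exactly a $1\rightarrow -1$ favourable set---but you miss the $-1\rightarrow 1$ half, and more importantly you then continue assigning the remaining (bad) minimal positive sets instead of switching to zero-valued flexible sets. That switch, together with the maximality choice, is what makes the endgame tractable.
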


Before we give a brief description of how we achieve this, let us define a few terms. At any point in our algorithm, each agent will belong to one or more sets, depending on the bundle allocated to the agent. This classification forms the basis of our algorithm and the analysis, as it clarifies when EF1 violations occur. The conditions on EF1 violations are shown in Lemma~\ref{lem:ef1violations}.

\begin{enumerate}
\item Unallocated: $\unalloc = \{i : A_i = \emptyset\}$.
\item Zero: $\zero = \{i : v(A_i) = 0\}$.
\item Favourable: $\fav = \{i : v(A_i) = 1 \rightarrow -1 \text{ or } v(A_i) = -1 \rightarrow 1\}$.
\item Flexible: $\flex^+  = \{i: v(A_i) = 0 \rightarrow 1\}$, and $\flex^- =\{i:  v(A_i) = 0 \rightarrow -1\}$.
\item Resolved: $\res^+ =\{i : v(A_i) = 1 \rightarrow 0\}$, and $\res^- = \{i : v(A_i) = -1 \rightarrow 0\}$.
\item Bad: $\bad^+ = \{i: v(A_i) = 1 \rightrightarrows 1\}$, and $\bad^- = \{i: v(A_i) = -1 \rightrightarrows -1\}$
\end{enumerate}

Note that the above sets are not mutually exclusive (e.g., an agent could be in both $\flex^-$ and $\flex^+$, and an agent in $\unalloc$ is also in $\zero$), but are exhaustive, i.e., for any allocation $A$ and identical trilean valuations for the agents, each agent $i$ falls in one or more of the above sets. We will not explicitly move agents in and out of these sets. Rather, agents will acquire or lose membership depending on the bundle allocated to them. 

We use these terms to describe the respective sets as well. Thus a set of items $S$ is:

\begin{enumerate}
\item Zero-valued if $v(S) = 0$.
\item Favourable if $v(S) = 1 \rightarrow -1$ or $v(S) = -1 \rightarrow 1$.
\item Flexible if $v(S) = 0 \rightarrow 1$ or $v(S) = 0 \rightarrow -1$.
\item Resolved if $v(S) = 1 \rightarrow 0$ or $v(S) = -1 \rightarrow 0$.
\item Bad if $v(S) = 1 \rightrightarrows 1$ or $v(S) = -1 \rightrightarrows -1$.
\end{enumerate}

Apart from the set \fav, agents in the same set have the same value for their bundles, and hence do not envy each other. Given an allocation $A$ and a pair $i$, $j$ of agents, we now use these sets to give necessary conditions for the violation of EF1.

\begin{figure}[!ht]
\vspace{0.1in}
\centering
\begin{tikzpicture}

    \node[draw, rectangle] (X) at (2, -2) {$\fav$};
    \node[draw, rectangle] (L1) at (0, -3) {$\flex^- (0 \rightarrow -1)$};
    \node[draw, rectangle] (L2) at (0, -4) {$\res^- (-1 \rightarrow 0)$};
    \node[draw, rectangle] (L3) at (0, -5) {$\bad^- (-1 \rightrightarrows -1)$};
    \node[draw, rectangle] (R1) at (4, -3) {$\flex^+ (0 \rightarrow +1)$};
    \node[draw, rectangle] (R2) at (4, -4) {$\res^+ (1 \rightarrow 0)$};
    \node[draw, rectangle] (R3) at (4, -5) {$\bad^+ (1 \rightrightarrows 1)$};
    \node[draw, rectangle] (Y) at (2, -6) {$\zero$ $(0)$};

    \foreach \i in {1,2,3}
            \draw (L3) -- (R\i);
	\draw (L3) -- (Y);
    \foreach \i in {1,2,3}
            \draw (R3) -- (L\i);
	\draw (R3) -- (Y);
	\draw (L2) -- (R2);
\end{tikzpicture}
\caption{The edges depict possible EF1 violations between different sets of agents.}
\label{fig:ef1violations}
\end{figure}
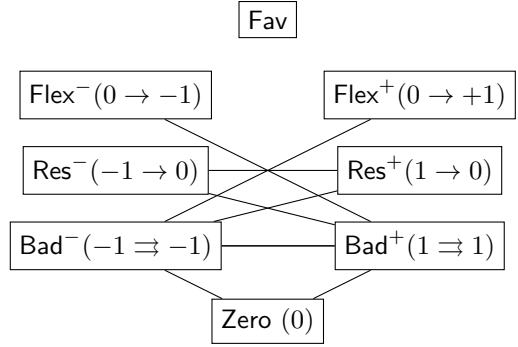

\begin{restatable}{lemma}{rstefviolations}
Given agents with identical negative trilean valuations and an allocation $A$, agents $i$, $j$ are NOT mutually EF1 only if:\footnote{Note that $\unalloc \subseteq \zero$.}
\begin{enumerate}
\item $i \in \bad^+$, and $j \in \zero \cup \flex^- \cup \res^- \cup \bad^-$, or
\item $i \in \bad^-$, and $j \in \zero \cup \flex^+ \cup \res^+ \cup \bad^+$, or
\item $i \in \res^+$ and $j \in \res^-$.
\end{enumerate}
\label{lem:ef1violations}
\end{restatable}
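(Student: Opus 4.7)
The plan is to prove the contrapositive: I assume $i$ and $j$ are not mutually EF1 and show that at least one of the three listed memberships must hold. Writing $v$ for the common identical valuation and exploiting the symmetry between $i$ and $j$, I may assume $v(A_i) < v(A_j)$, so $i$ envies $j$ and the EF1 condition from $i$ to $j$ fails. Unpacking the definition, this means that for every $x \in A_i$, $v(A_i \setminus \{x\}) < v(A_j)$, and for every $x \in A_j$, $v(A_i) < v(A_j \setminus \{x\})$. Since every value lies in $\{-1,0,1\}$, only three possibilities remain for $(v(A_i), v(A_j))$, namely $(-1,0)$, $(-1,1)$, and $(0,1)$, and I treat each in turn.

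When $(v(A_i), v(A_j)) = (-1, 0)$, the first family of inequalities forces every child of $A_i$ to have value $-1$, i.e.\ $v(A_i) = -1 \rightrightarrows -1$, so $i \in \bad^-$; since $v(A_j) = 0$, $j \in \zero$, matching condition~2. Symmetrically, when $(v(A_i), v(A_j)) = (0, 1)$ the second family forces every child of $A_j$ to have value~$1$, so $j \in \bad^+$, while $v(A_i) = 0$ gives $i \in \zero$, matching condition~1 after swapping roles. Boundary cases with $A_i = \emptyset$ or $A_j = \emptyset$ are absorbed here since the empty bundle has value $0$ and vacuously satisfies the ``every child'' conditions.

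The only remaining case, and the one that produces multiple sub-cases, is $(v(A_i), v(A_j)) = (-1, 1)$: every child of $A_i$ then has value in $\{-1, 0\}$ (so $i \in \bad^- \cup \res^-$ and, crucially, $i \notin \fav$), while every child of $A_j$ has value in $\{0, 1\}$ (so $j \in \res^+ \cup \bad^+$ and $j \notin \fav$). The four resulting pairs each fit the lemma, in some cases after swapping roles: $(\bad^-, \res^+)$ and $(\bad^-, \bad^+)$ are direct instances of condition~2; $(\res^-, \bad^+)$ is condition~1 with roles swapped; and $(\res^-, \res^+)$ is condition~3 with roles swapped. No individual step is hard --- the only pitfall is that the defined classes overlap (for instance $\flex^- \subseteq \zero$, and $\fav$ can intersect $\res^{\pm}$), but the strict-inequality constraints arising from EF1 failure consistently exclude the problematic overlaps, in particular ruling out $\fav$ memberships in the $(-1,1)$ case above. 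Beyond this bookkeeping, the proof is a direct case check.
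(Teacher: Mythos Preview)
Your proof is correct. It takes a genuinely different route from the paper's own argument, though both are elementary case analyses.

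The paper argues the contrapositive, organized around the \emph{complement} of the violation graph: it lists several collections of classes ($\fav$ with anything; $\{\flex^+,\res^+,\bad^+\}$; $\{\flex^-,\res^-,\bad^-\}$; $\{\flex^-,\res^+,\zero\}$; $\{\flex^+,\res^-,\zero\}$; $\{\flex^-,\flex^+,\zero\}$) and verifies that each such collection is mutually EF1, so that any non-EF1 pair must fall on one of the remaining edges in Figure~\ref{fig:ef1violations}. Your approach instead starts from an assumed EF1 violation, cases on the three possible value pairs $(v(A_i),v(A_j))\in\{(-1,0),(0,1),(-1,1)\}$, and reads off the class memberships directly from the resulting constraints on children. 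Your organization is arguably tighter --- three cases instead of six --- and makes it transparent \emph{why} $\fav$ never appears (the child constraints in the $(-1,1)$ case explicitly rule it out), whereas the paper handles $\fav$ as a standalone check. The paper's version has the advantage of matching the figure edge-by-edge, which is helpful for the downstream algorithmic arguments that repeatedly consult the figure.
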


In particular, note that agents in $\fav$ are mutually EF1 with every other agent, regardless of the other agent's bundle. The lemma is depicted in Figure~\ref{fig:ef1violations}, with edges between various sets showing possible EF1 violations.

At a very high level, our algorithm proceeds by first allocating \emph{favourable} sets until there are no more, then by allocating \emph{flexible} sets until the remaining set of items is either zero-valued or Boolean-valued ($\{0,1\}$ or $\{0,-1\}$). If the remaining set is zero-valued, we assign it and return the resulting allocation. By Lemma~\ref{lem:ef1violations} this is EF1. If the remaining set is Boolean-valued, then we use the previous algorithms for Boolean values to allocate \emph{resolved} and \emph{zero} sets. Eventually, we will be left with possibly a single \emph{bad} set. The resolved sets and the bad set assigned will be of the same sign (either $\res^+$ and $\bad^+$, or $\res^-$ and $\bad^-$). It follows from Lemma~\ref{lem:ef1violations} that the only possible EF1 violations will be between the flexible sets and the single bad set, which we will then fix in Algorithm FixEF1ViolationsNeg.

We now describe our algorithm in more detail. We first allocate any favourable sets ($v(S) = 1 \rightarrow -1$ or $-1 \rightarrow 1$) to the first $n-1$ agents, as long as there are any favourable sets. If there are items remaining but no more favourable sets, and not all agents in $[n-1]$ are allocated, let $M'$ be the set of items remaining. If $v(M') = 0$ (or $M' = \emptyset$), we assign $M'$ to the next agent and return the resulting allocation. Else, we next try and find \emph{maximal} flexible sets. That is, if $v(M') = 1$, we look for a maximal set $S$ such that $v(S) = 0 \rightarrow -1$, and if $v(M') = -1$, we look for a maximal set $S$ such that $v(S) = 0 \rightarrow 1$. If $M'$ is trilean, such a set $S$ must exist: since there are no favourable sets, if $M'$ and $S \subset M'$ have opposite signs, there must exist a set $T$ so that $S \subset T \subset M'$ and $v(T) = 0$.

If after allocating any flexible sets to agents in $[n-1]$, at least two agents remain and $v(M') = 0$, assign $M'$ to the next agent and return the resulting allocation, which is EF1. If $v(M') \neq 0$ then $M'$ is either Boolean $\{0,1\}$-valued or Boolean $\{0,-1\}$-valued. We then call the respective algorithms for obtaining EF1 allocations for these respective cases. This completes the allocation, though there may be EF1 violations. We then call Algorithm FixEF1ViolationsNeg to fix any violations.

The remaining case is that all agents in $[n-1]$ have been allocated either favourable or flexible sets. In this case, we allocate the remaining items to agent $n$, and call Algorithm FixEF1ViolationsNeg to fix any EF1 violations.

We will use the algorithms for obtaining EF1 allocations for Boolean $\{0,1\}$ and $\{0,-1\}$ valuations as subroutines. The next two propositions state some properties of the allocations returned by these algorithms. For Boolean $\{0,1\}$-valuations, we use Algorithm 2 from B\`erczi et al.~\cite{BBB+24}, which we will call Algorithm BooleanEF1. For Boolean $\{0,-1\}$-valuations, we use Algorithm NegBooleanEF1.

\begin{restatable}{proposition}{rstbinzeromone}
\label{prop:binZeroMOne2}
For identical $\{0,-1\}$-valuations, the allocation returned by the Algorithm NegBooleanEF1 satisfies one or both of the following conditions.
\begin{enumerate}
    \item Each agent is in $\res^-$ or $\zero$.
 \item The first $n-1$ agents are in $\res^-$.
\end{enumerate}
\end{restatable}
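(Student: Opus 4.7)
\bigskip

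\noindent\textbf{Proof plan.} The plan is to trace through the three ways Algorithm NegBooleanEF1 can exit its while loop, and in each case check which of the two claimed conditions holds. Throughout, I will reindex the agents in the order they are allocated their bundles, matching the convention in the proof of Theorem~\ref{thm:negativebinary}, so that ``first $n-1$ agents'' refers to this ordering.

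The key preliminary observation is that every agent $i$ allocated \emph{inside} the while loop receives a bundle $A_i = S$ chosen so that $v(S) = -1 \rightarrow 0$, hence $i \in \res^-$. Since valuations are identical, the loop's guard condition $\forall~ i \in N',~v_i(M') = -1$ is equivalent to $v(M') = -1$, and its negation is $v(M') = 0$ (or $M' = \emptyset$). I will now dispatch the three exit conditions $M' = \emptyset$, $|N'| = 1$, $v(M') = 0$.

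\emph{Case 1: the loop exits with $M' = \emptyset$.} The if-clause's condition fails, and the else-branch assigns the empty set to some $i \in N'$. All remaining agents in $N'$ keep the initial bundle $\emptyset$ (which has value $0$), so every such agent lies in $\zero$. Combined with the loop invariant that loop-allocated agents are in $\res^-$, the first condition of the proposition holds.

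\emph{Case 2: the loop exits with $M' \neq \emptyset$ and $|N'| = 1$.} Then exactly $n-1$ agents were allocated inside the loop, each in $\res^-$. After reindexing, these are precisely the first $n-1$ agents, so the second condition of the proposition holds (and the last agent's classification is irrelevant to the claim).

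\emph{Case 3: the loop exits with $M' \neq \emptyset$, $|N'| \ge 2$, and $v(M') = 0$.} The if-branch of Line~\ref{line:negativebooleanif} executes, picking some $i \in N'$ with $v_i(M') = 0$ and setting $A_i = M'$. Since $v(A_i) = 0$, we have $i \in \zero$. Any agent in $N' \setminus \{i\}$ retains the empty bundle and is also in $\zero$. Together with the loop-allocated agents (which are in $\res^-$), every agent lies in $\res^- \cup \zero$, giving the first condition.

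\emph{Anticipated difficulty.} There is no real combinatorial obstacle; the only thing to be careful about is the bookkeeping around the reindexing and the observation that, because valuations are identical, the loop-guard's universal quantifier collapses to the single condition $v(M')=-1$, which is what lets Case~3 be precisely the if-branch of the algorithm. The conditions are not mutually exclusive, so in some instances both hold simultaneously (for example, when the loop allocates to all $n$ agents), which is consistent with the statement ``one or both''.
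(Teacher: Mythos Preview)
Your proof is correct and follows essentially the same approach as the paper: observe that every agent allocated inside the while loop lands in $\res^-$, then case-split on the three ways the loop can exit, matching the paper's (more terse) two-branch analysis of Line~\ref{line:negativebooleanif} versus Line~\ref{line:negativebooleanelse}. One small inaccuracy in your closing parenthetical: the loop can never allocate to all $n$ agents, since its guard requires $|N'|\ge 2$; at most $n-1$ agents are handled inside the loop, so a correct example of both conditions holding simultaneously would be Case~2 with $v(M')=0$ or $M'=\emptyset$.
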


\begin{proposition}
\label{prop:binZeroOne2}
For identical $\{0,1\}$-valuations, the allocation returned by Algorithm BooleanEF1 satisfies one or both of the following conditions:
\begin{enumerate}
    \item Each agent is in $\res^+$ or $\zero$.
 \item The first $n-1$ agents are in $\res^+$.
\end{enumerate}
\end{proposition}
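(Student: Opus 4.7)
The plan is to mirror the proof of Proposition~\ref{prop:binZeroMOne2}, adapted to the positive case. Algorithm BooleanEF1 from \cite{BBB+24} is structurally analogous to Algorithm~\ref{alg:binZeroMOne}: each iteration of its main while loop picks an inclusion-wise minimal subset $S$ of the currently unallocated items $M'$ such that $v(S) = 1$ for every still-unallocated agent, and then assigns $S$ to some agent $i$ with $v(S) = 1 \rightarrow 0$. Such an agent must exist, because by the minimality of $S$ together with the identical-valuations assumption, for any $x \in S$ we have $v(S \setminus \{x\}) = 0$. Consequently, every agent allocated inside the while loop is in $\res^+$.

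The remainder of the argument would be a case split on how the while loop terminates. First, if the loop exits because $M' = \emptyset$, then each remaining unallocated agent receives the empty bundle and therefore lies in $\zero$ (using $\unalloc \subseteq \zero$), so every agent is in $\res^+ \cup \zero$ and condition~1 holds. Second, if the loop exits because some agent in $N'$ has $v(M') = 0$ (the analogue of the condition just before Line~\ref{line:negativebooleanif} of Algorithm~\ref{alg:binZeroMOne}), then the algorithm assigns $M'$ to that agent, placing them in $\zero$; any other still-unallocated agents receive empty bundles and are also in $\zero$, so condition~1 again holds. Third, if the loop exits with $|N'| = 1$, then the first $n-1$ agents have all been allocated inside the loop and thus lie in $\res^+$, so condition~2 holds regardless of whether the last agent's bundle ends up in $\res^+$, $\zero$, or $\bad^+$.

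The main obstacle is simply confirming that the version of Algorithm BooleanEF1 being cited from \cite{BBB+24} produces allocations with the structural properties above; once that is verified, the argument is a direct translation of the bookkeeping used in the proof of Proposition~\ref{prop:binZeroMOne2}, with the signs flipped and $\res^-$ replaced by $\res^+$ throughout. No new ideas beyond those already used for the negative case should be required.
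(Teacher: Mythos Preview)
The paper does not actually supply a proof of this proposition; it is stated without argument, evidently because the authors regard it as the direct positive-sign analogue of Proposition~\ref{prop:binZeroMOne2}, whose proof they do give in the appendix. Your proposal is therefore exactly in line with what the paper implicitly expects: mirror the short proof of Proposition~\ref{prop:binZeroMOne2} with the roles of $-1$ and $1$ swapped and $\res^-$ replaced by $\res^+$. The caveat you yourself identify---that one must confirm Algorithm~2 of \cite{BBB+24} has the claimed loop structure (repeatedly assigning an inclusion-wise minimal value-$1$ subset to an agent for whom it is $1\rightarrow 0$, then handling the residual set)---is the only genuine obligation, and the paper simply takes that structure as given from the cited reference.
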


\begin{algorithm}[!ht]
\caption{TrileanNegEF1}
\label{alg:trileanIdent}
\begin{algorithmic}[1]
\REQUIRE Fair division instance $(N,M,\mathcal{V})$  with identical negative trilean valuations.
\ENSURE An \EF{1} allocation $A$.
\STATE Initialize $A=(\emptyset,\hdots,\emptyset)$, $M'=M$, and $i=1$.
\WHILE{($\exists S \subseteq M'$ such that $S$ is favourable) AND ($i < n$)}
	\STATE $A_i = S$, $M' = M' \setminus S$, $i = i+1$ \label{line:negfavourable} \COMMENT{Assign favourable sets.}
\ENDWHILE

\WHILE[Assign flexible sets.]{($M' \neq \emptyset$) AND ($i < n$) AND ($M'$ is trilean) AND ($v(M') \neq 0$)}
      \IF{$v(M')=1$}
            \STATE Let $S$ be an inclusion-wise maximal subset such that $v(S)=-1$.
            \STATE Pick any $x \notin S$. $A_i = S \cup \{x\}$, $M'=M'\setminus A_i$ .   \COMMENT{$v(A_i)=0 \rightarrow -1$.} \label{flex1}
     \ELSE
            \STATE Let $S$ be a inclusion-wise maximal subset such that $v(S)=1$.
            \STATE Pick any $x \notin S$. $A_i = S \cup \{x\}$, $M'=M'\setminus A_i$.  \COMMENT{$v(A_i)=0 \rightarrow 1$.}  \label{flex2}
     \ENDIF
\STATE $i = i+1$
\ENDWHILE
\IF {($M' = \emptyset$)}
	\STATE \textbf{return} allocation $A$. \label{Mempty}
\ENDIF
\IF {($v(M') = 0$)}
	\STATE $A_i = M'$, \textbf{return} allocation $A$. \label{Mzero}
\ENDIF
\IF {($i=n$)}
	\STATE $A_i = M'$, $A = $ FixEF1ViolationsNeg$(A)$, \textbf{return} allocation $A$. \label{lastagent}
\ENDIF
\IF[Assign resolved sets.]{($M'$ is Boolean $\{0,1\}$-valued)} 
           \STATE $A =$ Algorithm BooleanEF1$(M',N \setminus [i-1],\mathcal{V})$. \label{binZeroOneAlloc}
           \STATE $A = $ FixEF1ViolationsNeg$(A)$,  \textbf{return} allocation $A$. \label{fix2}
      \ELSE
          \STATE $A =$ Algorithm~\ref{alg:binZeroMOne}$(M',N \setminus [i-1],\mathcal{V})$. \label{binZeroMOneAlloc}
           \STATE $A = $ FixEF1ViolationsNeg$(A)$,  \textbf{return} allocation $A$. \label{fix3}
      \ENDIF    
\end{algorithmic}
\end{algorithm}

Algorithm TrileanNegEF1 terminates in one of five places --- Lines~\ref{Mempty},~\ref{Mzero},~\ref{lastagent},~\ref{fix2}, and~\ref{fix3}.
The next claim shows that if the algorithm terminates in Line~\ref{Mempty} or Line~\ref{Mzero}, the allocation returned is EF1.

\begin{clm}
If Algorithm TrileanNegEF1 terminates in Line~\ref{Mempty} or Line~\ref{Mzero}, the allocation returned is EF1.
\label{claim:emptyzero}
\end{clm}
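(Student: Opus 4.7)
The plan is to combine Lemma~\ref{lem:ef1violations} with the remark immediately following it (that agents in $\fav$ are mutually EF1 with every other agent, regardless of the other agent's bundle). First I would catalogue the classes the agents can end up in when the algorithm exits at Line~\ref{Mempty} or Line~\ref{Mzero}. By that point only favourable bundles have been handed out in Line~\ref{line:negfavourable} and flexible bundles in Lines~\ref{flex1} and \ref{flex2}; in the Line~\ref{Mzero} branch one extra agent is additionally assigned a set $M'$ with $v(M')=0$; all still-unassigned agents remain in $\unalloc \subseteq \zero$. Hence every agent lies in $\fav \cup \flex^+ \cup \flex^- \cup \zero$, and crucially no agent lies in $\bad^+$ or $\bad^-$: a bundle of value $0$ cannot be bad since $\bad^\pm$ requires $|v|=1$, while a favourable bundle has value $\pm 1$ \emph{together with} a witness item whose removal flips the sign, which directly contradicts the $\rightrightarrows$ condition defining $\bad^\pm$.

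Next I would walk through the three cases of Lemma~\ref{lem:ef1violations}. Cases 1 and 2 each require one of the two agents to lie in $\bad^+$ or $\bad^-$, which the classification above rules out. Case 3 requires $i \in \res^+$ and $j \in \res^-$, so both agents would need $|v(A)|=1$; but in our termination branches the only bundles with $|v(A)|=1$ are the favourable ones, forcing any such $i$ and $j$ to also lie in $\fav$, and then the $\fav$ remark immediately gives mutual EF1. For the remaining pairs, both agents are in $\flex^\pm \cup \zero$, so both bundles have value $0$, there is no envy at all, and EF1 holds trivially. Putting the pieces together, every pair is mutually EF1 and the returned allocation is EF1 in both branches.

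The step I expect to be the main obstacle is the interaction with $\res^\pm$ in case 3: an agent with a favourable bundle can coincidentally \emph{also} satisfy the $\res^+$ (or $\res^-$) condition, if some second item in that bundle has a removal that drops the value to $0$ rather than flipping its sign. This is the only place where Lemma~\ref{lem:ef1violations} would nominally flag a candidate EF1 violation in these termination branches, and the proof must discharge it by appealing to the $\fav$-membership of the agent rather than to disjointness of the classes. Everything else --- showing $\bad^\pm$ cannot occur, and that equal-value bundles do not envy each other --- is routine book-keeping.
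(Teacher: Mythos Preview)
Your proposal is correct and follows essentially the same approach as the paper: catalogue the agents as favourable, flexible, or zero, and then invoke Lemma~\ref{lem:ef1violations} (together with the remark that $\fav$ is mutually EF1 with everyone) to rule out each possible violation. The paper's own proof is a two-line appeal to the lemma and does not explicitly address the $\res^{\pm}$ overlap you flag in case~3; your handling of that point via $\fav$-membership is the right way to make the argument airtight.
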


\begin{proof}
Note that prior to Lines~\ref{Mempty} and~\ref{Mzero}, any agents with non-empty bundles were assigned either favourable sets or flexible sets in the preceding while loops. In the execution of Lines~\ref{Mempty} or~\ref{Mzero}, every remaining agent is either unassigned or assigned a zero-valued bundle. From Lemma~\ref{lem:ef1violations}, there is no EF1 violation between favourable, flexible, and zero-valued agents. Hence the resulting allocation is EF1.  
\end{proof}

If Line~\ref{lastagent} executes, then before FixEF1ViolationsNeg$(A)$ is called, agents $1, \hdots, n-1$ are assigned either favourable or flexible sets, and hence by Lemma~\ref{lem:ef1violations}, any EF1 violation must involve agent $n$. Claim~\ref{claim:trileanIdentViol} then details the possible EF1 violations before Algorithm FixEF1ViolationsNeg is called. 

For the remaining Lines~\ref{fix2} and~\ref{fix3}, the following proposition states what the allocation looks like before Algorithm FixEF1ViolationsNeg is called.

\begin{clm}
Let $k$ be the last agent to be allocated in the while loop in Algorithm TrileanNegEF1. Then clearly agents $1, \hdots, k$ are either favourable or flexible. Further,
\begin{enumerate}
\item If FixEF1ViolationsNeg$(A)$ is called in Line~\ref{fix2}, then either:
\begin{enumerate} 
\item agents $k+1, \hdots, n$ are in $\res^+ \cup \zero$, or
\item agents $k+1, \hdots, n-1$ are in $\res^+$, and $A_n$ is Boolean $\{0,1\}$-valued.
\end{enumerate}
\item If FixEF1ViolationsNeg$(A)$ is called in Line~\ref{fix3}, then either:
\begin{enumerate} 
\item agents $k+1, \hdots, n$ are in $\res^- \cup \zero$, or
\item agents $k+1, \hdots, n-1$ are in $\res^-$, and $A_n$ is Boolean $\{0,-1\}$-valued.
\end{enumerate}
\end{enumerate}
\label{claim:values}
\end{clm}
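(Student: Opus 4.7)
The plan is to treat this statement as a routine unpacking of Propositions~\ref{prop:binZeroOne2} and~\ref{prop:binZeroMOne2} applied to the sub-instance handed to whichever Boolean subroutine Algorithm TrileanNegEF1 invokes. The opening clause---that agents $1, \hdots, k$ are favourable or flexible---is immediate from inspecting the two while loops of Algorithm TrileanNegEF1: the only bundles they can create are favourable sets in Line~\ref{line:negfavourable} and flexible sets in Lines~\ref{flex1} and~\ref{flex2}, so by definition every agent allocated during these loops belongs to $\fav \cup \flex^+ \cup \flex^-$.

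For item~(1), I would first record the guard conditions implicit in reaching Line~\ref{fix2}. The returns at Lines~\ref{Mempty},~\ref{Mzero},~\ref{lastagent} must not have fired, giving $M' \neq \emptyset$, $v(M') \neq 0$, and $i < n$ (so at least two agents remain); and the branch taken at Line~\ref{binZeroOneAlloc} forces $M'$ to be Boolean $\{0,1\}$-valued. Line~\ref{binZeroOneAlloc} then calls Algorithm BooleanEF1 on the sub-instance $(M', N \setminus [k], \mathcal{V})$, whose $n - k$ agents retain the original identical valuations restricted to subsets of $M'$, which are $\{0,1\}$-valued. Proposition~\ref{prop:binZeroOne2} therefore applies; translating its two conclusions from the sub-instance's indexing to the full instance (the ``first $n - k - 1$'' agents of the sub-instance are agents $k+1, \hdots, n-1$ of the original instance) gives exactly the two sub-cases listed. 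The only additional remark needed is that in sub-case (b), $A_n \subseteq M'$, and since every subset of a Boolean $\{0,1\}$-valued set is itself Boolean $\{0,1\}$-valued, $A_n$ inherits this property.

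Item~(2) is entirely symmetric. Reaching Line~\ref{fix3} forces $M'$ to be Boolean $\{0,-1\}$-valued (non-empty, non-zero-valued, non-trilean, and not hitting the $\{0,1\}$ branch at Line~\ref{binZeroOneAlloc}), and Line~\ref{binZeroMOneAlloc} then invokes Algorithm NegBooleanEF1 on the corresponding sub-instance. Applying Proposition~\ref{prop:binZeroMOne2} and performing the same index translation yields the two sub-cases of item~(2), with $A_n \subseteq M'$ inheriting Boolean $\{0,-1\}$-valuedness in sub-case (b).

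No serious obstacle is expected; the main thing requiring care is the notational translation between the sub-instance (where the remaining agents are relabelled $1, \hdots, n-k$) and the full instance (where they are $k+1, \hdots, n$), together with tracking which early-return tests must have failed in order for control to reach Line~\ref{fix2} or Line~\ref{fix3}. Everything else follows by direct appeal to the two cited propositions.
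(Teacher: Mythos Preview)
Your proposal is correct and follows essentially the same approach as the paper's proof: both apply Proposition~\ref{prop:binZeroOne2} (respectively Proposition~\ref{prop:binZeroMOne2}) to the sub-instance handed to the Boolean subroutine, then translate the conclusions back to the original indexing. Your version is simply more explicit about the guard conditions and the index translation than the paper's terse two-sentence argument.
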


\begin{proof}
Prior to calling FixEF1ViolationsNeg$(A)$ in Line~\ref{fix2}, the algorithm calls Algorithm BooleanEF1 with agents $N \setminus [k]$ and items $M'$ that are Boolean $\{0,1\}$-valued. From Proposition~\ref{prop:binZeroOne2}, either each agent $i > k$ is in $\res^+ \cup \zero$, or agents $k+1, \hdots, n-1$ are in $\res^+$, and $A_n$ is Boolean $\{0,1\}$-valued. This proves the claim for Line~\ref{fix2}. A similar proof (using Proposition~\ref{prop:binZeroMOne2}) shows the claim for Line~\ref{fix3}. 
\end{proof}

We then have the following claim, regarding possible EF1 violations in the allocation passed to Algorithm FixEF1ViolationsNeg.

\begin{restatable}{clm}{rsttrileanidentviol}
\label{claim:trileanIdentViol}
Let $A$ be the allocation given as input to Algorithm FixEF1ViolationsNeg. Then any EF1 violation must be of one of the following types:
\begin{itemize}
\item Type 1: Agent $i \in [n-1]$ is in $\flex^-$ and agent $n$ is in $\bad^+$. Other agents are favourable, flexible, or in $\res^+$. 
\item Type 2: Agent $i \in [n-1]$ is in $\flex^+$ and agent $n$ is in $\bad^-$. Other agents are favourable, flexible, or in $\res^-$. 
\end{itemize}
\end{restatable}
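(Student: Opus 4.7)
The plan is to handle separately the three places at which Algorithm FixEF1ViolationsNeg is invoked, namely Lines~\ref{lastagent},~\ref{fix2}, and~\ref{fix3}, and in each case determine the possible category of every agent before applying Lemma~\ref{lem:ef1violations} to enumerate the surviving EF1 violations.

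First, I would observe that in every case the agents $1,\ldots,k$ (or $1,\ldots,n-1$ in the case of Line~\ref{lastagent}) are assigned favourable or flexible bundles by construction of Algorithm TrileanNegEF1. By Lemma~\ref{lem:ef1violations}, no pair among these agents can be a source of EF1 violation, since every violation listed there requires an endpoint in $\bad^+$, $\bad^-$, or the configuration $(\res^+,\res^-)$. Thus any violation must involve an agent allocated later -- namely, an agent assigned a bundle by Algorithm BooleanEF1, Algorithm NegBooleanEF1, or as the terminal assignment $A_n = M'$.

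For Line~\ref{fix2}, I would invoke Claim~\ref{claim:values}. If every agent after $k$ lies in $\res^+ \cup \zero$, then no agent is in $\bad^\pm$ or $\res^-$, so Lemma~\ref{lem:ef1violations} rules out every violation pattern. Otherwise, agents $k+1,\ldots,n-1$ lie in $\res^+$ while $A_n$ is Boolean $\{0,1\}$-valued; I would then classify agent $n$ according to the child values of $A_n$: if $v(A_n)=0$ then $n\in\zero$, if $v(A_n)=1$ with some child of value $0$ then $n\in\res^+$, and only $v(A_n)=1\rightrightarrows 1$ (equivalently $n\in\bad^+$) permits a violation, whose partner by Lemma~\ref{lem:ef1violations} must come from $\zero\cup\flex^-\cup\res^-\cup\bad^-$. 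Since the only agents meeting that description lie in $\flex^-\cap[k]$, the configuration is exactly Type 1. The argument for Line~\ref{fix3} is symmetric and yields only Type 2 violations.

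The most delicate case is Line~\ref{lastagent}, where $A_n = M'$ is the residual set with $v(M')\neq 0$ but otherwise with arbitrary internal structure (in particular, $M'$ need not be Boolean). Here I would split by the sign of $v(M')$. For $v(M')=1$, since all child values lie in $\{-1,0,1\}$, the three mutually exhaustive subcases are: some child has value $-1$ (so $n\in\fav$); no child has value $-1$ but some child has value $0$ (so $n\in\res^+$); every child has value $1$ (so $n\in\bad^+$). The first subcase yields no violation because favourable bundles are universally mutually EF1 by Lemma~\ref{lem:ef1violations}; the second permits violations only with $\res^-$ agents, of which there are none among $[n-1]$; only the third subcase produces a violation, whose partner by Lemma~\ref{lem:ef1violations} must lie in $\flex^-\cap[n-1]$, matching Type 1. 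The case $v(M')=-1$ is symmetric and yields Type 2 violations, completing the case analysis.
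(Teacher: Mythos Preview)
Your proposal is correct and follows essentially the same approach as the paper: a case split over the three call sites (Lines~\ref{lastagent},~\ref{fix2},~\ref{fix3}), invoking Claim~\ref{claim:values} for the latter two and then Lemma~\ref{lem:ef1violations} to read off the surviving violation patterns. Your treatment of Line~\ref{lastagent} is in fact more explicit than the paper's---you classify $A_n$ by its child values rather than just appealing directly to the lemma---but the underlying argument is the same.
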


Finally, in Algorithm FixEF1ViolationsNeg, we resolve any violations with agent $n$. Claim~\ref{claim:trileanIdentViol} tells us that any EF1 violations must be between agent $n$ and an agent $i$ assigned a flexible set. Further, in this case, $n$ must be a bad agent ($v(A_n) = 1 \rightrightarrows 1$ or $v(A_n) = -1 \rightrightarrows -1$), and $i$ must be a flexible agent of the opposite sign ($v(A_i) = 0 \rightarrow -1$ or $v(A_i) = 0 \rightarrow 1$, respectively). Assume $n \in \bad^+$. Then Algorithm FixEF1ViolationsNeg proceeds by picking an agent $i$ in $\flex^-$ and transferring items (arbitrarily picked) from $A_n$ to $A_i$, until at least one of them is in $\res^+$ (has value $1 \rightarrow 0$). If $n \in \res^+$, we have reached an EF1 allocation. Else, the set of agents in $\flex^-$ is reduced. We then pick the next agent $i$ from $\flex^-$ and continue transferring items from $A_n$ to $A_i$.

We call the \texttt{repeat...until} loops in the algorithm the \emph{inner} loops, and the while loops the \emph{outer} loops. Note that if the initial allocation is not EF1, then from Claim~\ref{claim:trileanIdentViol} either $n \in \bad^+$ or $n\in \bad^-$, and hence exactly one of the two \texttt{if} conditions holds true. 

We claim that when an inner loop terminates, either agent $n$ is resolved and the algorithm terminates with an EF1 allocation, or the chosen agent $i$ is in resolved and $n$ is in \bad. We first show that each inner \texttt{repeat...until} loop runs for at most $|A_n|-1$ iterations over all iterations of the outer while loop.

\begin{algorithm}[ht]
\caption{FixEF1ViolationsNeg}
\label{alg:fixViolations}
\begin{algorithmic}[1]
\REQUIRE An allocation $A$ with possible EF1 violations.
\ENSURE An \EF{1} allocation $A$.
\IF{(Allocation $A$ is \EF{1})}
    \STATE \textbf{return} allocation $A$. \label{line:FixEasy}
\ENDIF
\IF[$v(A_n) = 1 \rightrightarrows 1$ and $\exists i: v(A_i) = 0 \rightarrow -1$]{($n \in \bad^+$) AND ($\flex^- \neq \emptyset$)} 
     \WHILE{($n \in \bad^+$) AND ($\flex^- \neq \emptyset$)} \label{while1}
           \STATE Let $i \in \flex^-$. 
           \REPEAT \label{line:FixRepeat1}
            \STATE Choose an item $x \in A_n$, $A_n = A_n \setminus \{x\}$, $A_i=A_i \cup \{x\}$. 
         \UNTIL{($i \in \res^+$) OR ($n \in \res^+$)} \label{line:FixUntil1} \COMMENT{Either ($v(A_i) = 1 \rightarrow 0$) or ($v(A_n) = 1 \rightarrow 0$).}
     \ENDWHILE
	\STATE \textbf{return} allocation $A$. \label{line:FixPositive}
\ENDIF
\IF[$v(A_n) = -1 \rightrightarrows -1$ and $\exists i: v(A_i) = 0 \rightarrow 1$]{($n \in \bad^-$) AND ($\flex^+ \neq \emptyset$)} 
     \WHILE{($n \in \bad^-$) AND ($\flex^+ \neq \emptyset$)} \label{while2}
           \STATE Let $i \in \flex^+$. 
           \REPEAT \label{line:FixRepeat2}
            \STATE Choose an item $x \in A_n$, $A_n = A_n \setminus \{x\}$, $A_i=A_i \cup \{x\}$. 
         \UNTIL{($i \in \res^-$) OR ($n \in \res^-$)} \label{line:FixUntil2} \COMMENT{Either ($v(A_i) = -1 \rightarrow 0$) or ($v(A_n) = -1 \rightarrow 0$).}
     \ENDWHILE
	\STATE \textbf{return} allocation $A$. \label{line:FixNegative}
\ENDIF
\end{algorithmic}
\end{algorithm}

\begin{restatable}{clm}{rstinnerlooptime}
Let $t = |A_n|$ be the initial size of $A_n$. Each inner loop terminates in at most $t-1$ iterations over all invocations. 
\label{claim:innerlooptime}
\end{restatable}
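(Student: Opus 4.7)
The plan is to bound the total number of inner-loop iterations by a size argument on $A_n$. By direct inspection of Algorithm~\ref{alg:fixViolations}, each pass through the body of either \texttt{repeat}\ldots\texttt{until} block transfers exactly one item out of $A_n$, and no step of the algorithm ever returns items to $A_n$. Hence, if $k$ denotes the cumulative number of iterations across all invocations of a given inner loop, then $|A_n| = t - k$ throughout, yielding trivially $k \le t$.

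To sharpen this to $k \le t - 1$, I would use the guard of the enclosing outer \texttt{while} loop. Any invocation of the first (resp.\ second) inner loop is gated by $n \in \bad^+$ (resp.\ $n \in \bad^-$), which by definition requires $v(A_n) = 1 \rightrightarrows 1$ (resp.\ $v(A_n) = -1 \rightrightarrows -1$). Since the only child of a singleton $\{x\}$ is the empty set with $v(\emptyset) = 0$, the $\rightrightarrows 1$ condition cannot hold at $|A_n| = 1$; therefore every invocation of the inner loop begins with $|A_n| \ge 2$.

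It remains to show that when an iteration decreases $|A_n|$ to exactly $1$, the \texttt{until} test immediately fires, so no further iteration occurs within that invocation. I would argue this by a case split on $v(A_n)$ at that point. In the first branch, if $v(A_n) = 1$ then $v(A_n) = 1 \rightarrow 0$ (the unique child $\emptyset$ has value $0$), giving $n \in \res^+$; symmetrically for the second branch when $v(A_n) = -1$ we obtain $n \in \res^-$. The mildly delicate case is $v(A_n) \in \{0, -1\}$ in the first branch (resp.\ $v(A_n) \in \{0, 1\}$ in the second): here the starting invariant $v(A_n) = 1 \rightrightarrows 1$ guarantees $v(A_n) = 1$ immediately after the first transfer, and any later transfer that drives $v(A_n)$ away from $1$ must on the $A_i$ side have forced $v(A_i) = 1 \rightarrow 0$, i.e., placed $i$ in $\res^+$ and triggered the exit.

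The main obstacle I anticipate is exactly this coupled case analysis: one must track how $v(A_n)$ and $v(A_i)$ evolve in lockstep under identical trilean valuations and verify that at the instant $|A_n|$ drops to $1$, at least one of $n \in \res^+$ or $i \in \res^+$ holds (and analogously for $\res^-$ in the second branch). Once this step is secured, combining $|A_n| \ge 2$ at the start of every invocation with the forced exit at $|A_n| = 1$ yields the bound $k \le t - 1$ directly, since the outer \texttt{while} loop cannot re-enter the inner loop after $|A_n|$ has dropped to $1$.
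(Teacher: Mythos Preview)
Your overall counting framework is correct and matches the paper: each inner-loop iteration strictly shrinks $A_n$, so the whole argument reduces to showing the loop cannot execute once $|A_n|$ has dropped to~$1$. Where you go wrong is in the ``mildly delicate case.'' You assert that if a transfer drives $v(A_n)$ away from~$1$, then \emph{on the $A_i$ side} we must have $i \in \res^+$. There is no such coupling: the valuations are identical, but the evolution of $v(A_i)$ as items are added is entirely independent of how $v(A_n)$ changes as items are removed. Nothing prevents, a priori, a step at which $v(A_n)$ drops from $1$ to $-1$ while $v(A_i)$ stays at~$0$; neither \texttt{until} clause would fire and your bound would fail.

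The paper closes this gap without ever looking at $A_i$. The key observation you are missing is that Algorithm~\ref{alg:fixViolations} is only called after Algorithm~\ref{alg:trileanIdent} has exhausted all favourable subsets of the remaining items; hence no subset of $A_n$ can satisfy $v(S) = 1 \rightarrow -1$. This forces a clean dichotomy on $A_n$ throughout the loop: after each transfer $v(A_n)$ is still~$1$ (because the pre-transfer $A_n$ was $1 \rightrightarrows 1$), and the new $A_n$ is either again $1 \rightrightarrows 1$ (continue) or $1 \rightarrow 0$ (so $n \in \res^+$ and the loop exits). The value of $A_n$ therefore never leaves~$1$, your delicate case is vacuous, and when $|A_n| = 1$ the unique child $\emptyset$ gives $n \in \res^+$ directly. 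Replace your $A_i$-coupling paragraph with this no-favourable-subsets argument and the proof goes through.
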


\begin{proof}
Note that each iteration of an inner loop removes an item from $A_n$. We prove the claim for the first inner loop (Line~\ref{line:FixRepeat1} to~\ref{line:FixUntil1}). The proof for the second inner loop proceeds similarly.

Initially $v(A_n) = 1 \rightrightarrows 1$. Since all favourable sets have already been assigned, there is no set $S \subseteq A_n$ with $v(S) = 1 \rightarrow -1$. Hence if $v(A_n) \neq 1 \rightrightarrows 1$, then it must be that $v(A_n) = 1 \rightarrow 0$. Thus if the first inner loop has not terminated after $t-1$ iterations, then $|A_n| = 1$, $v(A_n) = 1$, and any child of $A_n$ is an empty set of value $0$, and hence the loop terminates. A similar proof holds for the second inner loop as well, when initially $v(A_n) = -1 \rightrightarrows -1$. 
\end{proof}

\begin{restatable}{clm}{rstfix}
After every iteration of the first while loop (Line~\ref{line:FixUntil1}), either agent $n$ moves from $\bad^+$ to $\res^+$ and the algorithm returns an EF1 allocation, or agent $i$ moves from $\flex^-$ to $\res^+$ and agent $n$ remains in $\bad^+$.

Similarly, after every iteration of the second while loop (Line~\ref{line:FixUntil2}), either agent $n$ moves from $\bad^-$ to $\res^-$ and the algorithm returns an EF1 allocation, or agent $i$ moves from $\flex^+$ to $\res^-$ and agent $n$ remains in $\bad^-$.
\label{claim:fix1}
\end{restatable}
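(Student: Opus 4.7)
My plan is to analyze the transitions of $A_n$ and $A_i$ during one iteration of the outer while loop. I would first characterize $A_n$'s possible states during the inner repeat-until loop. Since items only move out of $A_n$, it remains a subset of its initial value throughout the fix. By Claim~\ref{claim:values} case (b) (the applicable case, since the first while loop of FixEF1ViolationsNeg executes only when $n \in \bad^+$, which forces Line~\ref{fix2}), the initial $A_n$ is Boolean $\{0,1\}$-valued, so no subset of $A_n$ has value $-1$ and in particular no subset of $A_n$ is favourable. Starting from $v(A_n) = 1 \rightrightarrows 1$, after removing any item $x$ we get $v(A_n \setminus \{x\}) = 1$, and since each further child of $A_n \setminus \{x\}$ has value in $\{0,1\}$, the new $A_n$ lies in $\bad^+$ (all children have value $1$) or in $\res^+$ (some child has value $0$). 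Hence $A_n$ remains in $\bad^+ \cup \res^+$ throughout.

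Combined with the termination condition of the inner loop (reached at Line~\ref{line:FixUntil1}, namely $i \in \res^+$ or $n \in \res^+$), this gives the dichotomy: either $n$ has moved to $\res^+$, or $n$ is still in $\bad^+$ and $i \in \res^+$, necessarily having moved from its initial state $\flex^-$ since $\flex^-$ and $\res^+$ are disjoint states.

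For the first horn, when $n$ moves to $\res^+$, the outer loop condition fails and the algorithm returns, and I must show the returned allocation is EF1. I would apply Lemma~\ref{lem:ef1violations}: violations only occur between $\bad^+$ and $\zero \cup \flex^- \cup \res^- \cup \bad^-$, between $\bad^-$ and $\zero \cup \flex^+ \cup \res^+ \cup \bad^+$, or between $\res^+$ and $\res^-$. By Claim~\ref{claim:trileanIdentViol}, all unmodified agents start in $\fav \cup \flex^+ \cup \flex^- \cup \res^+$; by induction on previous outer iterations (using the second horn of this very claim), all previously modified $i$'s are now in $\res^+$; and agent $n$ is now in $\res^+$. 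The hard part, and where I expect the bulk of the work, is bounding the state of the current $i$ and ruling out $\bad^+$, $\bad^-$, and $\res^-$, the only states that could yield a violation given the rest of the allocation. I plan to exploit the facts that transferred items come from a Boolean $\{0,1\}$-valued $A_n$ and that each inner-loop step changes $v(A_i)$ by a bounded amount, tracing the sequence of values $v(A_i)$ takes during the inner loop and using the observation that the very iteration pushing $n$ from $\bad^+$ to $\res^+$ constrains the marginal $v(A_i \cup \{x\}) - v(A_i)$ enough to preclude the bad landing states.

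The second while loop is handled by a symmetric argument: swap $\bad^+ \leftrightarrow \bad^-$, $\flex^- \leftrightarrow \flex^+$, and $\res^+ \leftrightarrow \res^-$, and use Claim~\ref{claim:values} case (b) applied to the call at Line~\ref{fix3} to conclude that $A_n$ is Boolean $\{0,-1\}$-valued, so $A_n$ has no favourable subset and stays in $\bad^- \cup \res^-$ throughout.
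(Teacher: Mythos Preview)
Your overall structure is reasonable, but there are two genuine gaps.

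\textbf{First}, your justification that $A_n$ is Boolean $\{0,1\}$-valued is incorrect. You claim the first while loop of Algorithm FixEF1ViolationsNeg executes only when the call came from Line~\ref{fix2}, but it can equally well be entered after Line~\ref{lastagent}: there, agents $1,\ldots,n-1$ are favourable or flexible, $A_n = M'$, and nothing prevents $v(M') = 1 \rightrightarrows 1$ while some deeper subset of $M'$ still has value $-1$. So $A_n$ need not be Boolean. The correct argument (and the one the paper uses) is that after the first while loop of Algorithm TrileanNegEF1 no favourable subsets remain in the pool, so no subset of $A_n$ can be $1 \rightarrow -1$; hence as you strip items from a value-$1$ set you must hit $0$ before $-1$, and $A_n$ stays in $\bad^+ \cup \res^+$. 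Your conclusion about $A_n$ is right, but the reason you give does not cover all cases.

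\textbf{Second, and more seriously}, your plan for controlling agent $i$'s state misses the key idea. You propose to ``exploit the facts that transferred items come from a Boolean $\{0,1\}$-valued $A_n$'' and to use marginal bounds. But the valuation is a general set function: that $\{x\}$ or any subset of $A_n$ has value in $\{0,1\}$ says nothing about $v(A_i \cup \{x\})$, and the step that pushes $n$ from $\bad^+$ to $\res^+$ does not constrain the marginal on the $A_i$ side at all. What actually rules out $v(A_i)=-1$ (and hence $i \in \res^- \cup \bad^-$) is the way $A_i^0$ was built in the second while loop of Algorithm TrileanNegEF1: when $v(M')=1$, the algorithm picks an inclusion-wise \emph{maximal} $S \subseteq M'$ with $v(S)=-1$ and sets $A_i^0 = S \cup \{x\}$. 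Maximality of $S$ gives $v(A_i^0 \cup T) \neq -1$ for every $T \subseteq M' \setminus A_i^0$, and since $A_n^0 \subseteq M' \setminus A_i^0$ this holds for every $T$ you transfer. From this, the sequence of values of $A_i$ as you add items is confined to $\{0,1\}$; the first time it hits $1$ the previous step was $0$, so $i \in \res^+$ and the inner loop halts. If instead the loop halts because $n \in \res^+$, then $v(A_i) \in \{0,1\}$, which together with the classification of the other agents and Lemma~\ref{lem:ef1violations} yields EF1. Your proposal does not contain this maximality argument, and without it you cannot preclude $v(A_i)=-1$.
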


\begin{proof}
We show the claim for the first while loop. The case of the second while loop proceeds similarly.

We first show how the values for the agents change as items are transferred in the inner loop. Let $A^0$ be the initial allocation passed to Algorithm FixEF1ViolationsNeg. Note that each flexible agent in $A^0$ is assigned their bundle in the second while loop in Algorithm TrileanNegEF1. Let $i$ be such a flexible agent, and let $M'$ be the set of remaining items from which agent $i$ is allocated. Then $v(A_i^0) = 0\rightarrow -1$, and for any subset of items $S \subseteq M' \setminus A_i^0$, $v(A_i^0 \cup S) \neq -1$. Hence in particular, for $S \subseteq A_n^0$, $v(A_i^0 \cup S) \neq -1$. Thus for any $S \subseteq A_n^0$, either $v(A_i^0 \cup S)=0$, or $v(A_i^0 \cup S) = 1 \rightarrow 0$ ($i$ is moved to $\res^+$ and the inner loop terminates).

Initially, $v(A_n^0) =1 \rightrightarrows 1$. Again, any favourable set has been previously assigned, and hence as we remove items from $A_n$, we must encounter a zero before we encounter a $-1$. Thus for any $S \subseteq A_n^0$, either $v(A_n^0 \setminus S)=1 \rightrightarrows 1$, or $v(A_n^0 \setminus S) = 1 \rightarrow 0$ ($n$ is moved to $\res^+$ and the inner loop terminates).

We now prove the claim by induction on the number of iterations of the outer while loop. After the first while loop, either agent $i$ is moved from $\flex^-$ to $\res^+$, or agent $n$ is moved from $\bad^+$ to $\res^+$. If agent $n$ moves to $\res^+$, there are no more bad agents, and all resolved agents are of the same sign (all are in $\res^+$). By Lemma~\ref{lem:ef1violations}, the allocation is EF1. Else, agent $i$ moves to $\res^+$, and as observed earlier, agent $n$ must remain in $\bad^+$.

Consider the allocation following the $k$th iteration of the outer while loop. By induction, prior to this iteration, $k-1$ agents were moved from $\flex^-$ to $\res^+$, and agent $n$ was in $\bad^+$. By the same argument as in the base case, in the $k$th iteration either the  $n$th agent is no longer a bad agent and the allocation is EF1, or the chosen agent $i$ is moved from $\flex^-$ to $\res^+$.
\end{proof}

We now complete the proof of our main theorem, showing existence of EF1.

\begin{proof}[Proof of Theorem~\ref{theorem:trilean}]
We show that Algorithm TrileanNegEF1 returns an EF1 allocation. By Claim~\ref{claim:emptyzero}, if Algorithm TrileanNegEF1 terminates in Line~\ref{Mempty} or Line~\ref{Mzero}, the allocation returned is EF1. Otherwise, the algorithm calls Algorithm FixEF1ViolationsNeg to fix the allocation. Claim~\ref{claim:trileanIdentViol} then shows that for the allocation passed to Algorithm FixEF1ViolationsNeg, any EF1 violation must be of either Type 1 or Type 2.

Suppose the EF1 violation is of Type 1. The case when the violation is of Type 2 is handled similarly. For a Type 1 violation, there must be agents in $\flex^-$, agent $n$ is in $\bad^+$, and the other agents are in \fav, $\res^+$, or $\flex^+$. Since there is an EF1 violation, and agent $n$ is in $\bad^+$, the first while loop in Algorithm FixEF1ViolationsNeg will execute, selecting an agent $i$ from $\flex^-$. From Claim~\ref{claim:fix1}, each time the inner loop terminates, either agent $n$ moves from $\bad^+$ to $\res^+$ and the algorithm returns an EF1 allocation, or agent $i$ moves from $\flex^-$ to $\res^+$ and agent $n$ remains in $\bad^+$ (and note that by Claim~\ref{claim:innerlooptime}, the inner loop terminates in at most $m$ iterations). Thus, eventually, the algorithm either moves agent $n$ from $\bad^+$ to $\res^+$ and returns an EF1 allocation, or moves all the agents in $\flex^-$ to $\res^+$. At this point, all agents are \fav, $\flex^+$, $\res^+$, or $\bad^+$. By Lemma~\ref{lem:ef1violations}, this is an EF1 allocation. 
\end{proof}

We note that our algorithm may need to make an exponential number of queries, e.g., in the very first step to find favourable sets. Given the lack of structure in the problem, we do not know if there is a way of avoiding this.

\subsection{Positive Trilean Valuations}
\label{sec:trileanIdentPos}

We now turn our attention to the case where agents are identical and their valuations are positive trilean, i.e., $v_i(S) \in \{0,1,2\}$ for all agents $i \in N$ and subsets $S \subseteq M$. Similar to before, we show the existence of EF1 valuations. The algorithm is similar to the case of negative trilean valuations, with some simplifications and modifications due to the positive valuations, and is given in the appendix.

\begin{restatable}{theorem}{rsttrileanpos}
Given an instance with identical positive trilean valuations, Algorithm TrileanPosEF1 returns an EF1 allocation.
\label{theorem:trileanpos}
\end{restatable}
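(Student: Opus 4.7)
My plan is to mirror the proof of Theorem~\ref{theorem:trilean} for negative trilean valuations, adapting the class definitions and case analysis to the range $\{0,1,2\}$. The first step is to introduce analogues of the classes $\zero$, $\fav$, $\flex$, $\res$, $\bad$ tailored to $\{0,1,2\}$-values. A natural choice is to take favourable bundles to be those that witness both the values $1$ and $2$ (e.g., $v(S) = 2 \to 1$ or $v(S) = 1 \to 2$), flexible bundles to be those of the form $v(S) = 0 \to 1$ or $v(S) = 0 \to 2$, resolved bundles to be those of the form $v(S) = 1 \to 0$ or $v(S) = 2 \to 0$, and bad bundles to be those of the form $v(S) = 1 \rightrightarrows \{1,2\}$ or $v(S) = 2 \rightrightarrows \{1,2\}$. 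I would then prove an analogue of Lemma~\ref{lem:ef1violations} by enumerating the three envy types ($v(A_i) = 0$ vs.\ $v(A_j) \in \{1,2\}$, and $v(A_i) = 1$ vs.\ $v(A_j) = 2$), showing that an EF1 violation forces the envied agent to lie in a bad class and the envying agent to lie in a small set of classes; in particular, an agent holding a favourable bundle cannot be involved in any EF1 violation.

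With this machinery, I would verify that Algorithm TrileanPosEF1 proceeds in four phases paralleling TrileanNegEF1: (i) assign favourable sets to the first $n-1$ agents while they exist; (ii) while the remainder is non-trivially trilean, peel off a maximal flexible set of the appropriate sign and assign it; (iii) when the remainder becomes Boolean $\{0,1\}$ or $\{0,2\}$, invoke the Boolean subroutine (for $\{0,2\}$, scaling reduces to $\{0,1\}$ so Algorithm~BooleanEF1 applies), leaving only resolved or zero bundles for the remaining agents except possibly the last; (iv) give the rest to the last agent and call a fix-up subroutine FixEF1ViolationsPos. Mimicking Claims~\ref{claim:emptyzero}--\ref{claim:trileanIdentViol}, I would argue that at any early termination the output is already EF1 by the violations lemma, and that the allocation passed to FixEF1ViolationsPos has the property that any remaining violation must be between a single bad agent (agent $n$) and flexible agents of the ``opposite'' sign.

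FixEF1ViolationsPos would then transfer items one at a time from $A_n$ to a chosen flexible agent, halting when either agent becomes resolved, exactly as in Algorithm~\ref{alg:fixViolations}. The key invariant to maintain, in the style of Claim~\ref{claim:fix1}, is that intermediate values encountered during transfers cannot slip into favourable or opposite-sign configurations; the maximality of the flexible sets chosen in phase (ii), combined with the absence of favourable sets after phase (i), should rule these out. I expect the main obstacle to be the new configuration $v(S) = 1 \to 2$, which has no counterpart in the negative case: when agent $n$ is bad with $v(A_n) = 1$, removing an item can \emph{increase} the value to $2$ rather than decrease it. The transfer argument must therefore show either that this case cannot arise (again appealing to the maximality/absence-of-favourable invariants) or that a slightly strengthened potential --- for instance, the multiset of values appearing along the path from $A_n$ down to $\emptyset$ --- still decreases monotonically at each step, so that the inner loop must in fact terminate with a resolved bundle of the correct sign.
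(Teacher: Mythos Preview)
Your high-level strategy of mirroring the negative-trilean proof is exactly what the paper does, but your classification of bundles is shifted by one and this breaks the key lemma on which everything rests. In the negative case $\{-1,0,1\}$ the \emph{middle} value is $0$ and the extremes are $\pm 1$; favourable bundles bridge the two extremes ($1\to -1$ or $-1\to 1$), which is precisely why a favourable agent is mutually EF1 with every other agent. In the positive case $\{0,1,2\}$ the middle value is $1$ and the extremes are $0$ and $2$, so the correct analogue of ``favourable'' is $v(S)=2\to 0$ or $v(S)=0\to 2$, not your $2\to 1$ or $1\to 2$. Your choice only bridges a gap of one, and the claim that such an agent cannot be involved in an EF1 violation is false: if $v(A_j)=2$ with every child of $A_j$ having value exactly $1$ (so $j$ is ``favourable'' in your sense but not in the paper's), and $A_i=\emptyset$, then $i$ envies $j$, there is nothing to remove from $A_i$, and removing any item from $A_j$ leaves value $1>0$. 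This misidentification cascades through the entire plan: your $\flex$, $\res$, $\bad$ are then the wrong classes (e.g., the paper's $\flex$ is $1\to 0$, which you call resolved; the paper's $\bad$ is only $2\rightrightarrows 2$), the violations lemma fails, and the ``obstacle'' you anticipate with $v(A_n)=1\to 2$ is an artefact of the wrong setup rather than a genuine difficulty.

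With the corrected classes ($\fav=\{2\to 0,\ 0\to 2\}$, $\flex=\{1\to 0\}$, $\res=\{2\to 1\}$, $\res^*=\{1\rightrightarrows\{1,2\}\}$, $\bad=\{2\rightrightarrows 2\}$) the positive case is actually \emph{simpler} than the negative one, not harder. Once favourable sets are exhausted, every minimal value-$2$ subset must have a value-$1$ child, so a non-trilean residual can only be $\{0,1\}$-valued --- the $\{0,2\}$ branch you plan for never occurs, and in the Boolean branch the allocation is already EF1 with no fix-up needed. There is only one kind of flexible and one kind of bad agent, so FixEF1ViolationsPos has a single while loop; the maximality of the assigned $1\to 0$ bundles (together with the absence of $0\to 2$ subsets) guarantees that transferring items from $A_n$ to a flexible agent moves her from value $1$ directly to $2\to 1$, and moves $n$ from $2\rightrightarrows 2$ to $2\to 1$, with no detour through value $0$. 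No strengthened potential is required.
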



\section{Separable Single-Peaked Valuations}

We now turn to EF1 allocations for separable, single-peaked (SSP) valuations. As before, $N$ and $M$ denote the set of agents and items respectively. The set $M$ is partitioned into $t$ types $(M_1, \ldots, M_t)$, with $m_j = |M_j|$ for $j \in [t]$. For an allocation $A$, we denote agent $i$'s bundle $A_i$ as a $t$-tuple $(a_{i1}, a_{i2}, ... , a_{it})$, where $a_{ij}$ denotes the number of items of type $j$ allocated to agent $i$.

To define the valuation functions for the agents, we first define $\theta_{ij}$ for $i \in N$ and $j \in [t]$ as the threshold of agent $i$ for items of type $j$. Then agent $i$'s valuation $v_i(A_i) = \sum_{j=1}^t v_{ij}(a_{ij})$, where the valuations $v_{ij}$ are single-peaked: for all $x \le y \le \theta_{ij}$, $v_{ij}(x) \le v_{ij}(y)$, while for $\theta_{ij} \le x \le y$, $v_{ij}(x) \ge v_{ij}(y)$.

We point out one basic problem that occurs with SSP valuations, that must be overcome by an algorithm returning EF1 allocations. Consider a simple instance with just two agents and a partial EF1 allocation $A$, where agent $1$ envies agent $2$. Suppose there is an item $x$ that remains to be assigned. Item $x$ is a chore for agent 1 given $A_1$ \emph{and is a good for agent 1 given $A_2$}. That is, $v_1(A_1 \cup \{x\}) < v_1(A_1)$, and $v_1(A_2 \cup \{x\}) > v_2(A_2)$. There is no obvious way to assign item $x$ while maintaining the EF1 property. This situation does not arise with doubly-monotone valuations.

For SSP valuations, we show two results. Firstly, we show existence of EF1 allocations when for each type $j$, all agents have the same threshold $\theta_j$. In this case, our proof shows that the two-phase algorithm for doubly monotone valuations~\cite{BSV21approximate} works in this case as well. We reproduce the algorithm and provide the complete proof in the appendix.

\begin{restatable}{theorem}{rstsspequal}
For SSP valuations where for each type, there is a common threshold for all agents, there always exists an EF1 allocation.
\label{thm:sspequal}
\end{restatable}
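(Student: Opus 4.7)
The plan is to invoke the two-phase envy-cycle elimination algorithm of Bhaskar, Sricharan, and Vaish for doubly monotone valuations and show that under the common-threshold assumption its correctness analysis carries over. The key observation enabling this is that with a common threshold $\theta_j$ for each type $j$, every agent agrees on whether an incremental item of type $j$ is a good (current count strictly below $\theta_j$) or a chore (current count at least $\theta_j$). This induces a globally consistent partition of each type's items into a goods pool of size $\min(m_j, n\theta_j)$ and a chores pool of size $\max(0, m_j - n\theta_j)$, recovering precisely the structure exploited by doubly monotone valuations.

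With this partition fixed, I would run the two-phase algorithm. Before the chores phase begins, I would pre-fill each agent to exactly $\theta_j$ items of type $j$ for every type $j$ with $m_j \geq n\theta_j$ (which is feasible since in that case the goods pool for type $j$ has size exactly $n\theta_j$); this is an envy-free partial allocation under single-peakedness because all agents receive the same bundle contribution from each type. Phase one then distributes the chores via top-trading envy-cycle elimination as in BSV21, terminating with a partial allocation that is EF1 over the chores assigned so far and whose top-trading envy graph is acyclic. Phase two allocates the remaining goods (items of type $j$ for which $m_j < n\theta_j$, and any un-pre-filled part of types with $m_j \geq n\theta_j$) one at a time, each time giving the item to an unenvied source of the envy graph and breaking any cycle by the standard cycle-swap; because no agent's type-$j$ count ever exceeds $\theta_j$ during this phase, each added item retains non-negative marginal value for every agent.

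The final verification of EF1 uses additivity across types together with the per-type single-peakedness, noting that our goods-versus-chores designation is consistent with every agent's marginal response so that the BSV21 guarantees transfer step-by-step. The main obstacle is arguing that the pre-filling step does not disrupt the envy-graph invariants required by phase one, and that cycle-swapping during phase two does not push any agent's type-$j$ count past $\theta_j$. I expect the first point to be immediate, because pre-filling is perfectly symmetric across agents and hence leaves the envy graph empty; for the second, I would observe that cycle-swaps reassign entire bundles rather than individual items, so if no bundle yet exceeds $\theta_j$ in type $j$, neither does any bundle after a swap, and the common-threshold assumption makes the single-peaked analysis identical across all agents.
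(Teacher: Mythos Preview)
Your high-level intuition---run the BSV21 two-phase algorithm and exploit the common threshold---matches the paper's approach, but the assertion that ``the BSV21 guarantees transfer step-by-step'' is exactly where the argument fails. The paper runs the same algorithm (goods-to-sources first, chores-to-sinks second) and explicitly exhibits an instance where the standard inductive step breaks: two identical agents, one item $x$ of type~1 with value~5, and three items of type~2 with values $10,9,-10$ at counts $1,2,3$ (so $\theta_2=1$). After Phase~1 the allocation is $A_1=\{a\}$, $A_2=\{x,b\}$, agent~1 envies agent~2, and removing $b$ resolves it. In Phase~2 the remaining type-2 item $c$ goes to the sink (agent~2); now removing $b$ from $A_2$ \emph{increases} its value from $14$ to $15$, so the previously-witnessing item no longer works. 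This is precisely because SSP valuations are not additive within a type: once more items of $x$'s type have been added, removing $x$ need not lower the bundle's value. The paper's actual proof replaces the naive induction with a more careful argument that tracks the last allocation $\hat A$ at which the envy was absent, shows the resolving item there must be of a type \emph{different} from the one currently being allocated, and then invokes additivity \emph{across} types. Your proposal does not supply this.

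There are two secondary gaps as well. First, your claim that during the goods phase ``no agent's type-$j$ count ever exceeds $\theta_j$'' does not follow from giving items to sources of $G_A$: an unenvied agent can continue receiving type-$j$ items past $\theta_j$, at which point they become chores. The paper handles this by restricting to sources of the subgraph on $N_j=\{i:a_{ij}<\theta_j\}$. Second, you invert the phase order (chores before goods), and this creates a difficulty your proposal raises but does not settle: after the chores phase, an EF1 witness may be a chore in the \emph{envious} agent's bundle, and such a witness is not preserved under the ordinary envy-cycle swap you use in the goods phase (the envious agent's bundle is replaced wholesale).
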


We however present an example here to show that while the algorithm is the same as for doubly monotone valuations, the analysis of the algorithm is more complicated. The complications in the example carry over to the three agent case with non-identical thresholds as well.

\paragraph*{An example with identical thresholds}

There are two identical agents, $1$ and $2$ with identical valuations, and two types of items. There is a single item $x$ of type $1$, and $v(x) = 5$. There are three items $a$, $b$, and $c$ of type $2$. Each agent has value $10$ if she gets a single item of type 2, value $9$ if she gets two, and value $-10$ if she gets all three. Thus, the common threshold for the second type is 1.

After phase 1, the allocation is $A_1 = \{a\}$, and $A_2 = \{x,b\}$, and $1$ envies $2$. Note that removing item $b$ removes this envy.

The algorithm in Phase 2 then assigns $c$ to agent $2$. Then $v(A_1) = 10$ and $v(A_2) = 14$. However, note that removing item $b$ --- the item which upon removal, previously eliminated the envy --- from $A_2$ does not eliminate envy in the current allocation. In fact, $v(\{x,c\}) = 15 > v(\{x,b,c\})$. Hence, the following proof by induction used earlier in the doubly monotone case --- that argues that since there existed an item $x$ which upon removal, eliminated envy earlier, the removal of $x$ eliminates envy in the current allocation --- does not work.

The trouble is that within an item type, valuations are not additive. For some allocation $A$, if $v_1(A_1) < v_1(A_2)$ and removing item $x$ of type $j$ from $A_2$ reduced agent $1$'s value for $A_2$ by $\delta$, then since the valuation for further items of type $j$ is not additive, if we assign another item of type $j$ to agent 2 --- \emph{even if this item has negative marginal value for all agents and bundles} --- removing the earlier item $x$ may not reduce the value by $\delta$. In fact, removal of $x$ may even increase the valuation, as seen in the example. Thus while assigning items, we have to be very careful to avoid this situation. In particular, the second subcase of Lemma~\ref{lem:identThresh2} (in the appendix) is significantly more complicated to deal with this scenario.
\qed

Secondly, for the general case where agents may have different thresholds for a type, we show the existence of EF1 allocations for three agents. Here, the algorithm considerably differs from the algorithm for doubly-monotone valuations. While it is still a two-phase algorithm, the two phases are very carefully crafted for the special case of three agents. 

\begin{algorithm}[ht]
\begin{algorithmic}[1]
\STATE Initialise $A_i$ to $(0, 0, ..., 0)$ for each agent $i \in \{1,2,3\}$
\STATE Initialise $\bar{m}_j \leftarrow \lfloor m_j/3 \rfloor$, $\hat{m}_j \leftarrow m_j \bmod 3$, $N_j \leftarrow \{i:\  \theta_{ij} > \bar{m}_j\}$ for each item type $j \in [t]$ \COMMENT{$\hat{m}_j \in \{0,1,2\}$ is the number of items left over after each agent is assigned $\lfloor m_j/3\rfloor$ items, and $N_j$ is the set of agents that still want more items.}
\FORALL[Phase 1] {$j\ \in\ [t]$ s.t. $|N_j| \ge \hat{m}_j$}
    \WHILE {$G_A$ has a cycle $C$}
        \STATE $A \leftarrow A_C$ \label{line:sspass1} \COMMENT{Swap bundles along $C$}
    \ENDWHILE
    \STATE Let $1$, $2$, $3$ be the topological order of the agents in $G_A$
    \STATE $a_{ij} \leftarrow \bar{m}_j$ for each agent $i$ \label{line:sspass2}
    \FOR {$i=1$ to $3$}
        \IF {$i \in N_j$ and $\hat{m}_j > 0$} 
            \STATE $a_{ij} \leftarrow a_{ij} + 1$ \label{line:sspass3} \COMMENT{If $i$ wants an item, and there is an item remaining, assign it}
            \STATE $\hat{m}_j \leftarrow \hat{m}_j - 1$
        \ENDIF
    \ENDFOR
\ENDFOR
\FORALL[Phase 2, $\hat{m}_j \in \{1,2\}$, $|N_j| \in \{0,1\}$] {$j\ \in\ [t] $ s.t. $\hat{m}_j > |N_j|$} 
    \STATE $C \leftarrow $ any cycle in  $T_A$, $A \leftarrow A_C$ \label{line:sspass4} \COMMENT{Swap bundles along top-trading cycle $C$}
    \STATE $a_{ij} \rightarrow \bar{m}_j$ for each agent $i$ \label{line:sspass5}
    \IF {$|N_j| = 0$} 
        \WHILE {$\hat{m}_j > 0$} 
            \STATE Choose a sink $k$ in the graph $G_A$
            \STATE $a_{kj} \leftarrow a_{kj} + 1$, $\hat{m}_j \leftarrow \hat{m}_j - 1$ \label{line:sspass18}
            \STATE $C \leftarrow $ any cycle in  $T_A$, $A \leftarrow A_C$ \label{line:sspass7} \COMMENT{Swap bundles along top-trading cycle $C$}
        \ENDWHILE
    \ELSIF {$|N_j| = 1$}
        \STATE Let $N_j = \{k\}$, and let $\ell$ be a sink in $G_A$.
        \STATE $a_{kj} \leftarrow a_{kj}+1$
        \label{line:sspass8}
        \STATE $a_{\ell j} \leftarrow a_{\ell j} + 1$ \label{line:sspass9}
    \ENDIF
\ENDFOR
\STATE Return $A$
\caption{An EF{1} algorithm for separable single-peaked valuations with 3 agents}
\label{alg:S3EF1}
\end{algorithmic}
\end{algorithm}

\begin{restatable}{theorem}{rstsspthree}
For SSP valuations with 3 agents, there always exists an EF1 allocation.
\label{thm:sspthree}
\end{restatable}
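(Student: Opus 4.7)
The plan is to prove by induction on the types processed by Algorithm~\ref{alg:S3EF1} that the partial allocation $A$ is EF1 after each outer iteration. The base case (empty allocation) is trivial. For the inductive step, I rely on two standing facts: (i) swapping bundles along any cycle in $G_A$ or $T_A$ preserves EF1, since each agent in the cycle ends up with a weakly preferred bundle and all other bundles are unchanged; and (ii) adding the \emph{same} number of type-$j$ items to every bundle preserves all pairwise envy comparisons, because for any agent $i$ the value $v_i(A_k)$ changes by the same constant $v_{ij}(\bar m_j)-v_{ij}(0)$ independently of $k$. Facts (i)--(ii) dispose of Lines~\ref{line:sspass1},~\ref{line:sspass2},~\ref{line:sspass4},~\ref{line:sspass5}, and~\ref{line:sspass7}, leaving only the ``extras'' to analyze.

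For Phase 1 ($|N_j|\ge\hat m_j$), the extras on Line~\ref{line:sspass3} go to the first $\hat m_j$ agents of $N_j$ in the topological order of $G_A$. For a post-extra pair $(i,k)$: if both or neither received an extra, the relative values of $A_i$ and $A_k$ for $i$ are unchanged from the intermediate state and EF1 survives; if $i$ alone received an extra, $v_i(A_i)$ can only rise, so envy toward $k$ only shrinks. The crucial case is when $k$ receives an extra and $i$ does not with $i\in N_j$: the greedy rule forces $i$ to be strictly later than $k$ in the topological order, so $i$ did not envy $k$ in the intermediate state; removing the type-$j$ extra from $A_k$ returns $a_{kj}$ to $\bar m_j$, restoring the intermediate comparison and yielding EF1. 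If instead $i\notin N_j$, single-peakedness gives $v_{ij}(\bar m_j+1)\le v_{ij}(\bar m_j)$, so $v_i(A_k)$ does not grow and envy decreases.

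For Phase 2 ($|N_j|<\hat m_j$) I split on $|N_j|\in\{0,1\}$. For $|N_j|=0$, every extra is a chore for everyone. Assigning it to a sink $k$ of $G_A$ (Line~\ref{line:sspass18}) can only lower $v_m(A_k)$ for every $m$, so envy toward $k$ shrinks; the only new envy originates at $k$ and is killed by removing the chore, which restores $k$'s sink valuation. The top-trading swap on Line~\ref{line:sspass7} preserves EF1 and re-establishes a DAG structure for the next iteration. For $|N_j|=1$ with $N_j=\{k\}$ and $\ell$ a sink, the extra on $k$ (Line~\ref{line:sspass8}) only improves $k$'s self-value, and for any $m\ne k$ the value $v_m(A_k)$ drops since $m\notin N_j$; the extra at $\ell$ (Line~\ref{line:sspass9}) is handled by the same sink argument. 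In the corner case $\ell=k$, removing one extra from $A_k$ brings $a_{kj}$ to $\bar m_j+1\le\theta_{kj}$, where $v_{kj}(\bar m_j+1)\ge v_{kj}(\bar m_j)$, restoring $k$'s intermediate comparison with every other agent.

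The main obstacle I expect is guaranteeing that a sink in $G_A$ always exists when the algorithm needs one. For three agents this hinges on the observation that if $T_A$ is acyclic then so is $G_A$ (any cycle in $G_A$ induces a cycle in $T_A$ by following each agent's most-preferred edge from within the cycle), together with the fact that a single top-trading swap on three agents always leaves $T_A$ acyclic --- a 3-cycle swap gives every agent its top bundle and empties $T_A$, while a 2-cycle swap leaves only the third agent with possible outgoing edges, which cannot form a cycle. A secondary subtlety is propagating the invariant across many type iterations, since the envy graph inherited at type $j$ already reflects all previously processed types; one must check that the per-iteration case analysis composes correctly via facts (i)--(ii).
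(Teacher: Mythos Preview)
Your overall strategy matches the paper's: induct over types, dispose of cycle swaps and the uniform $\bar m_j$-assignment via your facts (i)--(ii), and then case-analyze the extras. The paper organizes this as two lemmas (one per phase) but the substance is the same.

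There is, however, a concrete error in your sink-existence argument. The claim ``if $T_A$ is acyclic then so is $G_A$'' is false even for three agents. Take $G_A = \{1 \to 2,\ 2 \to 1,\ 1 \to 3,\ 2 \to 3\}$ with agent $3$ a sink, and suppose both $1$ and $2$ strictly most-prefer $A_3$; then $T_A = \{1 \to 3,\ 2 \to 3\}$ is acyclic while $G_A$ contains the $2$-cycle $1 \leftrightarrow 2$. Your parenthetical justification (``follow each agent's most-preferred edge from within the cycle'') fails precisely because the most-preferred edge can leave the cycle. The statement you actually need, and which \emph{is} true, is weaker: $T_A$ acyclic implies $G_A$ has a sink. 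Indeed, if $G_A$ had no sink then every agent would be envious, hence every agent would have an outgoing $T_A$-edge, and a functional graph on three vertices must contain a cycle. Combined with your (correct) observation that a single top-trading swap on three agents leaves all swapped agents envy-free, this suffices to guarantee the sinks required by Lines~\ref{line:sspass18} and~\ref{line:sspass9}.

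A second, softer gap: in several places you argue that ``envy shrinks'' and stop, but a smaller envy margin is not by itself EF1 --- you still need a witnessing item. The paper's device here, which you should make explicit, is to always reach back to the allocation $\hat A$ before any type-$j$ items were assigned: the inductive EF1 witness $x$ there is necessarily of some type $j' \ne j$, and additivity across types then lets $x$ absorb all subsequent type-$j$ changes to both bundles. This is exactly what your fact (ii) sets up, but you should invoke it in the $i \notin N_j$ subcase of Phase~1 and in the Line~\ref{line:sspass8}--\ref{line:sspass9} analysis of Phase~2, rather than resting on ``envy decreases''.
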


For the algorithm, given a partial allocation $A$, as in prior work, we let $G_A = (V, E)$ denote the envy graph, where $V = N$ and $(i, k)\ \in\ E$ if agent $i$ envies agent $k$. $T_A$ is the top-trading envy graph, a subgraph of $G_A$ with a directed edge $(i,k)$ if $v_i(A_{k})$ =$\max_{i' \in N} v_i(A_{i'})$ $> v_i(A_i)$. Given a directed cycle $C$ in $G_A$ or $T_A$, $A_C$ is the allocation obtained by giving each agent in $C$ the bundle of the agent they envy in $C$.

The proof of Theorem~\ref{thm:sspthree} follows immediately from the next two lemmas. In order to show that the allocation remains EF1 after the allocation of items of type $j$, we use $\hat{A}$ to denote the allocation prior to type $j$, i.e., after allocation of items of the previous type. At a high level, the proof shows that if agent $i$ envies agent $k$ after the allocation of an item of type $j$, then either removing the item just allocated resolves the envy, or that agent $i$ envied agent $k$ in allocation $\hat{A}$ as well. In the latter case, since allocation $\hat{A}$ is EF1 by induction, there is some item $x$ which upon removal resolves the envy. We show that, by additivity of the valuation across types, removing this item $x$ resolves the current envy as well. As demonstrated in the example above, it is crucial that this item $x$ is not an item of type $j$, else we cannot assume additivity.

\begin{lemma}
    Let $A$ be the (partial) allocation at any point in Phase 1 (i.e., in the first for all loop). Then if agent $i$ envies agent $k$, there exists some item $x \in A_{k}$ such that $v_i(A_i) \ge v_{i}(A_{k} \setminus \{x\})$. 
\end{lemma}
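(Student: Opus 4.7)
The plan is to prove the claim by induction on the sequence of elementary operations performed by Algorithm~\ref{alg:S3EF1} during Phase~1: (i)~a single bundle-swap along a directed cycle in $G_A$ (Line~\ref{line:sspass1}); (ii)~the simultaneous assignment of $\bar{m}_j$ type-$j$ items to every agent (Line~\ref{line:sspass2}); and (iii)~the assignment of one extra type-$j$ item to an agent $\ell \in N_j$ (Line~\ref{line:sspass3}). The inductive invariant is precisely the statement of the lemma, with the empty allocation as the base case. I assume without loss of generality that $v_{ij}(0) = 0$ for all $i, j$ (shifting each $v_{ij}$ by a constant preserves envy and EF1). I write $\hat{A}$ for the allocation at the top of the for-loop iteration for type~$j$, noting that $\hat{A}$ contains no items of type~$j$.

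Operations (i) and (ii) are standard. For (i), each cycle agent strictly gains in their own valuation while the multiset of bundles is preserved, so any EF1 witness for the pre-swap allocation transfers after re-indexing. For (ii), since every agent simultaneously gains the same type-$j$ contribution $v_{ij}(\bar{m}_j)$, separability implies $v_i(A_i) - v_i(A_k)$ is unchanged, and any pre-existing witness $y$ (necessarily of type $j' \ne j$) still satisfies the EF1 inequality by additivity.

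The bulk of the work is operation (iii). Let $B$ and $A$ denote the allocations just before and after this step, and let $x$ be the newly added type-$j$ item, so $A_\ell = B_\ell \cup \{x\}$ and $A_m = B_m$ for $m \ne \ell$. Pairs not involving $\ell$ are immediate. For $i = \ell$, membership $\ell \in N_j$ yields $\theta_{\ell j} \ge \bar{m}_j + 1$, so $v_{\ell j}$ does not decrease upon adding $x$; consequently $\ell$'s envy of any other agent only weakens and the witness from $B$ persists. For a pair $(i,\ell)$ with $i \ne \ell$, I handle three subcases. First, if $i \notin N_j$, then $\theta_{ij} \le \bar{m}_j$ places $\bar{m}_j+1$ in the non-increasing portion of $v_{ij}$, so $v_i(A_\ell) \le v_i(B_\ell)$ and the $B$-witness survives. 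Second, if $i \in N_j$ and $i$ does not envy $\ell$ in $B$, then $x$ itself is a valid witness since $v_i(A_\ell \setminus \{x\}) = v_i(B_\ell) \le v_i(B_i) = v_i(A_i)$.

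The main obstacle is the third subcase: $i \in N_j$ and $i$ envies $\ell$ already in $B$. The key insight is that the algorithm's use of the topological order forces $i$ to have already received its own extra, so that the two type-$j$ contributions cancel. Since $\ell$ has not yet received its extra, $\ell$ holds $\bar{m}_j$ items of type $j$ in $B$, while $i$ holds either $\bar{m}_j$ or $\bar{m}_j+1$; in either case, $i \in N_j$ ensures that $v_{ij}$ is non-decreasing through $\bar{m}_j+1$, so the type-$j$ contribution to $v_i(B_i) - v_i(B_\ell)$ is nonnegative, and thus envy in $B$ forces envy in $\hat{A}$. This gives an edge $i \to \ell$ in the envy graph of $\hat{A}$, placing $i$ before $\ell$ in the topological order fixed at the start of the iteration. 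Because extras are assigned in this order and $\hat{m}_j > 0$ when $\ell$'s turn arrived, $\hat{m}_j$ was also positive on $i$'s turn, and since $i \in N_j$, agent $i$ received its extra earlier; hence $a_{ij} = a_{\ell j} = \bar{m}_j + 1$ in $A$. Taking $y$ to be the EF1 witness for $(i,\ell)$ in $\hat{A}$ (which exists by the inductive hypothesis, lies in $\hat{A}_\ell \subseteq A_\ell$, and is of type $j' \ne j$ since $\hat{A}$ has no type-$j$ items), separability yields $v_i(A_i) - v_i(A_\ell \setminus \{y\}) = v_i(\hat{A}_i) - v_i(\hat{A}_\ell \setminus \{y\}) \ge 0$, closing the induction.
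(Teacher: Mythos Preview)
Your proof is correct in spirit and follows the same inductive scheme as the paper's, but your first subcase for operation~(iii) has a small but genuine gap. You write that since $i \notin N_j$ gives $v_i(A_\ell) \le v_i(B_\ell)$, ``the $B$-witness survives.'' The inequality $v_i(A_\ell) \le v_i(B_\ell)$ does \emph{not} imply $v_i(A_\ell \setminus \{y\}) \le v_i(B_\ell \setminus \{y\})$ for an arbitrary witness $y \in B_\ell$. If $y$ happens to be a type-$j$ item (and $B_\ell$ does contain $\bar{m}_j$ of them), then $A_\ell \setminus \{y\}$ and $B_\ell$ have identical type profiles, so $v_i(A_\ell \setminus \{y\}) = v_i(B_\ell) > v_i(B_i) = v_i(A_i)$, and the witness fails. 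This is exactly the non-additivity pitfall the paper's example highlights.

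The fix is the one you yourself use in the third subcase: go back to $\hat{A}$ rather than $B$. Since $i \notin N_j$, agent $i$ never receives an extra, so $b_{ij} = \bar{m}_j$; and $\ell$ has not yet received its extra, so $b_{\ell j} = \bar{m}_j$ as well. Hence $v_i(B_i) - v_i(B_\ell) = v_i(\hat{A}_i) - v_i(\hat{A}_\ell)$, envy in $B$ forces envy in $\hat{A}$, and the $\hat{A}$-witness $y$ (necessarily of type $j' \ne j$) gives $v_i(A_i) - v_i(A_\ell \setminus \{y\}) = \bigl[v_i(\hat{A}_i) - v_i(\hat{A}_\ell \setminus \{y\})\bigr] + \bigl[v_{ij}(\bar{m}_j) - v_{ij}(\bar{m}_j+1)\bigr] \ge 0$. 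This is precisely what the paper does in its case 2(b). With that correction, your argument and the paper's are essentially identical; your step-by-step treatment of each extra is a slightly finer induction than the paper's all-extras-at-once comparison, but the content is the same.
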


\begin{proof}
The proof is by induction. Initially, each $A_i = (0, 0, ..., 0) $ and the statement clearly holds. We will consider each line where the allocation is changed in Phase 1 and show that the statement holds.

\paragraph*{Line~\ref{line:sspass1} (cycle swapping)} The allocation remains EF1 due to the induction hypothesis. As stated, let $\hat{A}$ be the partial allocation before the cycle swapping step. By the induction hypothesis, the statement holds for $\hat{A}$. Let agent $i$ envy agent $k$ in $A$. If $i \notin C$, then the statement trivially holds due to the induction hypothesis (and since bundles only change hands, but do not change, in the cycle swap). If $i \in C$, $i$'s value for her bundle increases, and due to the induction hypothesis, the envy towards $k$'s bundle remains EF1. 

Note that when all cycles are resolved, we have an acyclic graph (DAG) and hence can arrange the vertices (agents) in a topological order.

\paragraph*{Line~\ref{line:sspass2} (assigning $\bar{m}_j$ items to each agent)} Since we are giving an equal number of items of the same type to each agent, and the valuations are additive across different types, for an agent the relative value of any other bundle remains the same. Hence if removing an item $x$ from an envious agent resolved the envy earlier, this continues to be the case.

\paragraph*{Line~\ref{line:sspass3} (adding an item to an agent in $N_j$)} Let $A$ be the allocation after Phase 1, when Line~\ref{line:sspass3} finishes executing. Suppose agent $i$ envies agent $k$ in $A$. If neither $i$ nor $k$ are allocated in Line~\ref{line:sspass3}, then the statement trivially holds from the previous paragraph. We consider at the other three cases separately. In each case, we will show that $i$ envies $k$ in allocation $\hat{A}$ as well, and if removing item $x$ from $\hat{A}_i$ resolved the envy earlier, this continues to hold.

\begin{enumerate}
    \item $i$ is allocated, $k$ is not. In this case, we know that $i \in N_j$ and hence $\theta_{ij} > \bar{m}_j$. Then 
    
    \begin{align*}
    v_i(A_i) = v_i(\hat{A}_i) + v_{ij}(\bar{m}_j + 1) < v_i(A_k) = v_i(\hat{A}_k) + v_{ij}(\bar{m}_j) \, .
    \end{align*}
    
    But since $v_{ij}(\bar{m}_j + 1) > v_{ij}(\bar{m}_j)$, we have $v_i(\hat{A}_i) < v_i(\hat{A}_k)$, and hence $i$ envies $k$ in $\hat{A}$ as well. 
    
    Now by the induction hypothesis, there exists $x \in \hat{A}_i$ such that $v_i(\hat{A}_i \setminus \{x\}) \ge v_i(\hat{A}_k)$. Then clearly, since the valuation is additive across types, 
    
    \begin{align*}
    v_i(A_i \setminus \{x\}) = v_i(\hat{A}_i \setminus \{x\}) + v_{ij}(\bar{m}_j + 1) \ge v_i(\hat{A}_k) + v_{ij}(\bar{m}_j) = v_i(A_k)\, .
    \end{align*}
    
    \item $i$ is not allocated, $k$ is. Since $i$ is not allocated an item, and we allocate items in Line~\ref{line:sspass3} in DAG order, either $i \notin N_j$ or $v_i(\hat{A}_i) \geq v_i(\hat{A}_k)$.
    
    \begin{enumerate}
    \item Suppose $v_i(\hat{A}_i) \geq v_i(\hat{A}_k)$. Then $v_i(A_i) = v_i(\hat{A}_i) + v_{ij}(\bar{m}_j) \geq v_i(\hat{A}_k) + v_{ij}(\bar{m}_j)$, and hence removing the last item of type $j$ given to agent $k$ in Line~\ref{line:sspass3} resolves the envy.
    
        \item Suppose $v_i(\hat{A}_i) < v_i(\hat{A}_k)$, and $i \notin N_j$. Then by the induction hypothesis, there exists $x \in \hat{A}_i$ so that $v_i(\hat{A}_i \setminus \{x\}) \ge v_i(\hat{A}_k)$. Since $i \notin N_j$, $v_{ij}(\bar{m}_j) \ge v_{ij}(\bar{m}_j + 1)$. Then

        \[
        v_i(A_i \setminus \{x\}) = v_i(\hat{A}_i \setminus \{x\}) + v_{ij}(\bar{m}_j) \ge v_i(\hat{A}_k) + v_{ij}(\bar{m}_j + 1) = v_i(A_k).
        \]
    \end{enumerate}
    
    \item Both $i$, $k$ are allocated. Then for $i$, the value of both bundles goes up additively by $v_{ij}(\bar{m}_j + 1)$. By the induction hypothesis, the statement holds.
\end{enumerate}
\end{proof}

\begin{lemma}
    Let $A$ be the (partial) allocation at any point in Phase 2 (i.e., in the second for all loop). Then if agent $i$ envies agent $k$, there exists some item $x \in A_i \cup A_{k}$ such that $v_i(A_i \setminus \{x\}) \ge v_{i}(A_{k} \setminus \{x\})$. 
\end{lemma}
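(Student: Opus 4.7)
The plan is to induct on the modifications made to the allocation during Phase~2 processing of a single type $j$, starting from the EF1 allocation $\hat{A}$ present at the start of this iteration of the outer \texttt{for all} loop. By the overall induction on types processed so far, $\hat{A}$ is EF1 and contains no items of type $j$, so every item witnessing EF1 in $\hat{A}$ is necessarily of some type other than $j$. I will show that every subsequent modification preserves EF1, repeatedly exploiting additivity across types: whenever the type-$j$ content of the two bundles being compared shifts in a compatible way (either identically, or in a direction that helps the envious agent), any witnessing item from $\hat{A}$ continues to work.

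The routine steps are as follows. A top-trading cycle swap (Lines~\ref{line:sspass4} and~\ref{line:sspass7}) places every cycle participant on her most-preferred bundle, eliminating her outgoing envy; for agents outside the cycle the bundles as sets are unchanged, only reassigned, so their envies are inherited with the same witnessing items. Uniform assignment of $\bar{m}_j$ items on Line~\ref{line:sspass5} shifts each agent $i$'s valuation of every bundle by the same additive constant $v_{ij}(\bar{m}_j)$, preserving envy relations and their witnesses. On Line~\ref{line:sspass18} (in the $|N_j|=0$ branch), the sink $k$ satisfies $\theta_{kj}\le\bar{m}_j$, so both $v_k(A_k)$ and every $v_i(A_k)$ weakly decrease: envies towards $k$ can only weaken, while any new envy by $k$ is resolved by removing the just-added item (which restores $A_k$ to the bundle for which $k$ was a sink). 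On Line~\ref{line:sspass8} (in the $|N_j|=1$ branch), the agent $k\in N_j$ satisfies $\theta_{kj}>\bar{m}_j$, so $v_k(A_k)$ weakly increases and every $v_i(A_k)$ for $i\ne k$ weakly decreases, so all envies involving $k$ only shrink.

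The main obstacle is Line~\ref{line:sspass9} in the $|N_j|=1$ branch when the chosen sink $\ell$ differs from $k$: here $k$'s valuation of $A_\ell$ weakly \emph{increases} (since $\theta_{kj}>\bar{m}_j$), threatening fresh envy from $k$ to $\ell$. The resolution is a shift symmetry. After Line~\ref{line:sspass9}, both $A_k$ and $A_\ell$ contain exactly $\bar{m}_j+1$ items of type $j$, so $v_k(A_k)-v_k(A_\ell)=v_k(\hat{A}_k)-v_k(\hat{A}_\ell)$, and the $k$-versus-$\ell$ envy is identical to the one in $\hat{A}$; a witness from $\hat{A}$, which is not of type $j$, continues to witness EF1 in $A$ by additivity. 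For $k$'s envy towards the third agent $r$ (whose bundle has only $\bar{m}_j$ items of type $j$), the non-negative shift $v_{kj}(\bar{m}_j+1)-v_{kj}(\bar{m}_j)$ only favours $k$, so envy in $A$ is at most envy in $\hat{A}$ and the same witness works. For $\ell$'s potential envy towards any other agent, removing the just-added type-$j$ item from $A_\ell$ restores her to the bundle for which she was a sink immediately before Line~\ref{line:sspass9}. The subcase $\ell=k$ is handled as in the sink-addition analysis by removing the item added on Line~\ref{line:sspass9}.

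For the $|N_j|=0$ branch, the \texttt{while} loop may execute twice; each iteration combines the sink-addition argument (with the inductive EF1 guarantee for the allocation at the start of that iteration) and the cycle-swap argument, so EF1 is maintained throughout, regardless of whether the new sink coincides with the agent currently holding the bundle with $\bar{m}_j+1$ type-$j$ items.
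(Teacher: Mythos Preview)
Your proposal is correct and follows essentially the same approach as the paper: induct on the modifications in Phase~2, and for each step argue either that the just-added type-$j$ item resolves the envy (sink case) or that a witness from $\hat{A}$, which is necessarily not of type $j$, continues to work by additivity across types. Your identification of Line~\ref{line:sspass9} with $\ell\neq k$ as the crux, and your shift-symmetry resolution of the $k\to\ell$ envy, match the paper exactly (indeed your write-up of that case is more explicit than the paper's).

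One point worth tightening: for Lines~\ref{line:sspass18} and~\ref{line:sspass8} you write that ``envies towards $k$ can only weaken'' and ``all envies involving $k$ only shrink.'' These statements are true, but by themselves they do not imply that an EF1 witness survives---a decrease in $v_i(A_k)$ does not automatically mean that $v_i(A_k\setminus\{x\})$ also decreases for the old witness $x$. You clearly know the right completion (your plan says ``any witnessing item from $\hat{A}$ continues to work,'' and you execute it carefully in the Line~\ref{line:sspass9} analysis), so just make sure to invoke it explicitly here too: since $i\notin N_j$, both $a_{kj}$ and $a_{kj}-0$ (when $x$ is not of type $j$) are at least $\theta_{ij}$, so the type-$j$ increment can only decrease $v_i(A_k\setminus\{x\})$ as well. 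Also, in your Line~\ref{line:sspass9} case analysis you cover envy from $k$ and from $\ell$ but should add a sentence for envy from the third agent $r$ toward $\ell$ (it follows immediately since $r\notin N_j$).
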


\begin{proof}
    The proof is by induction. Initially, we have the partial allocation from Phase 1 which we proved was EF1 in the previous lemma. We will consider each line where the allocation is changed in Phase 2 and show that the statement holds.

\paragraph*{Lines~\ref{line:sspass4} and~\ref{line:sspass7} (resolving top-trading envy cycles)} Clearly, resolving top-trading envy cycles maintains EF1, since after resolving the cycle (i) agents in the cycle do not envy any other agent, and (ii) for agents outside the cycle, only the ownership of bundles changes, not the bundles themselves.

\paragraph*{Line~\ref{line:sspass5} (assigning $\bar{m}_j$ items to each agent)} The proof is the same as in the previous lemma, with the modification that the item $x$ that upon removal resolves envy does not have to belong to the envious agent.

\paragraph*{Line~\ref{line:sspass18} (assigning items to sinks if $|N_j| = 0$)} Suppose agent $i$ envies agent $k$ in $A$ after an execution of Line~\ref{line:sspass18}. If neither $i$ or $k$ was allocated an item, the allocation is EF1 by induction. If $i$ was just allocated an item, then $i$ was a sink prior to the allocation, and clearly removing the item just allocated to $i$ resolves the envy. Else, suppose $k$ was allocated an item, then $k$ was a sink prior to the allocation. Note that there were either 1 or 2 items of type $j$ remaining after the equipartition, and hence agent $i$ has previously received at most one item of type $j$ in Line~\ref{line:sspass18}. We consider these cases separately.

\begin{enumerate}
    \item Agent $i$ did not previously receive an item of type $j$ in Line~\ref{line:sspass18}. Hence $a_{ij} = \bar{m}_j$. Then let $\hat{A}$ be the allocation prior to the equipartition (i.e., before any items of type $j$ are assigned). Then
    \[
    v_i(A_i) = v_i(\hat{A}_i) + v_{ij}(\bar{m}_j) < v_i(A_k) =  v_i(\hat{A}_k) + v_{ij}(\bar{m}_j + 1) \le v_i(\hat{A}_k) + v_{ij}(\bar{m}_j) \, ,
    \]

    where the last inequality is because $N_j = \emptyset$. Hence, $v_i(\hat{A}_i) < v_i(\hat{A}_k)$, and for some $x \in A_i \cup A_k$, removing $x$ resolves the envy in allocation $\hat{A}$. Then removing $x$ resolves the envy in allocation $A$ as well.
    \item Agent $i$ previously received an item of type $j$ in Line~\ref{line:sspass18}. Hence $a_{ij}=\bar{m}_j+1$, and $a_{kj} = \bar{m}_j+1$. Then
    \[
    v_i(A_i) = v_i(\hat{A}_i) + v_{ij}(\bar{m}_j+1) < v_i(A_k) =  v_i(\hat{A}_k) + v_{ij}(\bar{m}_j + 1) \, .
    \]

    Again, $v_i(\hat{A}_i) < v_i(\hat{A}_k)$, and for some $x \in A_i \cup A_k$, removing $x$ resolves the envy in allocation $\hat{A}$. Then removing $x$ resolves the envy in allocation $A$ as well.
\end{enumerate}

\paragraph*{Line~\ref{line:sspass8} (giving an item to $k \in N_j$)} Since only agent $k$ wants more items after the equipartition (since $|N_j| = 1$), clearly any envy after allocating an item in Line~\ref{line:sspass8}  must have been present in allocation $\hat{A}$ as well. It follows along similar lines to previous proofs, that if removing an item $x$ resolved envy in allocation $\hat{A}$, this is true of the current allocation as well.

\paragraph*{Line~\ref{line:sspass9} (giving the last item to a sink)} As before, let $\hat{A}$ be the allocation before allocating any items of type $j$, and $A$ be the allocation after this line. If agent $\ell$ envies any agent after this line, then clearly removing this last item resolves the envy. Now suppose agent $i$ envies agent $\ell$. If $i \in N_j$, then $i$ also received an item of type $j$ in Line~\ref{line:sspass8}, else $i \notin N_j$. In either case, we can show that $v_i(\hat{A}_i) < v_i(\hat{A}_{\ell})$. As previously, removing the item $x$ that resolves the envy in allocation $\hat{A}$ works for the current allocation as well.
\end{proof}



\section{Non-existence of \efx~Allocations}
\label{sec:non-existEFX}

Given the existence of EF1 allocations, a natural question is whether EFX allocations exist for the valuations studied. B{\'e}rczi et al. refine the definition of EFX for nonmonotone valuations. They define an $\efx^{+}_{-}$ allocation $A$ as one where if agent $i$ envies agent $j$, then this should be resolved by removing any item with a strictly positive marginal value from $A_j$, and any item with a strictly negative marginal value from $A_i$. Further at least one such item must exist. They show that for identical negative Boolean valuations, as well as for positive Boolean valuations, an $\efx^{+}_{-}$ allocation always exists.

More formally, for agent $i$ and subset $S \subseteq M$, let  $M^{+}_{i}(S)$ $=\{x \in S ~|~ v_i(S) > v_i(S \setminus \{x\}) \}$, and $M^{-}_{i}(S)$ $=\{x \in S ~|~ v_i(S) < v_i(S \setminus \{x\})\}$. Then allocation $A$ is $\efx^{+}_{-}$ if, whenever agent $i$ envies agent $j$, the set $M^+_i(A_j) \cup M^-_i(A_i)$ is nonempty, and for every $x \in M^+_i(A_j) \cup M^-_i(A_i)$, $v_i(A_i \setminus \{x\}) \ge v_i(A_j \setminus \{x\})$. To complete the picture, we show that $\efx^{+}_{-}$ allocations do not exist for identical negative trilean valuations, or for separable single-peaked valutions, even with two identical agents and three items of a single type. In fact the same example shows nonexistence for both valuation classes.

\begin{restatable}{theorem}{nonexistefx}
$\efx^{+}_{-}$ allocations may not exist even for two agents with identical negative trilean valuations, or two agents with identical SSP valuations.
\label{theorem:non-existEFXNegTer}
\end{restatable}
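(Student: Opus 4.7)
\medskip

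\noindent\textbf{Proof proposal.} The plan is to exhibit a single instance that falls into both valuation classes and admits no $\efx^{+}_{-}$ allocation, thereby handling both parts of the theorem at once. Take $n=2$ identical agents and $M=\{a,b,c\}$ treated as three items of a single SSP type, and define the common valuation by $v(S)=0$ if $|S|=0$, $v(S)=1$ if $|S|=1$, and $v(S)=-1$ if $|S|\in\{2,3\}$.

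First I will verify that this instance lies in both classes. It is negative trilean since every subset value lies in $\{-1,0,1\}$. It is SSP (with one type) with threshold $\theta=1$: the value profile $0,1,-1,-1$ is weakly increasing on $\{0,\dots,\theta\}$ and weakly decreasing on $\{\theta,\dots,3\}$.

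Next I will check all allocations (two, after quotienting by the symmetry between the two identical agents). For the $(3,0)$ allocation $A=(\{a,b,c\},\emptyset)$, agent $1$ envies agent $2$ since $v_1(A_1)=-1<0=v_1(A_2)$; the set $M^{+}_{1}(A_2)$ is empty because $A_2=\emptyset$, and $M^{-}_{1}(A_1)$ is empty because $v(\{a,b,c\}\setminus\{x\})=v(2)=-1=v(3)=v(A_1)$ for every $x\in A_1$, so no item has strictly negative marginal. Thus the non-emptiness requirement of $\efx^{+}_{-}$ fails. For the $(2,1)$ allocation $A=(\{a,b\},\{c\})$, agent $1$ envies agent $2$ since $-1<1$; the unique item $c\in A_2$ lies in $M^{+}_{1}(A_2)$, yet $v_1(A_1\setminus\{c\})=v_1(A_1)=-1<0=v(\emptyset)=v_1(A_2\setminus\{c\})$, so the $\efx^{+}_{-}$ comparison fails on $x=c$. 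The remaining two allocations are symmetric copies of these, so no allocation is $\efx^{+}_{-}$.

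The main conceptual point, and the reason the example is economical, is that the single-peaked shape $0,1,-1,-1$ forces two different obstructions simultaneously: the flat $-1$ tail in the $(3,0)$ case kills the required set $M^{+}_{1}(A_2)\cup M^{-}_{1}(A_1)$ outright, while the sharp peak at $1$ in the $(2,1)$ case guarantees that the only positive-marginal item cannot resolve the envy upon removal. There is no routine obstacle beyond carefully verifying the four cases using the definitions of $M^{+}_{i}$ and $M^{-}_{i}$; the witness valuation is the crux.
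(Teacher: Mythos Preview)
Your proposal is correct and matches the paper's own proof: both use the same three-item, two-agent instance with valuation $0,1,-1,-1$ and verify the two non-symmetric allocation shapes fail $\efx^{+}_{-}$ for the same reasons (empty $M^{+}\cup M^{-}$ in the $(3,0)$ case, and the positive-marginal item $c$ failing the removal test in the $(2,1)$ case). Your write-up is slightly more explicit in checking membership in both valuation classes and in spelling out the marginal-set computations, but the argument is the same.
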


\begin{proof}
Consider an instance with three items and two identical agents. For $S \subseteq M$, $v(S) = 0$ if $S = \emptyset$, $1$ if $|S| = 1$, and $-1$ if $|S| \ge 2$. This valuation is clearly both negative trilean and separable single-peaked with a single type. 

To show non-existence of $\efx^{+}_{-}$ there are two cases to consider: (i) agent $1$ gets nothing, and (ii) agent $1$ gets one item. In case (i), since $v(A_1)=0$ and $v(A_2)=-1$, agent $2$ envies agent $1$. However $M^{+}_{2}(A_1) \cup M^{-}_{2}(A_2) = \emptyset$, and hence this allocation is not $\efx^{+}_{-}$. In case (ii), $v(A_1) = 1$ and $v(A_2) = -1$ and again agent $2$ envies agent $1$. However removing the single item from $A_1$ does not remove the envy, and hence this allocation is also not $\efx^{+}_{-}$.
\end{proof}

\section{Conclusion} Our paper extends work on the existence of EF1 allocations in two new directions ---  for trilean valuations, and for separable single-peaked valuations. Both of these are natural classes of valuations in which to study EF1 allocations. For trilean valuations, it appears likely that our algorithm and the structures introduced may be useful in further extending results on the existence of EF1 to general identical valuations. For example, if $\max$ and $\min$ are the maximum and minimum possible values for any set, then a set $S$ with value $v(S) = \max \rightarrow \min$ or $v(S) = \min \rightarrow \max$ should be assigned immediately, similar to how we treat favourable sets. More directly, just as our algorithm uses algorithms for Boolean $\{0,1\}$ and $\{0,-1\}$ valuations as subroutines, it is possible that algorithms for $k$ distinct valuations use algorithms for $(k-1)$ distinct valuations as subroutines.

The big open question that remains is the existence of EF1 allocations, even for three agents with identical valuations. More immediately, EF1 existence is left open for agents with nonidentical trilean valuations, and for more than three agents with separable single-peaked valuations. 



\bibliographystyle{plainurl}
\bibliography{ef1-allocations}

\newpage

\appendix

\section{Appendix}

\subsection{Trilean Valuations}

\rstab*

\begin{proof}
As stated, since for envy-freeness we only compare relative values of bundles, the proof is immediate if either $a$ or $b$ is nonnegative. Thus we only need to consider the case where both $a$ and $b$ are negative. For this, suppose $0 > a > b$. Create a trilean instance $(N,M,\mathcal{V}')$ where we set $v'(S) = 1$ if $v(S) = a$, $v'(S) = 2$ if $v(S) = b$, and $v'(S) = 0$ if $v(S) = 0$. Note that if $v(S) > v(T)$ for $S, T \subseteq M$, then $v'(S) < v'(T)$. Let $A$ be an EF1 allocation in $(N,M,\mathcal{V}')$. We will show that $A$ is an EF1 allocation for the original instance as well. Suppose that $i$ envies $j$, thus $v(A_i) < v(A_j)$. Then $v'(A_i) > v'(A_j)$, and since allocation $A$ is EF1 for valuations $\mathcal{V}'$, there exists some item $x \in A_i \cup A_j$ so that $v'(A_i \setminus \{x\}) \le v'(A_j \setminus \{x\})$. It follows immediately that if $v(A_i) < v(A_j)$, for some item $x \in A_i \cup A_j$,  $v(A_i \setminus \{x\}) \ge v(A_j \setminus \{x\})$, resolving the envy.
\end{proof}


\subsection{Negative Trilean Valuations}

We present proofs of results missing from the main paper.

\rstefviolations*

\begin{proof}
For Figure~\ref{fig:ef1violations}, we will show that if two sets don't have an edge between them, there cannot be an EF1 violation for agents in those sets.

\paragraph*{$\fav$ has no edge.} ($-1 \rightarrow 1$ or $1 \rightarrow -1$) Consider $i \in \fav$ so that $v(A_i) = 1 \rightarrow -1$. Then $i$ does not envy any agent, and any envy towards $i$ can be eliminated by removing an element $x$ so that $v(A_i \setminus \{x\}) = -1$. Hence $i$ is mutually EF1 with every other agent. Similarly, if $v(A_i) = -1 \rightarrow 1$, then no agent envies $i$, and $i$'s envy can be eliminated by removing an item $x$ from $A_i$ so that $v(A_i \setminus \{x\}) = 1$. Hence again, $i$ is mutually EF1 with every other agent.

\paragraph*{$\flex^+$, $\res^+$, $\bad^+$ form an independent set.}  ($0 \rightarrow 1$, $1 \rightarrow 0$, $1 \rightrightarrows 1$) Consider agents $i$, $j$ in $\flex^+ \cup \res^+ \cup \bad^+$. If neither $i$ nor $j$ is in $\flex^+$, then $i$ and $j$ have the same value and are mutually envy-free. If $i \in \flex^+$ and $j \in \res^+ \cup \bad^+$, then $i$ envies $j$, but there exists $x \in A_i$ so that $v(A_i \setminus \{x\}) = 1$.

\paragraph*{$\flex^-$, $\res^-$, $\bad^-$ form an independent set.} ($0 \rightarrow -1$, $-1 \rightarrow 0$, $-1 \rightrightarrows -1$) The proof is similar to the previous case.

\paragraph*{$\flex^-$, $\res^+$, $\zero$ form an independent set.} ($0 \rightarrow -1$, $1 \rightarrow 0$, $0$) Consider agents $i$, $j$ in $\flex^- \cup \res^+ \cup \zero$. As in the previous case, agents in $\flex^-$ and $\zero$ have the same value $0$ for their bundles, and are mutually envy-free. If $i \in \res^+$ and $j \in \flex^- \cup \zero$, then $j$ envies $i$, but since $v(A_i) = 1 \rightarrow 0$ this envy is resolved by removing an item from $A_i$.

\paragraph*{$\flex^+$, $\res^-$, $\zero$ form an independent set.} ($0 \rightarrow 1$, $-1 \rightarrow 0$, $0$) The proof is similar to the previous case.

\paragraph*{$\flex^-$, $\flex^+$, $\zero$ form an independent set.} ($0 \rightarrow -1$, $0 \rightarrow 1$, $0$) This is because agents in any of these sets have value $0$ and are hence mutually envy-free.

This completes the proof.  
\end{proof}

\rstbinzeromone*

\begin{proof}
If Algorithm~\ref{alg:binZeroMOne} is given identical Boolean $\{0,-1\}$ valuations, then in the resulting allocation, every agent $i$ allocated in the while loop has $v(A_i) = -1 \rightarrow 0$, hence $i \in \res^-$. If Line~\ref{line:negativebooleanif} executes, or $M' = \emptyset$ after the while loop, then every remaining agent has value $0$, satisfying the first condition. Else, the first $n-1$ agents have value $v(A_i) = -1 \rightarrow 0$ for their bundles, satisfying the second condition.  
\end{proof}

\rsttrileanidentviol*

\begin{proof}
Consider the 3 places where Algorithm TrileanNegEF1 calls Algorithm FixEF1ViolationsNeg.
\begin{itemize}
\item In Line~\ref{lastagent}, before FixEF1ViolationsNeg$(A)$ is called, agents $1, \hdots, n-1$ are assigned either favourable or flexible sets. From Lemma~\ref{lem:ef1violations}, a flexible set can only have an EF1 violation with a bad set. Hence if there is an EF1 violation, it must be between either $i \in \flex^+$ and $n \in \bad^-$, or $i \in \flex^-$ and $n \in \bad^+$, as claimed.
\item Claim~\ref{claim:values} describes the allocation if FixEF1ViolationsNeg is called in Line~\ref{fix2}. In Case 1(a), all agents are either favourable, flexible, in $\res^+$, or zero-valued. From Lemma~\ref{lem:ef1violations}, in this case, there is no EF1 violation to resolve. In Case 1(b), all agents except the last are either favourable, flexible, or in $\res^+$, and the last agent has a set that is Boolean $\{0,1\}$-valued. Again from Lemma~\ref{lem:ef1violations}, EF1 violations are only possible if $n \in \bad^+$, and some agents are in $\flex^-$.
\item If  FixEF1ViolationsNeg is called in Line~\ref{fix3}, similar to Line~\ref{fix2}, we  use Claim~\ref{claim:values} and Lemma~\ref{lem:ef1violations} to show that the only EF1 violations possible are of Type 2. 
\end{itemize} 
\end{proof}

\subsection{Positive Trilean Valuations}

We prove the following theorem in this section, showing existence of EF1 for positive trilean valuations.

\rsttrileanpos*

For the proof, we first define sets similar to the case of negative trilean valuations, with minor adjustments to certain definitions.

\begin{enumerate}
\item Unallocated: $\unalloc = \{i : A_i = \emptyset\}$.
\item Zero: $\zero = \{i : v(A_i) = 0\}$.
\item Favourable: $\fav = \{i : v(A_i) = 2 \rightarrow 0 \text{ or } v(A_i) = 0 \rightarrow 2\}$.
\item Flexible: $\flex = \{i: v(A_i) = 1 \rightarrow 0\}$.
\item Resolved: $\res =\{i : v(A_i) = 2 \rightarrow 1\} \text{ and } \res^*=\{i: v(A_i) = 1 \rightrightarrows \{1,2\}\}$.
\item Bad: $\bad = \{i: v(A_i) = 2 \rightrightarrows 2\}$.
\end{enumerate}
It is easy to verify that the above sets are not mutually exclusive, but are exhaustive. As before, we also apply these terms to describe the corresponding sets. Therefore, a set of items $S$ is:
\begin{enumerate}
\item Zero-valued if $v(S) = 0$.
\item Favourable if $v(S) = 2 \rightarrow 0$ or $v(S) = 0 \rightarrow 2$.
\item Flexible if $v(S) = 1 \rightarrow 0$.
\item Resolved if $v(S) = 2 \rightarrow 1$ or $v(S) = 1 \rightrightarrows \{1,2\}$.
\item Bad if $v(S) = 2 \rightrightarrows 2$.
\end{enumerate}

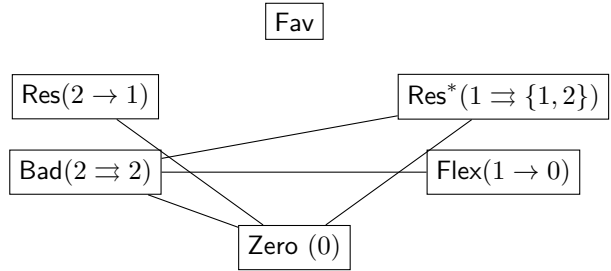
\begin{figure}[!ht]
\vspace{0.1in}
\centering
\begin{tikzpicture}

    \node[draw, rectangle] (X) at (2.75, -2) {$\fav$};
    \node[draw, rectangle] (L1) at (0, -3) {$\res(2 \rightarrow 1)$};
    \node[draw, rectangle] (L2) at (0, -4) {$\bad(2 \rightrightarrows 2)$};
    \node[draw, rectangle] (R1) at (5.5, -3) {$\res^* (1 \rightrightarrows \{1,2\})$};
    \node[draw, rectangle] (R2) at (5.5, -4) {$\flex(1 \rightarrow 0)$};
    \node[draw, rectangle] (Y) at (2.75, -5) {$\zero$ $(0)$};

    \foreach \i in {1,2}
         \draw (L2) -- (R\i);
	\draw (L2) -- (Y);
         \draw (Y) -- (L1);
	\draw (Y) -- (R1);
\end{tikzpicture}
\caption{The edges show possible EF1 violations between different sets of agents}
\label{fig:ef1violationspos}
\end{figure}

It is easy to see that the agents within a set are mutually EF1. As earlier, we now use these sets to give necessary conditions for the violation of EF1.

\begin{lemma}
Given agents with identical positive trilean valuations and an allocation $A$, agents $i$, $j$ are not mutually EF1 only if:\footnote{Note that $\unalloc \subseteq \zero$.}
\begin{enumerate}
\item $i \in \bad$, and $j \in \zero \cup \flex \cup \res^*$, or
\item $i \in \zero$, and $j \in \res \cup \res^*$.
\end{enumerate}
\label{lem:ef1violationspos}
\end{lemma}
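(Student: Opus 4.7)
The plan is to follow exactly the template of the proof of Lemma~\ref{lem:ef1violations}. The six sets $\fav$, $\zero$, $\flex$, $\res$, $\res^*$, $\bad$ are exhaustive, so it suffices to show that whenever two sets $X$, $Y$ are \emph{not} joined by an edge in Figure~\ref{fig:ef1violationspos}, any $i \in X$ and $j \in Y$ are mutually EF1. Within-set pairs at a common bundle value are mutually envy-free, which handles every set except $\fav$, whose two sides $2 \rightarrow 0$ and $0 \rightarrow 2$ have different values and need to be argued separately.

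First I would dispose of $\fav$ as an isolated vertex. If $v(A_i) = 2 \rightarrow 0$, then $i$ has the maximum possible bundle value and cannot envy anyone; any agent $j$ envying $i$ has the envy resolved by removing the witness $x \in A_i$ with $v(A_i \setminus \{x\}) = 0$, giving $v(A_i \setminus \{x\}) = 0 \le v(A_j) = v(A_j \setminus \{x\})$. Symmetrically, if $v(A_i) = 0 \rightarrow 2$, no one envies $i$, and $i$'s envy towards any $j$ is resolved by the witness $x \in A_i$ with $v(A_i \setminus \{x\}) = 2$. The same argument also handles a within-$\fav$ pair where one side is of type $2 \rightarrow 0$ and the other of type $0 \rightarrow 2$.

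Next I would handle the remaining non-edges. The pairs $\{\bad, \res\}$ (both value $2$) and $\{\flex, \res^*\}$ (both value $1$) are mutually envy-free. For $\{\res, \flex\}$ and $\{\res, \res^*\}$, the higher-valued agent $i$ lies in $\res$ with $v(A_i) = 2 \rightarrow 1$; removing the witness $x \in A_i$ gives $v(A_i \setminus \{x\}) = 1 = v(A_j) = v(A_j \setminus \{x\})$, where the last equality uses disjointness of bundles. For $\{\flex, \zero\}$, agent $i$ lies in $\flex$ with $v(A_i) = 1 \rightarrow 0$, and the same type of witness gives $v(A_i \setminus \{x\}) = 0 = v(A_j) = v(A_j \setminus \{x\})$. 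Reading off the edge set of Figure~\ref{fig:ef1violationspos} and matching it against the statement then closes the argument: the edges incident on $\bad$ go to $\zero$, $\flex$, and $\res^*$ (condition~1), while the edges incident on $\zero$ not already handled go to $\res$ and $\res^*$ (condition~2).

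There is no real obstacle here. With values in $\{0,1,2\}$ every non-edge reduces to a one-line numerical check; the only care needed is to track which side of each pair envies the other, and to invoke identical valuations (together with disjointness of bundles) when replacing $v(A_j \setminus \{x\})$ by $v(A_j)$ for $x \in A_i$.
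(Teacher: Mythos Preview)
Your proposal is correct and follows exactly the approach the paper intends: the paper's own proof of this lemma is simply ``This lemma can be easily proved following the proof of Lemma~\ref{lem:ef1violations}, and its statement is also conclusive from Figure~\ref{fig:ef1violationspos}. Therefore, we will skip the proof.'' Your case analysis over the non-edges of Figure~\ref{fig:ef1violationspos} (isolating $\fav$, then checking equal-value pairs and the three remaining cross pairs via the obvious witness item) is precisely what that sentence is pointing at, and is in fact more detailed than anything the paper writes out.
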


Note that agents in set $\fav$ are mutually EF1 with every other agent, irrespective of the bundle possessed by the other agent.

\begin{proof}
This lemma can be easily proved following the proof of Lemma \ref{lem:ef1violations}, and its statement is also conclusive from Figure \ref{fig:ef1violationspos}. Therefore, we will skip the proof.
\end{proof}

We are now ready to describe our algorithm, which has a similar flavor to Algorithm \ref{alg:trileanIdent}. The algorithm operates by initially allocating \emph{favourable} sets to the first $n-1$ agents, as long as such sets are available. If there are items left but no more favourable sets, and not all agents in $n-1$ have been allocated, let $M'$ denote the set of remaining items. If $v(M')=0$ (or $M'=\emptyset$), we assign $M'$ to the next agent and return the resulting allocation. Otherwise, if $M'$ is trilean, then we find a maximal set $S$ with $v(S)=0$ and assign $S \cup \{x\}$ to the next agent, where $x \in M' \setminus S$. Note that $v(S \cup \{x\})=1 \rightarrow 0$ because $M'$ does not contain any favourable sets.

After allocating any flexible sets to agents in $[n-1]$, if at least two agents remain and $v(M')=0$, then assign $M'$ to the next agent and return the resulting allocation, which satisfies EF1. If $v(M') \neq 0$ then $M'$ is Boolean $\{0,1\}$-valued. We then invoke the  algorithm that returns an EF1 allocation for Boolean $\{0,1\}$ valuations. Note that there won't be any EF1 violations in the complete allocation.

The only case that remains is when all agents in $[n-1]$ have been allocated either favourable or flexible sets. In this case, we assign the remaining set of items to agent $n$ and invoke Algorithm \ref{alg:fixViolationsPos} to resolve EF1 violations if there are any.

Note that the set $\res^+$, defined in the context of negative trilean valuations, is the same as $\flex$ in the current context. Hence, Proposition \ref{prop:binZeroOne2} is restated as follows.
\begin{proposition}
\label{prop:binZeroOne2pos}
For identical $\{0,1\}$-valuations, the allocation returned by Algorithm BooleanEF1 satisfies one or both of the following conditions:
\begin{enumerate}
 \item Each agent is in $\flex$ or $\zero$.
 \item The first $n-1$ agents are in $\flex$.
\end{enumerate}
\end{proposition}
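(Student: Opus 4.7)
The plan is to mirror the proof of Proposition~\ref{prop:binZeroMOne2} from the appendix, replacing Algorithm NegBooleanEF1 with its positive counterpart BooleanEF1 and replacing $\res^-$ with $\flex$ (which plays the same role under identical $\{0,1\}$-valuations, as the remark preceding the proposition notes). Algorithm BooleanEF1 from B\'erczi et al.\ is structured as the natural dual of Algorithm~\ref{alg:binZeroMOne}: while $M'$ still has some agent with value $1$ and at least two unallocated agents remain, it repeatedly picks an inclusion-wise minimal subset $S \subseteq M'$ with $v(S) = 1$, and assigns $S$ to an agent; when the loop ends, either the remaining items have value $0$ and are dumped on some remaining agent, or a single agent is left and receives whatever items remain.

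First I would verify that every agent $i$ allocated inside the main while loop lies in $\flex$, i.e., $v(A_i) = 1 \rightarrow 0$. Since $S$ is chosen to be inclusion-wise minimal with $v(S) = 1$ and the valuation is $\{0,1\}$-valued, for every $x \in S$ the proper subset $S \setminus \{x\}$ satisfies $v(S \setminus \{x\}) = 0$ (otherwise $S \setminus \{x\}$ would contradict minimality). Therefore $v(A_i) = 1 \rightarrow 0$, placing $i$ in $\flex$.

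Next I would branch on the exit condition of the while loop. If the loop exits with $M' = \emptyset$, or if the algorithm then executes its analogue of Line~\ref{line:negativebooleanif} (handing the zero-valued leftover $M'$ to some remaining agent), then every agent not allocated inside the loop is either in $\unalloc \subseteq \zero$ or receives a set of value $0$ and lies in $\zero$. Combined with Step~1, every agent is then in $\flex \cup \zero$, so condition~1 holds. Otherwise the loop must have exited because $|N'| = 1$, meaning the first $n-1$ agents were all allocated inside the loop and hence all lie in $\flex$, yielding condition~2.

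The only real obstacle is being sure that BooleanEF1 in fact selects inclusion-wise minimal value-$1$ sets and behaves on identical valuations as described; this is the structure already relied upon in Algorithm TrileanPosEF1 and is the direct positive analogue of Algorithm~\ref{alg:binZeroMOne}, whose properties were established in Proposition~\ref{prop:binZeroMOne2}. Everything else is a short case analysis of the loop termination, exactly as in the appendix proof for the negative case.
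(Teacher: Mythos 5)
Your proof is correct and takes essentially the paper's intended route: the paper states this proposition without writing out a proof, implicitly relying on exactly the argument you give, namely the analogue of the proof of Proposition~\ref{prop:binZeroMOne2} applied to Algorithm BooleanEF1 of B\'erczi et al. The structural facts you assume about that algorithm (inclusion-wise minimal value-$1$ sets assigned while possible, a zero-valued leftover handed to a remaining agent, or the last remaining agent taking whatever is left) are precisely what the paper relies on, and with them your case analysis on loop termination goes through as written.
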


\begin{algorithm}[!ht]
\caption{TrileanPosEF1}
\label{alg:trileanPosIdent}
\begin{algorithmic}[1]
\REQUIRE Fair division instance $(N,M,\mathcal{V})$  with identical positive trilean valuations.
\ENSURE  An \EF{1} allocation $A$.
\STATE Initialize $A=(\emptyset,\hdots,\emptyset)$, $M'=M$, and $i=1$.

\WHILE{($\exists S \subseteq M'$ such that $S$ is favourable) AND ($i < n$)}
	\STATE $A_i = S$, $M' = M' \setminus S$, $i = i+1$ \label{line:posfavourable} \COMMENT{Assign favourable sets.}
\ENDWHILE

\WHILE[Assign flexible sets.]{($M' \neq \emptyset$) AND ($i < n$) AND ($M'$ is trilean) AND ($v(M') \neq 0$)} 
\STATE Let $S$ be an inclusion-wise maximal subset such that $v(S)=0$.
\STATE Pick any $x \notin S$. $A_i = S \cup \{x\}$, $M'=M'\setminus A_i$. \COMMENT{$v(A_i)=1 \rightarrow 0$.} \label{flex}
\STATE $i = i+1$
\ENDWHILE

\IF {($M' = \emptyset$)}
	\STATE \textbf{return} allocation $A$. \label{Memptypos}
\ENDIF
\IF {($v(M') = 0$)}
	\STATE $A_i = M'$, \textbf{return} allocation $A$. \label{Mzeropos}
\ENDIF
\IF {($i==n$)}
	\STATE $A_i = M'$, $A = $ FixEF1ViolationsPos$(A)$, \textbf{return} allocation $A$. \label{lastagentpos}
\ENDIF

\IF{($M'$ is Boolean $\{0,1\}$-valued)}
\STATE $A =$ Algorithm BooleanEF1$(M',N \setminus [i-1],\mathcal{V})$, \textbf{return} allocation $A$. \label{binZeroOnepos}
\ENDIF    
\end{algorithmic}
\end{algorithm}

Algorithm TrileanPosEF1 terminates in one of four places --- Lines \ref{Memptypos}, \ref{Mzeropos}, \ref{lastagentpos} and \ref{binZeroOnepos}. The subsequent claim shows that if the algorithm terminates at Line \ref{Memptypos}, Line \ref{Mzeropos}, or Line \ref{binZeroOnepos}, the allocation returned is EF1.

\begin{clm}
If Algorithm TrileanPosEF1 terminates at Line \ref{Memptypos}, Line \ref{Mzeropos}, or Line \ref{binZeroOnepos}, the allocation returned is EF1.
\label{claim:emptyzerobinary}
\end{clm}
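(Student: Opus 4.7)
The plan is to case on the three possible termination lines and invoke Lemma~\ref{lem:ef1violationspos} together with the observation (noted just after the lemma) that agents in $\fav$ are mutually EF1 with every other agent. As a preliminary used throughout, I would first note that any agent indexed in $[i-1]$ is in $\fav \cup \flex$, since the two while loops of Algorithm TrileanPosEF1 assign only favourable bundles in Line~\ref{line:posfavourable} or bundles satisfying $v(A_i) = 1 \to 0$ in Line~\ref{flex}.

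For Line~\ref{Memptypos} and Line~\ref{Mzeropos}, I would argue that every remaining agent is either unallocated or holds a zero-valued bundle, hence lies in $\zero$. Consequently no agent belongs to $\bad$, $\res$, or $\res^*$, so neither necessary condition for an EF1 violation from Lemma~\ref{lem:ef1violationspos} can be met. Any favourable agent whose bundle is a $0 \to 2$ assignment (and is therefore also in $\zero$) is immune by the $\fav$ exemption, so the allocation is EF1. This mirrors Claim~\ref{claim:emptyzero} in the negative trilean setting.

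For Line~\ref{binZeroOnepos}, I would invoke Proposition~\ref{prop:binZeroOne2pos} on the Boolean $\{0,1\}$-valued residual sub-instance over $(N \setminus [i-1], M')$. The proposition forces either (i) every sub-instance agent to be in $\flex \cup \zero$, or (ii) all but the last sub-instance agent to be in $\flex$. Under (i), no agent of the full allocation lies in $\bad$, $\res$, or $\res^*$, and Lemma~\ref{lem:ef1violationspos} immediately yields EF1. Under only (ii), the last agent $n$ must have a bundle in $\{0,1\}$ that is outside $\flex \cup \zero$, which pins it down to $v(A_n) = 1$ with $v(A_n \setminus \{x\}) = 1$ for every $x \in A_n$, placing $n$ in $\res^*$. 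At that point every value-$1$ bundle sits in $\flex \cup \res^*$, and every value-$0$ bundle is either from an unallocated agent (none in this branch, since BooleanEF1 gives each earlier sub-instance agent a $\flex$ bundle and the last has value $1$) or a favourable $0 \to 2$ bundle whose owner is in $\fav$. Since $\fav$ agents are exempt and no agent is in $\bad$ or $\res$, the sole potential violation by Lemma~\ref{lem:ef1violationspos}, namely a non-$\fav$ $\zero$ agent envying $n \in \res^*$, cannot arise, and the allocation is EF1.

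The main obstacle I anticipate is handling the ``only (ii)'' branch cleanly: one must simultaneously pin the last sub-instance agent down to $\res^*$ rather than some indeterminate Boolean $\{0,1\}$ state, and rule out the coexistence of a non-$\fav$ $\zero$ agent with this $\res^*$ agent. Both facts rest on carefully tracking, through the two preceding while loops, the precise value pattern of every bundle assigned, and on combining this tracking with the structural guarantee of Proposition~\ref{prop:binZeroOne2pos} and the $\fav$ exemption of Lemma~\ref{lem:ef1violationspos}.
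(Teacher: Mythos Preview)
Your proposal is correct and follows essentially the same route as the paper: both case on the three termination lines, use that agents allocated in the two while loops lie in $\fav \cup \flex$, and appeal to Proposition~\ref{prop:binZeroOne2pos} together with Lemma~\ref{lem:ef1violationspos} for Line~\ref{binZeroOnepos}. Your treatment of the ``only (ii)'' branch is in fact more careful than the paper's, which simply asserts that Lemma~\ref{lem:ef1violationspos} yields EF1 without explicitly pinning $n$ to $\res^*$ or ruling out a non-$\fav$ $\zero$ agent; your extra detail there is sound and fills a gap the paper leaves to the reader.
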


\begin{proof}
Prior to Lines~\ref{Memptypos} and~\ref{Mzeropos}, agents with non-empty bundles had already been allocated either favourable sets or flexible sets in the preceding while loops. In the execution of  Lines~\ref{Memptypos} or~\ref{Mzeropos}, every remaining agent is either unassigned or assigned a zero-valued bundle. From Lemma~\ref{lem:ef1violationspos}, there is no EF1 violation between favourable, flexible, and zero-valued sets. Hence, the resulting allocation is EF1.

Let us consider the case in which the algorithm terminates at Line \ref{binZeroOnepos}. Prior to the execution of this line, suppose $k$ is the last agent to be allocated in the while loop. Then agents $1,\hdots,k$ are either favourable or flexible. After Line \ref{binZeroOnepos} executes, according to Proposition \ref{prop:binZeroOne2pos}, either agents $k+1,\hdots,n$ are in $\flex \cup \zero$, or agents $k+1,\hdots,n-1$ are in $\flex$, and $A_n$ is Boolean-valued ($v(A_n)=0$ or $v(A_n)=1$). Using Lemma \ref{lem:ef1violationspos}, we can conclude that the allocation returned in this case is EF1. 
\end{proof}

If Line \ref{lastagentpos} executes, then prior to calling FixEF1ViolationsPos$(A)$, agents $1,\hdots,n-1$ are assigned either favourable or flexible bundles. Thus, by Lemma \ref{lem:ef1violationspos}, any EF1 violation necessarily involves agent $n$. The following claim describes the possible EF1 violations before the FixEF1ViolationsPos algorithm is invoked.

\begin{clm}
\label{claim:trileanIdentViolPos}
Let $A$ be the allocation given as input to Algorithm \ref{alg:fixViolationsPos}. Then any EF1 violation must be of the following type:
\begin{itemize}
\item Agent $i \in [n-1]$ is in $\flex$ and agent $n$ is in $\bad$. All other agents are favourable or flexible. 
\end{itemize}
\end{clm}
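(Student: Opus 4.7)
The plan is a direct application of the EF1 violation characterisation in Lemma~\ref{lem:ef1violationspos} to the structured allocation that is fed into FixEF1ViolationsPos. The key observation is that the algorithm only reaches Line~\ref{lastagentpos} after agents $1,\ldots,n-1$ have all been allocated either in the first while loop (favourable sets) or in the second while loop (flexible sets). Agent $n$ is then handed all of $M'$ without any structural guarantee, which is the only source of a possible violation.

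First, I would observe that the six categories $\fav, \flex, \zero, \res, \res^*, \bad$ are defined by disjoint conditions on the value and children of a bundle: $\fav$ requires the bundle to admit both a value $2$ and a value $0$ (either currently or after removing one item), $\flex$ requires $v=1$ with some child of value $0$, $\bad$ requires $v=2$ with every child of value $2$, and so on. In particular, $\fav \cup \flex$ is disjoint from $\zero \cup \res \cup \res^* \cup \bad$. Consequently, agents $1,\ldots,n-1$, who must belong to $\fav \cup \flex$, cannot belong to $\bad$, $\zero$, $\res$, or $\res^*$.

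Next, I would invoke Lemma~\ref{lem:ef1violationspos}, which states that any pair failing mutual EF1 has one agent in $\bad$ and the other in $\zero\cup\flex\cup\res^*$, or one agent in $\zero$ and the other in $\res\cup\res^*$. Combined with the previous paragraph, any EF1 violation must involve agent $n$. If $n\in\zero$, the partner would have to lie in $\res\cup\res^*$, which is impossible since the partner is in $\fav\cup\flex$. Hence the only remaining possibility is $n\in\bad$, in which case the partner lies in $(\zero\cup\flex\cup\res^*)\cap(\fav\cup\flex)=\flex$. This yields exactly the violation type claimed, and the statement that all other agents are favourable or flexible follows immediately.

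There is no substantial obstacle: everything reduces to reading off the membership of agents $1,\ldots,n-1$ from the algorithm's construction and intersecting with the allowed violation patterns in Lemma~\ref{lem:ef1violationspos}. The only small point to verify carefully is the disjointness between $\fav\cup\flex$ and the four ``problematic'' categories, which is immediate from the definitions.
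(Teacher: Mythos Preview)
Your approach is the same as the paper's: note that agents $1,\ldots,n-1$ received favourable or flexible bundles in the two while loops, then feed this into Lemma~\ref{lem:ef1violationspos}. However, your disjointness claim is false. An agent in $\fav$ with $v(A_i)=0\rightarrow 2$ also has $v(A_i)=0$ and hence lies in $\zero$; an agent with $v(A_i)=2\rightarrow 0$ might additionally have a child of value $1$ and hence also lie in $\res$. (The paper explicitly says, for the analogous negative case, that the categories are not mutually exclusive.) Your argument as written therefore does not rule out, via disjointness alone, a violation pattern of type $(\zero,\res)$ with the $\zero$-agent being one of the favourable agents in $[n-1]$.

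The repair is immediate and is exactly what the paper does: instead of disjointness, invoke the remark stated just after Lemma~\ref{lem:ef1violationspos} that agents in $\fav$ are mutually EF1 with every other agent, irrespective of bundle. Hence any EF1 violation must involve a $\flex$ agent from $[n-1]$. Now $\flex$ (value $1$ with some child of value $0$) genuinely \emph{is} disjoint from $\zero$, $\res$, $\res^*$, and $\bad$, so by the lemma the only remaining violation pattern is $n\in\bad$ paired with $i\in\flex$, and the claim follows.
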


\begin{proof}
At Line \ref{lastagentpos}, before calling FixEF1ViolationsPos$(A)$, agents $1,\hdots,n-1$ are assigned either favourabe or flexible bundles. According to Lemma \ref{lem:ef1violationspos}, a flexible bundle can ony have an EF1 violation with a bad bundle. Therefore, if any EF1 violation is present, it must occur between an agent $i \in \flex$ and agent $n \in \bad$.
\end{proof}

Algorithm \ref{alg:fixViolationsPos} then resolves EF1 violations that may exist. According to Claim \ref{claim:trileanIdentViolPos}, such violations involve agent $n$ with a bad bundle ($v(A_n)= 2 \rightrightarrows 2$) and another agent $i$ with a flexible bundle ($v(A_i)= 1 \rightarrow 0$). To resolve EF1 violations, the algorithm picks an agent $i \in \flex$ and transfers items from $A_n$ to $A_i$, until at least one of them is in $\res$ (has value $2 \rightarrow 1$). If agent $n$ is in $\res$, then we have obtained an EF1 allocation. Otherwise, the number of agents in $\flex$ is decreased. The algorithm then picks an agent $i \in \flex$ and continues transferring items from $A_n$ to $A_i$.

As earlier, we call the \texttt{repeat...until} loop in the algorithm the \emph{inner} loop, and the while loop the \emph{outer} loop. We claim that upon termination of an inner loop, either agent $n$ is in $\res$ ($v(A_n)=2 \rightarrow 1$) and the algorithm concludes with an EF1 allocation, or the selected agent $i$ is in $\res$ while agent $n$ remains in $\bad$. Our next claim shows that the inner \texttt{repeat...until} loop runs for at most $|A_n|-2$ iterations over all iterations of the outer while loop.

\begin{algorithm}[!ht]
\caption{FixEF1ViolationsPos}
\label{alg:fixViolationsPos}
\begin{algorithmic}[1]
\REQUIRE An allocation $A$ with possible EF1 violations.
\ENSURE An \EF{1} allocation $A$.
\IF{(Allocation $A$ is \EF{1})}
    \STATE \textbf{return} allocation $A$. \label{line:FixEasypos}
\ENDIF
\IF[$v(A_n) = 2 \rightrightarrows 2$ and $\exists i: v(A_i) = 1 \rightarrow 0$]{($n \in \bad$) AND ($\flex \neq \emptyset$)} 
     \WHILE{($n \in \bad$) AND ($\flex \neq \emptyset$)} \label{while1pos}
           \STATE Let $i \in \flex$. 
           \REPEAT \label{line:FixRepeat1pos}
            \STATE Choose an item $x \in A_n$, $A_n = A_n \setminus \{x\}$, $A_i=A_i \cup \{x\}$. 
         \UNTIL{($i \in \res$) OR ($n \in \res$)} \label{line:FixUntil1pos} \COMMENT{Either ($v(A_i) = 2 \rightarrow 1$) or ($v(A_n) = 2 \rightarrow 1$).}
     \ENDWHILE
	\STATE \textbf{return} allocation $A$. \label{line:FixPositivepos}
\ENDIF
\end{algorithmic}
\end{algorithm}

\begin{clm}
Let $t = |A_n|$ be the size of $A_n$ initially. The inner loop terminates in at most $t-2$ iterations over all invocations. 
\label{claim:innerlooptimepos}
\end{clm}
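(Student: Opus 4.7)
The plan is to show that the inner repeat-until loop cannot continue once $|A_n|$ has shrunk to $2$. Since each iteration removes exactly one item from $A_n$, this yields the desired bound of $t-2$ total iterations across all invocations of the inner loop.

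First, I would prove a structural preliminary: whenever the inner loop is actually executed, $A_n$ contains no favourable subset, i.e.\ no $S \subseteq A_n$ satisfies $v(S) = 2 \rightarrow 0$. To see this, I would trace the control flow of Algorithm TrileanPosEF1. The inner loop is entered only when the guard $\flex \neq \emptyset$ holds, which forces the second while loop of TrileanPosEF1 to have allocated at least one flexible bundle; this in turn forces the first while loop to have exited because no favourable subset remained in $M'$, rather than because $i$ reached $n$ (in the latter case the second while loop would not run, leaving $\flex = \emptyset$). Since both while loops only remove items from $M'$ and $A_n$ is set to the final $M'$, the no-favourable-subsets property persists on $A_n$ and all its subsets.

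Second, I would track the evolution of $v(A_n)$ along the inner loop. Initially $v(A_n) = 2 \rightrightarrows 2$, so every removal preserves $v(A_n) = 2$. After a removal the new $A_n$ still has value $2$, and by the no-favourable-subsets property none of its children can have value $0$; hence the new $A_n$ must lie in $\bad$ (the loop continues) or in $\res$ (the until-condition fires and the loop exits), ruling out a transition into $\fav$. Consequently, as long as the loop continues, the invariant $v(A_n) = 2 \rightrightarrows 2$ is maintained.

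Finally, I would derive a contradiction at $|A_n| = 2$. If the loop has not yet exited, then $A_n = \{x,y\}$ with $v(\{x\}) = v(\{y\}) = 2$; since $v(\emptyset) = 0$, each singleton $\{x\}$ satisfies $v(\{x\}) = 2 \rightarrow 0$ and is favourable, contradicting the structural preliminary. Hence the loop must terminate by the time $|A_n| = 2$, yielding at most $t-2$ total iterations. The main obstacle I anticipate is the structural preliminary itself: cleanly separating the vacuous case, in which the first while loop of TrileanPosEF1 exits via $i = n$ so that $\flex$ stays empty, the first branch of FixEF1ViolationsPos is skipped, and the inner loop never runs, from the substantive case in which the inner loop actually executes and the no-favourable-subsets property can legitimately be invoked.
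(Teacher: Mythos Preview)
Your proposal is correct and follows essentially the same argument as the paper: both use the absence of favourable subsets in $A_n$ to conclude that $v(A_n)$ can only be $2 \rightrightarrows 2$ or $2 \rightarrow 1$ as items are removed, and both observe that at $|A_n| = 2$ the $2 \rightrightarrows 2$ case is impossible (you phrase it as a contradiction via favourable singletons, the paper phrases it directly as the children having value $1$). Your control-flow justification for the no-favourable-subsets preliminary is more careful than the paper's one-line assertion, but this is a matter of rigor rather than a different approach.
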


\begin{proof}
In every iteration of an inner loop, an item is removed from $A_n$ and transferred to $A_i$, where $A_i$ is the bundle of an agent $i$ selected from $\flex$. 
At the beginining, we have $v(A_n) = 2 \rightrightarrows 2$. Since all favourable sets have already been allocated, there is no set $S \subseteq A_n$ with $v(S) = 2 \rightarrow 0$. This also implies that there is no set $S \subseteq A_n$ with $|S|=1$ and $v(S)=2$. Note that if $v(A_n) \neq 2 \rightrightarrows 2$, then it must be that $v(A_n) = 2 \rightarrow 1$. Thus if the inner loop has not terminated after $t-2$ iterations, then $|A_n| = 2$, $v(A_n) = 2$, and after removing any child $x$ of $A_n$, we have $v(A_n \setminus \{x\})=1$, and hence the loop terminates. 
\end{proof}

\begin{clm}
After every iteration of the while loop (Line~\ref{line:FixUntil1pos}) in Algorithm \ref{alg:fixViolationsPos}, either agent $n$ moves from $\bad$ to $\res$ and the algorithm returns an EF1 allocation, or agent $i$ moves from $\flex$ to $\res$ and agent $n$ remains in $\bad$.
\label{claim:fix1pos}
\end{clm}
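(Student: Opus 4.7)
The plan is to mirror the proof of Claim~\ref{claim:fix1} for the negative trilean case, adapting it to positive trilean valuations. I will induct on the number of iterations of the outer while loop in Algorithm~\ref{alg:fixViolationsPos}, analyzing how the values of $A_n$ and the currently-processed flex agent's bundle $A_i$ change inside the inner loop. Two structural facts are central: by Claim~\ref{claim:trileanIdentViolPos}, the allocation entering FixEF1ViolationsPos has $v(A_n^0) = 2 \rightrightarrows 2$ and $v(A_i^0) = 1 \rightarrow 0$ for the chosen flex agent; and, as already used in Claim~\ref{claim:innerlooptimepos}, no subset $S \subseteq A_n^0$ is favourable, because all favourable sets were assigned in the first phase of TrileanPosEF1. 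Moreover, each $A_i^0$ was assigned as $S \cup \{x\}$ with $S$ an inclusion-wise maximal zero-valued subset of the then-prevailing $M'$, so maximality combined with non-favourability forces $v(S \cup \{y\}) = 1$ for every $y \in M' \setminus S$, and in particular for every $y \in A_n^0$.

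Using these facts I will track value evolution during the inner loop. For $A_n$: while $A_n$ is in $\bad$, each removal preserves $v(A_n) = 2$; and non-favourability forbids any value-$2$ descendant of $A_n^0$ from having a child of value $0$, so $A_n$ either remains in $\bad$ or transitions directly into $\res$ (in which case the loop exits). For $A_i$: whenever $v(A_i) = 2$ during the inner loop, removing the original item $x \in A_i^0 \setminus S$ produces a child of the form $S \cup T$ with $T \subseteq A_n^0$ nonempty; by max-zero of $S$ together with non-favourability, $v(S \cup T) = 1$, so $A_i \in \res$ and the loop exits. Together these observations will imply that every inner loop iteration ends either with $n \in \res$ (triggering the first alternative of the claim) or with $i \in \res$ and $n$ still in $\bad$ (triggering the second).

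The induction on outer loop iterations then closes as follows. For the second alternative, the fact that $A_n$ has not exited $\bad$ follows directly from the $A_n$-evolution above. For the first alternative, I must verify that the returned allocation is EF1: by the inductive hypothesis, every previously processed flex agent is now in $\res$; by Claim~\ref{claim:trileanIdentViolPos}, all other agents are in $\fav$ or $\flex$; agent $n$ is now in $\res$; and the current agent $i$ has landed in one of $\res$, $\res^*$, or $\flex$. A case-by-case inspection against Lemma~\ref{lem:ef1violationspos} then yields EF1. The hardest step will be ruling out that the current agent $i$ ends up in $\zero$ at the iteration where $A_n$ first becomes resolved, since Lemma~\ref{lem:ef1violationspos} lists $(\zero,\res)$ among the possible EF1 violations; I plan to address this by pushing the max-zero and non-favourability constraints further to constrain $v(A_i)$'s trajectory and show that it cannot land in $\zero$ exactly when $A_n$ first transitions to $\res$, thereby completing the proof.
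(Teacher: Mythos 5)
Your overall strategy is the same as the paper's: track the value trajectories of $A_n$ and of the current flexible agent's bundle using (a) the absence of favourable subsets of the residual item set and (b) the inclusion-wise maximality of the zero-valued core $S$ of each flexible bundle, then induct on iterations of the outer while loop and check the terminal allocation against Lemma~\ref{lem:ef1violationspos}. The $A_n$ half of your argument is fine and matches the paper.

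The step that does not hold as you justify it is the $A_i$ half. You assert that when $v(A_i)=2$, the child $S\cup T$ obtained by deleting the original extra item $x$ satisfies $v(S\cup T)=1$ ``by max-zero of $S$ together with non-favourability.'' Maximality only gives $v(S\cup T)\neq 0$ for nonempty $T$, and non-favourability only says that a value-$2$ set has no value-$0$ child; neither forces $v(S\cup T)=1$ once $|T|\ge 2$ (e.g.\ $v(S\cup\{t_1\})=v(S\cup\{t_2\})=1$ and $v(S\cup\{t_1,t_2\})=2$ with all children of value at least $1$ is consistent with both constraints). So the particular witness you exhibit may fail, and the deduction ``so $A_i\in\res$'' is not established by it. The repair is the one the paper implicitly uses: items are transferred one at a time and the exit condition is tested after each transfer, so the \emph{first} time $v(A_i)$ reaches $2$ the immediately preceding bundle is a child of $A_i$ of value $1$ (it cannot be $0$ by maximality), whence $v(A_i)=2\rightarrow 1$ and $i\in\res$. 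Separately, the issue you flag as the ``hardest step'' --- ruling out that $i$ lands in $\zero$ when $n$ first resolves --- is immediate from the same maximality fact ($v(A_i^0\cup T)\neq 0$ for every $T\subseteq A_n^0$), so no further work is needed there; $i$ ends in $\flex$, $\res^*$, or $\res$, and since no agent is in $\bad$ and every zero-valued agent is favourable, Lemma~\ref{lem:ef1violationspos} yields EF1.
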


\begin{proof}
The proof of this claim will closely follow the proof of Claim \ref{claim:fix1}.
Before proving the claim, we show how the values for the agents change as items are transferred in the inner loop. Let $A^0$ be the initial allocation passed to Algorithm FixEF1ViolationsPos. In this allocation, each flexible agent is assigned their bundle in the second while loop in Algorithm \ref{alg:trileanPosIdent}. Let $i$ be such a flexible agent, and let $M'$ be the set of remaining items from which agent $i$ is allocated. Since agent $i$ receives a maximal set such that $v(A_i^0) = 1\rightarrow 0$, for any subset of items $S \subseteq M' \setminus A_i^0$, $v(A_i^0 \cup S) \neq 0$. Specifically, for $S \subseteq A_n^0$, $v(A_i^0 \cup S) \neq 0$. Therefore, for any $S \subseteq A_n^0$, either $v(A_i^0 \cup S)=1$, or $v(A_i^0 \cup S) = 2 \rightarrow 1$ ($i$ is moved to $\res$ and the inner loop terminates).

In $A^0$, agent $n$ has bundle value $v(A_n^0) =2 \rightrightarrows 2$. Since favourable sets have already been allocated, as we remove items from $A_n$, we must encounter a $1$ before we encounter a $0$. Thus for any $S \subseteq A_n^0$, either $v(A_n^0 \setminus S)=2 \rightrightarrows 2$, or $v(A_n^0 \setminus S) = 2 \rightarrow 1$ ($n$ is moved to $\res$ and the inner loop terminates).

Similar to the proof of Claim \ref{claim:fix1}, this claim can be proven using induction on the number of iterations of the outer while loop.
\end{proof}

\begin{proof}[Proof of Theorem~\ref{theorem:trileanpos}.]
We show that Algorithm TrileanPosEF1 returns the necessary allocation. If the algorithm terminates at Line \ref{Memptypos}, Line \ref{Mzeropos}, or Line \ref{binZeroOnepos}, then by Claim \ref{claim:emptyzerobinary}, the allocation returned is EF1.

Consider the case in which the algorithm executes Line \ref{lastagentpos}. After assigning the remaining set of items to the last agent, EF1 violations could possibly occur. By Claim \ref{claim:trileanIdentViolPos}, any such violation must involve agent $n$ with $n \in \bad$ and another agent $i \in \flex$. To resolve such violations Algorithm FixEF1ViolationsPos is invoked. In the while loop (at Line \ref{while1pos}), an agent $i$ from $\flex$ is selected. By Claim \ref{claim:fix1pos}, each time the inner loop terminates, either agent $n$ moves from $\bad$ to $\res$ and the algorithm returns an EF1 allocation, or agent $i$ moves from $\flex$ to $\res$ and agent $n$ continues to remain in $\bad$. Therefore, at the termination of the algorithm either agent $n$ moves from $\bad$ to $\res$, in which case we obtain an EF1 allocation, or all agents in $\flex$ move to $\res$. In that case, all agents are $\fav$, $\res$, or $\bad$. By Lemma \ref{lem:ef1violationspos}, this allocation is EF1.
 
\end{proof}

\subsection{Separable Single-Peaked Valuations}

We present the missing proofs for separable single-peaked valuations.

\paragraph*{Agents with Identical Thresholds}

In this case, for each type $j$, there is a common threshold $\theta_j$ for each agent. Then agent $i$'s valuation $v_i(A_i) = \sum_{j=1}^t v_{ij}(a_{ij})$, where the valuations $v_{ij}$ are single-peaked: for all $x \le y \le \theta_{j}$, $v_{ij}(x) \le v_{ij}(y)$, while for $\theta_{j} \le x \le y$, $v_{ij}(x) \ge v_{ij}(y)$.

For the algorithm, given a partial allocation $A$, as in prior work, we let $G_A = (V, E)$ denote the envy graph, where $V = N$ and $(i, k)\ \in\ E$ if agent $i$ envies agent $k$. $T_A$ is the top-trading envy graph, a subgraph of $G_A$ with a directed edge $(i,k)$ if $v_i(A_{k})$ =$\max_{i' \in N} v_i(A_{i'})$ $> v_i(A_i)$. Given a directed cycle $C$ in $G_A$ or $T_A$, $A_C$ is the allocation obtained by giving each agent in $C$ the bundle of the agent they envy in $C$.

\begin{algorithm}[ht]
\begin{algorithmic}[1]
\STATE Initialise $A_i$ to $(0, 0, ..., 0)\ \forall\ i = 1, 2, ..., n$
\STATE Initialise $\hat{m}_j \leftarrow m_j \ \forall\ j \in [t]$
\FORALL[Phase 1] {$j\ \in\ [t]$}
    \WHILE {$\hat{m}_j > 0$}
        \WHILE {$G_A$ has a cycle $C$}
        \STATE $A \leftarrow A_C$ \label{line:idsspass1} \COMMENT{Swap bundles along $C$}
    \ENDWHILE
        \STATE $N_j = \{i \in N : a_{ij} < \theta_j\}$ \label{line:idsspass2}
        \IF{$|N_j| > 0$}
            \STATE Let $G_A^j$ be the envy graph restricted to $N_j$
            \STATE Choose a source in $G_A^j$, say $p$
            \STATE $a_{pj} \leftarrow a_{pj} + 1$
            \STATE $\hat{m}_j \leftarrow \hat{m}_j - 1$
        \ELSE
            \STATE Break
        \ENDIF
    \ENDWHILE
\ENDFOR
\FORALL[Phase 2] {$j\ \in\ [t]$} 
    \WHILE {$\hat{m}_j > 0$} 
        \WHILE {$T_A$ has a cycle $C$}
        \STATE $A \leftarrow A_C$ \label{line:idsspass3} \COMMENT{Swap bundles along $C$}
    \ENDWHILE    
        \STATE Choose a sink $k$ in the graph $G_A$
        \STATE $a_{kj} \leftarrow a_{kj} + 1$
        \STATE $\hat{m}_j \leftarrow \hat{m}_j - 1$ \label{line:idsspass4}
    \ENDWHILE 
\ENDFOR
\STATE Return $A$
\caption{An EF{1} algorithm for SSP valuations with identical thresholds}
\label{alg:SEF1}
\end{algorithmic}
\end{algorithm}

\rstsspequal*


The proof of Theorem~\ref{thm:sspequal} follows from Lemma \ref{lem:identThresh1} and Lemma \ref{lem:identThresh2}. We note that while the algorithm is as for doubly monotone valuations, the analysis is much more subtle, as shown in the example earlier.

\begin{lemma}
    Let $A$ be the (partial) allocation at any point in Phase 1 (i.e., in the first for all loop). Then if agent $r$ envies agent $s$, then the envy can be resolved by removal of a single item from agent $s$'s bundle. 
\label{lem:identThresh1}
\end{lemma}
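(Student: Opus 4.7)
My plan is to prove the lemma by induction on the number of operations (cycle swaps and item-additions) performed by Phase~1 up to the current allocation $A$. The base case is the empty allocation, which is trivially EF1. For the inductive step I would handle the two kinds of operations separately.

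For a \emph{cycle swap} along a directed envy-cycle $C$ in $G_A$, each agent in $C$ receives a bundle she strictly prefers to her previous one, while bundles of agents outside $C$ are unchanged. If $r$ envies $s$ in the post-swap allocation $A$, I would show in the standard way that $r$ also envied the agent currently holding $A_s$ in $\hat{A}$ (either $s$ itself if $s\notin C$, or the successor of $s$ in $C$ if $s\in C$); the IH witness item remains in $A_s$ and still works because $v_r(A_r)\ge v_r(\hat{A}_r)$.

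For an \emph{item assignment}, let $z$ be the item of type $j$ just given to $p\in N_j$, where $p$ was chosen as a source of $G_A^j$. Suppose $r$ envies $s$ in $A$. If $s\neq p$ then $A_s=\hat{A}_s$ and the IH witness item (in $\hat{A}_s$) continues to work, since $v_r(A_r)\ge v_r(\hat{A}_r)$ (we are always in the monotone region $a_{ij}\le\theta_j$, so adding $z$ to any bundle of someone in $N_j$ never decreases any evaluator's value). If $s=p$ and $r\in N_j$, then the source condition in $G_A^j$ gives $v_r(\hat A_r)\ge v_r(\hat A_p)=v_r(A_p\setminus\{z\})$, so removing $z$ from $A_p$ resolves the envy.

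The main obstacle is the remaining sub-case $s=p$, $r\notin N_j$: because of identical thresholds, $a_{pj}+1\le\theta_j=\theta_{rj}$, so $v_r(A_p)\ge v_r(\hat A_p)$ (the new item looks like a good to $r$ as well). Hence, unlike in the doubly-monotone analysis of~\cite{BSV21approximate}, we cannot use chore-like monotonicity to push the IH witness through. To handle this, I would strengthen the induction with the auxiliary invariant
\[
  (\star)\quad\text{for every type $j$ being processed, no agent $r\notin N_j$ envies any agent $p\in N_j$.}
\]
Given $(\star)$, the tricky sub-case reduces to the previous one: $r$ did not envy $p$ in $\hat A$, hence $v_r(\hat A_r)\ge v_r(\hat A_p)=v_r(A_p\setminus\{z\})$ and removing $z$ works. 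Establishing $(\star)$ is itself by induction and is where the real work lies: one must check that cycle swaps (which permute bundles and can move agents across the $N_j/\overline{N_j}$ boundary) and the source-in-$G_A^j$ choice of $p$ both preserve it. The cycle-swap case is handled by observing that every agent in the cycle strictly improves, so any new edge from $\overline{N_j}$ into $N_j$ would contradict the old invariant along the cycle. The assignment case uses the source property of $p$ together with the monotone regime $a_{ij}\le\theta_j$: any new envy toward $p$ coming from $\overline{N_j}$ would have to arise from a prior envy by $(\star)$ in $\hat A$, contradicting the inductive hypothesis.

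With $(\star)$ in hand the three sub-cases above cover all possibilities, yielding an EF1 witness in $A_s$ as required. The hardest part, and where I would spend the bulk of the write-up, is verifying $(\star)$ carefully through cycle resolution; everything else is a straightforward monotone-goods-style bookkeeping argument.
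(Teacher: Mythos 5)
Your decomposition into cases matches the paper's proof up to the sub-case $s=p$, $r\notin N_j$, but the auxiliary invariant $(\star)$ you introduce to handle that sub-case is false, and the conclusion you draw from it (that removing the newly assigned item $z$ resolves the envy) is also false. Concretely, take two agents and two types with $m_1=1$, $\theta_1=1$, $v_{i1}(0)=0$, $v_{i1}(1)=10$, and $m_2=2$, $\theta_2=1$, $v_{i2}(0)=0$, $v_{i2}(1)=5$, $v_{i2}(2)=3$ (identical agents). After type $1$ is processed, agent $1$ holds the type-$1$ item and agent $2$ envies her. Processing type $2$, agent $2$ is the source of $G_A^2$ and receives the first type-$2$ item; she thereby leaves $N_2$ but still envies agent $1$, who remains in $N_2$. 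So at the moment the second type-$2$ item is assigned to $p=1$ we have $r=2\notin N_2$ envying $p\in N_2$, violating $(\star)$; moreover, removing that newly assigned item leaves agent $1$'s bundle worth $10>5$, so it is not an EF1 witness. The failure mode your invariant-preservation sketch misses is an envious agent leaving $N_j$ upon receiving her $\theta_j$-th item of type $j$ while the agent she envies has not yet left --- and this is the \emph{generic} behaviour of the algorithm, since sources (i.e., agents nobody in $N_j$ envies, who are typically themselves envious) are served first.

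The paper handles this sub-case without any such invariant. Since thresholds are identical and no bundle ever exceeds $\theta_j$ items of type $j$ in Phase 1, $r\notin N_j$ forces $a_{rj}=\theta_j\ge a_{pj}+1$, so $v_{rj}(a_{rj})\ge v_{rj}(a_{pj}+1)$ and the envy inequality $v_r(A_r)<v_r(A_p)$ pulls back to the allocation $\hat A$ before any type-$j$ items were assigned, i.e., $v_r(\hat A_r)<v_r(\hat A_p)$. The inductive witness for that earlier envy is an item $x$ of some type $j'\ne j$, and by additivity across types removing $x$ still resolves the envy after the type-$j$ items are added (this is exactly where one must avoid a witness of type $j$, as the paper's two-agent example illustrates). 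You would need to replace $(\star)$ with an argument of this form, producing a witness of a type other than $j$; as written, the key step of your proof does not go through.
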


\begin{proof}
    The proof is by induction. Initially, each $A_i = (0, 0, ..., 0)$ and the statement clearly holds. We consider the steps in Phase 1 that change the allocation and show that the statement holds. Note also that the number of items of type $j$ in a bundle never exceeds $\theta_j$, since for an agent $i$, if $a_{ij} = \theta_j$, then agent $i \not \in N_j$, and $i$ does not get another item of type $j$. 
    
    Now suppose agent $r$ envies agent $s$ in $A$. The allocations change either when an envy cycle is resolved, or when an item is assigned. If an envy cycle $C$ is resolved, one can verify that the property in the lemma still holds, since if $r$ is not in $C$, from her perspective the allocation does not change; while if $r \in C$, her value goes up, while the value of other bundles remains unchanged. 
    
Suppose an item of type $j$ is assigned to agent $p$. Let $A'$ be the allocation just prior to the allocation, and let $\hat{A}$ be the allocation after the assignment of type $j-1$ in Phase 1. If $r = p$, her valuation simply increased in $A$ wrt $A'$, and hence EF{1} follows by induction. If $s = p$, then $p$ is a source in $G_{A'}^j$. Let us now consider two subcases for the allocation $A'$: (i) $r \notin N_j$, and (ii) $r \in N_j$. In Subcase (i), recalling that the number of items of type $j$ in a bundle never exceeds $\theta_j$ in Phase 1, we have $a_{rj}$ $=a'_{rj}$ $= \theta_j$. Then $v_r(A_r)$ $= v_r(\hat{A}_r) + v_{rj}(\theta_j)$ $< v_r(A_p)$ $= v_r(\hat{A}_p) + v_{rj}(A_{pj})$ $< v_r(\hat{A}_p) + v_{rj}(\theta_j)$. Hence, $r$ envies $s$ even in $\hat{A}$, i.e., prior to type $j$ assignments. Since the lemma holds for $\hat{A}$, $\exists x \in \hat{A}_p$ such that $v_r(\hat{A}_p \setminus \{x\})$ $\leq v_r(\hat{A}_r)$. Then, $v_r(A_p \setminus \{x\})$ $\le v_r(\hat{A}_p \setminus \{x\}) + v_{rj}(\theta_j)$ $\le v_r(\hat{A}_r) + v_{rj}(\theta_j)$ $= v_r(A_r)$, and hence removing item $x$ eliminates the envy in allocation $A$ as well. 

In Subcase (ii), the envy from agent $r$ to $s$ can be eliminated by removing the newly assigned item. Hence the lemma holds inductively.
\end{proof}

\begin{lemma}
    Let $A$ be the (partial) allocation at any point in Phase 2 (i.e., in the second for all loop). Then if agent $r$ envies agent $s$, then the envy can be resolved by removal of a single item from either of the bundles. 
\label{lem:identThresh2}
\end{lemma}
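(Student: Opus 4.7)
The plan is to argue by induction on the operations of Phase 2, using the Phase 1 allocation (which is EF1 by Lemma~\ref{lem:identThresh1}) as the base case. Two operations change the allocation in Phase 2: the resolution of a top-trading envy cycle, and the assignment of a new item $x$ of type $j$ to a sink $k$ of $G_A$. The cycle-resolution step is handled by the standard argument: agents in the cycle receive their most preferred bundle and hence envy no one, while for agents outside the cycle neither their bundle nor the multiset of all bundles changes, so any witness item from the previous step continues to work.

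For the assignment step, the key invariant I would maintain is that at every point during Phase 2 for type $j$, every bundle holds at least $\theta_j$ items of type $j$. This is preserved because Phase 1 terminates for type $j$ only when every agent has reached $\theta_j$, cycle swaps do not alter bundle contents, and Phase 2 assignments only raise type-$j$ counts. Since all agents share the same threshold, this invariant forces the marginal value of the new item $x$ to be nonpositive for every observer $r$: writing $A'$ for the allocation immediately before the assignment, we have $v_r(A_k) \le v_r(A'_k)$. I would then case-split on the envy pair $(r,s)$ in $A$. If $r = k$, then since $k$ was a sink in $G_{A'}$ we have $v_k(A'_k) \ge v_k(A_s)$, and the just-added $x$ itself witnesses EF1 via $v_k(A_k \setminus \{x\}) = v_k(A'_k) \ge v_k(A_s)$. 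If $r \ne k$ and $s \ne k$, both bundles are unchanged and the inductive witness transfers directly. The remaining case is $r \ne k$, $s = k$: by the monotonicity above, $r$ already envied $k$ in $A'$, so induction yields some $y \in A'_r \cup A'_k$ with $v_r(A'_r \setminus \{y\}) \ge v_r(A'_k \setminus \{y\})$; a short marginal computation shows that $y$ continues to serve as a witness in $A$ whenever $y \in A'_r$, or $y \in A'_k$ is not of type $j$, or $y \in A'_k$ and $a'_{kj} > \theta_j$.

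The hard part will be the sub-subcase in which $y \in A'_k$ is of type $j$ and $a'_{kj} = \theta_j$ exactly; this is precisely the obstruction illustrated by the paper's motivating example. Here, swapping $y$ out for $x$ leaves the type-$j$ count in $A_k \setminus \{y\}$ equal to $\theta_j$, so $v_r(A_k \setminus \{y\}) = v_r(A'_k)$ and the inductive inequality is no longer strong enough to certify EF1 with the old witness $y$. My plan is to combine the inductive bound $v_r(A'_r) \ge v_r(A'_k) - [v_{rj}(\theta_j) - v_{rj}(\theta_j-1)]$ with the envy assumption $v_r(A_r) < v_r(A'_k) + [v_{rj}(\theta_j+1) - v_{rj}(\theta_j)]$; together these force $v_{rj}(\theta_j+1) > v_{rj}(\theta_j-1)$, a strict ``right-skew'' of the peak of $v_{rj}$. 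Using this structural constraint, the plan is to exhibit an alternative witness from $A_r \cup A_k$: either a non-type-$j$ item of $A'_k$ whose removal, combined with the effect of the newly added $x$, closes the residual envy gap (as happens with the type-$1$ item in the paper's example), or an item of $A'_r$ whose removal strictly increases $v_r(A'_r)$ and thereby re-establishes the required inequality. Performing this case analysis carefully, while accounting for past-threshold items accumulated in earlier Phase 2 iterations, will be the technical heart of the proof and is the principal reason the argument is substantially harder than for doubly-monotone valuations.
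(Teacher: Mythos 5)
Your setup matches the paper's: induction over Phase~2 operations, the observation that every bundle already holds $\theta_j$ items of type~$j$ when Phase~2 touches type~$j$ (so the new item has nonpositive marginal value for every observer), the case $r=k$ witnessed by the just-added item, and the reduction of the case $s=k$ to envy already present in $A'$. You also correctly isolate the genuinely hard subcase: the inductive witness $y$ is a type-$j$ item of $A'_k$ with $a'_{kj}=\theta_j$, where $v_r(A_k\setminus\{y\})=v_r(A'_k)>v_r(A'_r)=v_r(A_r)$ and $y$ fails. (A minor slip: in your third easy subcase, $y\in A'_k$ of type $j$ with $a'_{kj}>\theta_j$, the item $y$ does \emph{not} continue to work --- the same computation gives $v_r(A_k\setminus\{y\})=v_r(A'_k)$ --- but that subcase is vacuous, since removing such a $y$ from $A'_k$ weakly increases its value and so $y$ could not have witnessed EF1 for $A'$ in the first place.)

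The genuine gap is that you do not resolve the hard subcase; you only derive the necessary condition $v_{rj}(\theta_j+1)>v_{rj}(\theta_j-1)$ and announce a plan to ``exhibit an alternative witness.'' No argument is given that such a witness exists, and I do not believe a purely local argument from the EF1 property of $A'$ can produce one: the induction hypothesis only guarantees \emph{some} witness for $A'$, and it gives you no control over which items of $A'_r\cup A'_k$ can play that role once the type-$j$ count of $A_k$ crosses the peak. The paper closes this case with a \emph{history} argument instead: let $\hat{A}$ be the allocation at the last step where $r$ did not envy $s$, and identify the single item $x$ whose assignment created the envy (an item given to $r$ in Phase~2, or an item of type $j'\neq j$ given to $s$ in Phase~1, or --- if the envy arose during the Phase~1 assignment of type $j$ --- an item of type $j'\neq j$, since both agents end Phase~1 with exactly $\theta_j$ items of type $j$ and so the envy predates type $j$). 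It then shows $x$ remains a valid witness in $A$ because after that step $r$ is never a sink (so $v_r(A_r)$ never decreases) and $s$'s bundle only accumulates past-threshold items (so $v_r(A_s)$ never increases). To complete your proof you would need to import this kind of global information about how the envy arose; as written, the ``technical heart'' you defer is exactly the content of the lemma.
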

\begin{proof}
Fix a type $j$. Note that at the end of Phase 1, either $\hat{m}_j = 0$, or for each agent $i$, $a_{ij} = \theta_j$. 

    The proof is now by induction. At the beginning of Phase 2, we have an EF{1} allocation that is obtained post Phase 1. Hence, the statement holds true. It can be easily verified that the elimination of top-trading envy cycles preserves EF1 and also guarantees a sink in the envy graph. Suppose $A$ is the allocation obtained after allotment of an item of type $j$ and suppose $A'$ is the allocation just before the allotment of the item. By induction hypothesis, the statement holds for $A'$. Suppose agent $r$ envies agent $s$ in $A$. If $r$, $s \neq k$, then the statement trivially holds as in $A'$. We have to examine the rest of the cases.
    \begin{enumerate}
        \item $r = k$. Since $k$ was a sink in $G_{A'}$, she did not envy $s$ in $A'$. Hence this envy can be resolved by removing the most recently allocated item.
        \item $s = k$. Note that both $a'_{rj}$ and $a'_{sj}$ are at least $\theta_j$. Hence if $r$ envies $s$ in allocation $A$, this must be true of allocation $A'$ as well, since the value of $s$'s bundle has decreased for $r$. We now consider two subcases here, depending on whether $a'_{sj} > \theta_j$ or $a'_{sj} = \theta_j$. In Subcase (i), if $a'_{sj} > \theta_j$, then removal of an item of type $j$ from $A'_s$ would only increase the bundle's value for any agent. Hence, by the induction hypothesis, since $v_r(A_s') > v_r(A_r')$, the envy is resolved by either removing an item of type $j' \neq j$ from $s$, or by removing an item from agent $r$. In either case, this continues to hold true for allocation $A$ as well.
        
        In Subcase (ii), $a'_{sj} = \theta_j$ and $a'_{rj} \ge \theta_j$. 
        Consider the last step where $r$ did not envy $s$, and let $\hat{A}$ be this allocation. Thus, $v_r(\hat{A}_r) \ge v_r(\hat{A}_s)$. If this last step was in Phase 2, then immediately following this step, agent $r$ receives an item $x$ and envies $s$. Removing this item $x$ from agent $r$'s bundle clearly removes the envy. Further, following this step, the value of agent $s$'s bundle can only decrease for $r$ (since all items are added past the threshold). Since $r$ continues to envy $s$, agent $r$ receives no more items. Hence removing item $x$ from $r$'s bundle removes any envy.
        
        Now assume the last step where $r$ did not envy $s$ was in Phase 1, either before or after type $j$ items were assigned. Then immediately following this last step, agent $s$ was assigned an item $x$ of type $j' \neq j$, and the removal of this item $x$ removes the newly created envy. Further, since $r$ envies $s$ following the allocation of item $x$, agent $r$'s value for her own bundle has not decreased (since $r$ has never been a sink), and agent $r$'s value for $s$'s bundle has not increased (since $s$ has never been a source). Hence, removing item $x$ from the current bundle $A$ also removes envy, i.e., $v_r(A_r) \ge v_r(A_s \setminus \{x\})$.
        
        Lastly, suppose the last step where $r$ did not envy $s$ was in Phase 1, during the assignment of items of type $j$. Then $r$ envies agent $s$ after type $j$ items are assigned in Phase 1 as well. Since both $r$ and $s$ get exactly $\theta_j$ items of type $j$ in Phase 1, this envy exists before any items of type $j$ are assigned in Phase 1. Hence, as previously, there exists $x \in A_s$ of type $j' \neq j$, the removal of which removes envy.
    \end{enumerate}
\end{proof}

\end{document}